\pgfplotsset{width=7cm,compat=1.9}
\let\epsilon=\varepsilon %
\newcommand{\lat}{\ell}
\newcommand{\maxlat}{\ell_{\max}}
\newcommand{\maxlatrounded}{\bar{\ell}_{\max}}
\newcommand{\ub}{ub}
\newcommand{\atsp}{\alpha_{\textnormal{ATSP}}}
\newcommand{\atspp}{\alpha_{\textnormal{ATSPP}}}
\newcommand{\rootset}{V_{\textnormal{copy}}}
\newcommand{\roots}{R}
\newcommand{\eps}{\epsilon}
\newcommand{\target}{s'}
\newcommand{\deliveryperson}{\textsc{Deliveryperson}\xspace}
\newcommand{\repairperson}{\textsc{Traveling Repairperson}\xspace}
\newcommand{\tsp}{traveling salesperson problem\xspace}
\newcommand{\tspp}{\textsc{Asymmetric Path TSP}\xspace}
\newcommand{\probatsp}{\textsc{Asymmetric TSP}\xspace}
\def\opt{\mathop{\rm opt}\nolimits}
\def\OPT{\mathop{\rm OPT}\nolimits}
\def\group{\mathcal{G}}
\def\np{\mathop{\rm NP}\nolimits}
\newtheorem{theorem}{Theorem} 
\newtheorem{lemma}[theorem]{Lemma} %
\newtheorem{corollary}[theorem]{Corollary} %
\newtheorem{definition}[theorem]{Definition} %
\newtheorem{claim}[theorem]{Claim} %
\setlist[enumerate,1]{label=(\roman*)}
\setlist[enumerate,2]{label=(\alph*)}
\newlist{stepsroman}{enumerate}{1}
\setlist[stepsroman]{label=(\roman*)}
\newlist{stepsarabic}{enumerate}{1}
\setlist[stepsarabic]{label=\arabic*., ref=\arabic*}
\newlist{stepsalph}{enumerate}{1}
\setlist[stepsalph]{label=(\alph*), ref=(\alph*)}
\crefname{theorem}{Theorem}{Theorems}
\Crefname{lemma}{Lemma}{Lemmas}
\Crefname{claim}{Claim}{Claims}
\Crefname{fact}{Fact}{Facts}
\Crefname{remark}{Remark}{Remarks}
\Crefname{observation}{Observation}{Observations}
\Crefname{figure}{Figure}{Figures}
\Crefname{line}{Line}{Lines}
\Crefname{algocf}{Algorithm}{Algorithms}
\Crefname{stepsromani}{Step}{Steps}
\Crefname{stepsarabici}{Step}{Steps}
\Crefname{stepsalph}{Step}{Steps}
\def\DEBUG{true} 
\newcommand{\jbr}[1]{\todo[color=blue!100!black!50]{J: #1}}
\newcommand{\ramr}[1]{\todo[color=red!100!white!50]{R: #1}}
\newcommand{\jbr}[1]{}
\newcommand{\ramr}[1]{}
\begin{document}
	
	\title{A Constant-Factor Approximation for Directed Latency}
	
	\author{%
		Jannis Blauth%
		\footnote{%
			Institute for Operations Research, ETH Zurich, Zurich, Switzerland.
			Email:
			\href{mailto:jblauth@ethz.ch}{jblauth@ethz.ch}.
		}%
		\and%
		Ramin Mousavi%
		\footnote{%
			IDSIA, USI-SUPSI, Lugano, Switzerland.
			Email:
			\href{mailto:ramin.mousavi@supsi.ch}{ramin.mousavi@supsi.ch}. Partially supported by the SNF Grant 200021{\_}200731.
		}%
	}%
	
	\date{}
	
	\maketitle
	
	\begin{abstract}
		
		In the \textsc{Directed Latency} problem, we are given an asymmetric metric space $(V \cup \{s\},c)$ on a set $V$ of clients and a depot $s$.
		We are looking for a path $P$ starting in $s$ that visits all clients and minimizes the sum of the clients' waiting times (also known as latency) before being visited on the path. This models problems in logistics where client satisfaction is essential, as opposed to objectives like in \textsc{TSP}, where the goal is to make the salesperson as happy as possible. 
		
		In contrast to the symmetric version of this problem (also known as the \deliveryperson problem and the \repairperson problem in the literature), there are significant gaps in our understanding of \textsc{Directed Latency}. The best approximation factor has remained at $O(\log |V|)$, as shown by [Friggstad, Salavatipour, and Svitkina, '13], for more than a decade. Only recently, [Friggstad and Swamy, '22] presented a constant-factor approximation, but in quasi-polynomial time. Both results follow similar ideas: they consider buckets with geometrically increasing distances, build a path on each bucket, and then stitch together all these paths to get a feasible solution. Building a path on each bucket can be done cheaply thanks to developments in \textsc{Asymmetric Path TSP}. However, stitching these paths together incurs a logarithmic factor increase in the latency. [Friggstad and Swamy, '22] showed that by guessing a vertex from each bucket and augmenting a standard LP relaxation with these guesses, one can reduce the stitching cost. Unfortunately, the number of buckets is logarithmic in the number of vertices, so the running time of their algorithm is quasi-polynomial.
		
		In this paper, we present the first constant-factor approximation for \textsc{Directed Latency} in polynomial time by introducing a completely new way of bucketing, which helps us strengthen a standard LP relaxation with less aggressive guessing. 
		Although the resulting LP is no longer a relaxation of \textsc{Directed Latency}, it still admits a good solution.
		Then, we present a rounding algorithm for fractional solutions of our LP, which at a high level follows the rounding algorithm by [Friggstad and Swamy, '22] but with many new ingredients, crucially exploiting the way we restricted the feasibility region of the LP formulation.
	\end{abstract}
	\thispagestyle{empty}
	
	\newpage
	\setcounter{page}{1}
	
	\section{Introduction}\label{sec: intro} 
	
	The \textsc{Latency} problem is a variant of the \textsc{Path TSP} problem, where we are given a metric space on a vertex set $V \cup \{s\}$ with a specified starting point $s$, and the goal is to find a spanning path starting in $s$ that minimizes the sum of the distances of each vertex to $s$ on the path. The distance of vertex $v$ to $s$ on the path is called {\em latency} or {\em waiting time} of $v$. In contrast, \textsc{Path TSP} can be seen as minimizing the maximum vertex waiting time.
	It is known that the \textsc{Latency} problem is NP-hard.
	\footnote{It is possible to reduce the \textsc{TSP} where all distances are either $1$ or $2$ to the \textsc{Latency} problem.
		Indeed, given such a \textsc{TSP} instance $(V,c)$, one can declare an arbitrary vertex as starting point $s$, introduce $N$ colocated auxiliary vertices $v^*_1,\dots,v^*_N$, and set $c(v,v^*_i) = c(v,s)+M$ for each $v \in V$ and each $i \in \{1,\dots,N\}$.
		By choosing $M$ large enough (depending on $n$ and $N$), any optimum latency tour visits $v^*_1,\dots,v^*_N$ last.
		By choosing $N$ large enough (depending on $n$), any optimum latency tour yields an optimum tour on $V$.}
	\cite{blum1994minimum} gave the first constant-factor approximation for the \textsc{Latency} problem. 
	
	In literature, this problem is also referred to as the \deliveryperson problem \cite{del_man} and the \repairperson problem \cite{repairman}.
	These names are inspired by applied settings in which this problem naturally arises. In these settings, the actual route length plays a minor role, and the main goal is to minimize the average  time a customer needs to wait before being visited. 
	In another applied setting, we consider a vehicle whose travel cost scales with the load it carries. This problem can be viewed as an instance of the \textsc{Latency} problem. Indeed, in any solution to the \textsc{Latency} problem, the cost of the $i$'th edge of the tour is incurred with a factor of $|V|+1-i$.
	In addition to applications in logistics, the objective in the \textsc{Latency} problem is closely related to {\em searching a graph} (e.g., the web graph) for a hidden treasure \cite{koutsoupias1996searching,ausiello2000salesmen}. 
	We remark that sometimes, the \textsc{Latency} problem is stated to search for a tour instead of a path. Note that this results in an equivalent problem since we do not care about the time we need to return to the depot $s$.
	
	In this paper, we study a generalization of the \textsc{Latency} problem to asymmetric metrics. Formally, in the \textsc{Directed Latency} problem, we are given an asymmetric metric space $(V \cup \{s\}, c)$.
	The task is to compute a TSP-path $P$ starting in $s$ and visiting all vertices in $V$, so that $\lat(P) := \sum_{v \in V} c_P(v)$ is minimized, where for $v \in V$, $c_P(v)$ denotes the length of the unique $s$-$v$-subpath in $P$.
	We also call $c_P(v)$ the \emph{latency} of $v$ in $P$, and $\ell(P)$ the \emph{total latency} of $P$. 
	
	Network design and vehicle routing problems in directed graphs are often much harder to approximate than their undirected versions. The \textsc{TSP} and the \textsc{Steiner Tree} problems are well-known such examples. \textsc{Directed Latency} is no exception: there has been much progress in the \textsc{Latency} problem including a constant factor approximation (see \cref{subsec: related work}), but only a logarithmic approximation for \textsc{Directed Latency} is known \cite{logn_dirlat}. The question whether this problem admits a constant-factor approximation algorithm remained open since it was explicitly studied for the first time more than a decade ago \cite{nagarajan2008directed}. Only recently, \textcites{quasi-poly} made partial progress towards this question by presenting a constant-factor approximation in quasi-polynomial time.
	
	In this paper, we fully settle this question by providing the first constant-factor approximation for \textsc{Directed Latency}.
	
	\begin{theorem}\label{thm:main}
		There is a constant-factor approximation algorithm for \textsc{Directed Latency}.
	\end{theorem}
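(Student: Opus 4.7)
The plan is to follow the bucketing framework of~\cite{quasi-poly}: partition the vertices into buckets by their latency in the optimal path, obtain a short spanning subpath on each bucket via an approximation for \tspp, and stitch the bucket subpaths together in order. The obstacle that forces~\cite{quasi-poly} into quasi-polynomial time is that geometric bucketing yields $\Theta(\log n)$ buckets, each of which requires guessing an anchor vertex to bound the stitching cost. My goal is to redesign the bucketing so that only $O(1)$ anchors must be guessed, keeping the total number of guesses polynomial.

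First I would guess, by enumeration, a constant-factor estimate $L$ of $\OPT$ together with $k = O(1)$ ``coarse anchors'' $s = v_0, v_1, \ldots, v_k$, meant to be the vertices visited by the optimal path at latency thresholds spaced by constant multiplicative factors. Since there are only $O(1)$ anchors and $O(n)$ candidates for each, this costs only $\poly(n)$ guesses in total. Given the guesses, I would write an LP on arc variables and vertex-to-segment assignment variables, augmented with flow constraints forcing a $v_i$--$v_{i+1}$ subpath in each segment, assignment constraints placing each vertex in exactly one segment, and latency upper bounds forcing every vertex into a segment consistent with its distance from $s$. The last class of constraints restricts feasibility below what a true relaxation would, which is why the LP is not a relaxation; however, the optimal path itself yields a fractional feasible solution of objective value $O(\OPT)$. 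The rounding step would apply a constant-factor \tspp algorithm inside each of the $O(1)$ segments on the LP support, then concatenate the resulting subpaths; because there are only $O(1)$ splices and each splice joins anchors at latency $O(\OPT)$, the stitching cost contributes only $O(\OPT)$ to the total latency.

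The main obstacle I anticipate is establishing, simultaneously, that (i) the restricted LP admits a fractional solution of cost $O(\OPT)$ and (ii) every LP-feasible fractional solution rounds to a path whose per-segment latency contribution is $O(1)$ times the segment's latency budget. A naive charging loses a factor proportional to the number of vertices in later segments, which is precisely what forces prior work to use logarithmic bucketing. The crux of the argument is therefore to tailor the LP so that the latency of a rounded segment is charged against the latency contribution of its own vertices (rather than against the length of the path prefix leading up to it), and to verify that the new bucketing supports this charging with only a constant number of segments. If the LP constraints can be designed to enforce this kind of ``local charging'' property, the global constant-factor approximation should then follow by summing the per-segment guarantees.
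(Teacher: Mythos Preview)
Your proposal has a genuine structural gap: you cannot get away with $O(1)$ segments. With $k=O(1)$ thresholds spaced by constant multiplicative factors, the ratio between your largest and smallest threshold is $O(1)$, but in a general instance the ratio between the largest latency $c_{\OPT}(v_n^*)$ and the smallest latency $c(s,v_1^*)$ can be $n^{\Theta(1)}$. Whatever segment absorbs the low-latency vertices will have a latency budget that is a polynomial factor larger than the true latency of those vertices, and your charging ``latency of a rounded segment against the latency contribution of its own vertices'' then loses exactly that polynomial factor. You correctly identify this as the crux, but the proposal offers no mechanism to avoid it; the hoped-for LP constraints cannot rescue a bucketing that is simply too coarse.

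The paper's route is quite different from what you sketch. It keeps $\Theta(\log n)$ buckets (this is unavoidable for the reason above) but changes \emph{what} is guessed so that the total number of guesses is polynomial. First, buckets are defined by vertex count rather than latency: bucket $B^*_i$ is the $i$-th block of $\frac{n+1}{2^i}$ vertices along $\OPT$. One guesses only the rounded maximum latency of each bucket (powers of $2$, so polynomially many monotone sequences) and, after grouping, a subset $A_{\textnormal{tour}}\subseteq[q]$ with $q\le\log(n+1)$ indicating which intervals admit a short spanning tour; this is $2^q\le n+1$ options. Crucially, no anchor vertices are guessed: for each tour-interval the paper \emph{computes} a root $r_i$ that is provably close to every vertex in that interval, using the vertex-count structure of the buckets (\cref{lemma:roots}). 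The remaining stitching problem for non-tour intervals is handled not by anchors at all, but by a new class of LP constraints based on ``$t$-short'' edges (\eqref{eq:ordered-lp-constraint2}), which force the fractional solution to order vertices in a way that makes adjacent non-tour buckets cheaply joinable. So the two ingredients you are missing are (i) a bucketing whose combinatorial data can be enumerated in polynomial time without per-bucket vertex guesses, and (ii) a replacement for anchor-guessing---here, computed roots plus ordering constraints---that still controls the stitching cost across $\Theta(\log n)$ buckets.
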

	
	Imagine we want to measure the time a client $v$ is waiting on path $P$ in excess of its least possible waiting time, i.e, $c_P(v)-c(s,v)$. This is the notion of the so-called regret metric, and  measures the dissatisfaction of clients. In the regret metric we want to visit the clients closer to $s$ earlier on our path which is neglected in the normal latency objective. This can be modeled as finding a spanning path $P$ that minimizes the total regret, i.e., $\sum_{v\in V}(c_P(v)-c(s,v))$. \textcites{quasi-poly} formulated this problem as \textsc{Minimum Total Regret}, and obtained a constant factor approximation in quasi-polynomial time 
	(with a better constant than for \textsc{Directed Latency} if the regret metric is obtained from a symmetric metric). Note that the regret metric, i.e., $c_{\textnormal{reg}}(u,v):=c'(s,u)+c'(u,v)-c'(s,v)$ for $u,v \in V \cup \{s\}$, forms an asymmetric metric where $c'$ is a symmetric or asymmetric metric on $V\cup\{s\}$. Therefore, this problem can be cast as a special case of \textsc{Directed Latency}. As a consequence of Theorem \ref{thm:main}, we also get a constant factor approximation for the \textsc{Minimum Total Regret} problem.
	
	\begin{corollary}\label{cor:by product}
		There is a constant-factor approximation algorithm for \textsc{Minimum Total Regret}.
	\end{corollary}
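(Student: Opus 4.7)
The plan is to reduce \textsc{Minimum Total Regret} on a metric $c'$ to a single call of the constant-factor algorithm from \cref{thm:main} on the derived regret metric. Concretely, given an instance $(V \cup \{s\}, c')$, I would form the asymmetric cost function $c_{\textnormal{reg}}(u,v) := c'(s,u) + c'(u,v) - c'(s,v)$ on the same vertex set, invoke \cref{thm:main} on the resulting instance, and output the returned path as a solution to the regret instance.

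Two verifications are needed. The first is that $c_{\textnormal{reg}}$ is a valid asymmetric metric: non-negativity is immediate from the triangle inequality of $c'$ applied to $s,u,v$, and for the triangle inequality $c_{\textnormal{reg}}(u,w) \le c_{\textnormal{reg}}(u,v) + c_{\textnormal{reg}}(v,w)$ one expands both sides and observes that the $c'(s,\cdot)$ contributions cancel, leaving exactly the triangle inequality for $c'$ on $u,v,w$. The only slightly delicate point is the degenerate identity $c_{\textnormal{reg}}(s,\cdot) = 0$, which is permissible because \cref{thm:main} is stated for arbitrary asymmetric metrics with no positivity assumption on distances from the depot.

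The second and more substantial check is the equality of objectives: for any $s$-rooted path $P = s, v_1, \ldots, v_n$ and any index $i$, I would show that the latency of $v_i$ under $c_{\textnormal{reg}}$ equals the regret of $v_i$ under $c'$. Writing $v_0 := s$, the latency sum
\[
\sum_{j=1}^i c_{\textnormal{reg}}(v_{j-1}, v_j) \;=\; \sum_{j=1}^i \bigl(c'(s,v_{j-1}) + c'(v_{j-1}, v_j) - c'(s,v_j)\bigr)
\]
telescopes on its boundary terms, leaving $c_P(v_i) - c'(s, v_i)$, which is precisely the regret of $v_i$ in $P$. Summing over $i$ matches the two total objectives on every feasible path, so the optimal values agree and any $\alpha$-approximate directed-latency path is automatically an $\alpha$-approximate minimum-total-regret path. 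There is essentially no technical obstacle beyond this short algebraic identity, so the corollary follows directly from \cref{thm:main}.
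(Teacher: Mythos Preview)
Your proposal is correct and follows exactly the approach the paper outlines in the paragraph preceding the corollary: observe that the regret metric $c_{\textnormal{reg}}$ is an asymmetric metric, so \textsc{Minimum Total Regret} is a special case of \textsc{Directed Latency}, and invoke \cref{thm:main}. You have simply spelled out the metric verification and the telescoping identity that the paper leaves implicit.
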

	
	Although our result is not relative to the optimal value of some linear programming (LP) relaxation, our algorithm heavily relies on a natural LP relaxation for \textsc{Directed Latency}, which was introduced for the symmetric version of this problem by \textcite{lp_undirected} and first used for the directed setting in \cite{quasi-poly}.
	On a very high level, we follow the idea of \textcites{quasi-poly}, who provide a constant-factor approximation for \textsc{Directed Latency} in time $n^{O(\log n)}$. 
	Here and henceforth, $n := |V|$ denotes the number of clients in the given instance of \textsc{Directed Latency}.
	Before solving the LP relaxation, we strengthen it by imposing constraints based on guesses about the structure of a fixed optimal integral solution.
	Although, in contrast to \cite{quasi-poly}, the optimal solution is no longer feasible for the strengthened LP, we can still show that there is a feasible solution whose objective value is within a constant factor of the optimal total latency.
	Finally, we show how to round any fractional solution to the LP, leveraging the new class of constraints introduced in this paper.
	
	\subsection{Further related work}\label{subsec: related work}
	In the asymmetric setting, if we instead aim for minimizing the maximum latency among all customers, this results in the asymmetric path version of the famous \tsp (\textsc{Asymmetric Path TSP}).
	\textcite{tspp_gap_kohn} showed an LP-relative constant-factor approximation for \tspp. Specifically, they achieved an upper bound of $4\cdot\alpha_{\textnormal{ATSP}}-3$ on the standard LP relaxation for \tspp, where $\alpha_{\textnormal{ATSP}}$ is the integrality gap of the Held-Karp relaxation for \probatsp. \textcite{first_constant_ATSP} were the first to prove that $\alpha_{\textnormal{ATSP}}=O(1)$. 
	We remark that, although not being LP-relative, the first constant-factor approximation for \textsc{Asymmetric Path TSP} was a direct consequence of \cite{first_constant_ATSP}, using a blackbox-reduction provided in \cite{ATSPP}.
	Recently, \textcite{vygenATSP} showed $\atsp < 15$. Moreover, \textcite{quasi-poly} showed that the integrality gap of a weakening of the standard LP relaxation, where non-$(s,t)$-cuts only need to be covered to the extent of $2\rho$ with $\rho \in (\frac12,1)$ (instead of at least $2$), still remains a constant (depending on $\rho$). This will be used as a subroutine in our algorithm.
	
	As mentioned earlier, computing a minimum latency path in undirected graphs has seen much progress both in designing better approximation algorithms and considering different restricted metrics. The first constant-factor approximation algorithm for the \textsc{Latency} problem was given by \cite{blum1994minimum} who provide an $8\alpha$-approximation, where $\alpha$ is the best approximation factor for \textsc{$k$-MST}. Subsequently, \textcite{goemans1998improved} improved this to roughly $3.59\cdot\alpha$, and finally \textcite{chaudhuri2003paths} removed the dependence on $\alpha$ which is currently the best approximation factor for the \textsc{Latency} problem.  
	In restricted metrics such as trees, planar graphs, and Euclidean plane, \textcite{sitters2021polynomial} gave a PTAS and also showed that even on trees, the problem is $\np$-hard.
	
	A generalization of the \textsc{Latency} problem to a multi-depot setting also has been studied, and constant factor approximation algorithms in this setting are known \cite{fakcharoenphol2007k,chaudhuri2003paths,lp_undirected}.
	
	Another related problem is \textsc{Orienteering}. In this problem, we are given rewards associated with clients located in a symmetric metric space, a length bound $B$, a start (and
	possibly end) location for the vehicle, and we seek a route of length at most $B$ that gathers maximum reward. This problem and its related variants (e.g., the \textsc{$k$-Stroll} problem) often arise as a subroutine in approximation algorithms design for the \textsc{Latency} problem \cite{blum1994minimum,fakcharoenphol2007k,lp_undirected}. 
	Constant factor approximation algorithms for \textsc{Orienteering} are known. \cite{blum2007approximation} presented the first such algorithm with approximation factor $4$, culminating in a $(2+\eps)$-approximation algorithm by \textcite{chekuri2012improved}. All these works are based on combinatorial techniques. \textcite{friggstad2017compact} introduced a natural bidirected LP relaxation for this problem, and presented a rounding algorithm that bounded the integrality gap of this LP by $3$.
	
	As expected, the directed version of this problem called \textsc{Directed Orienteering} has been studied in the literature as well; however, with much less success. In particular, we do not know of any constant-factor approximation for this problem and only a logarithmic approximation is known \cite{nagarajan2011directed,chekuri2012improved}.
	
	\subsection{Organization of the Paper}
	
	We start by presenting the high-level idea of our algorithm in \cref{sec:outline}.
	Besides discussing our main techniques, this section contains important notation that we use later on.
	
	Then, in \cref{sec: lp}, we present the refined linear program we use, which is based on guesses about the structure of an optimal solution.
	The proof of \cref{thm:main} now splits into two parts. 
	
	In \cref{sec: main algorithm}, we describe how we set up this LP, and show that it admits a cheap feasible solution.
	
	Finally, we show how to round any feasible fractional solution.
	We start with discussing basic tools in \cref{sec: preliminary}.
	Then, we present our rounding algorithm in \cref{sec: rounding}.
	
	\section{Our approach}\label{sec:outline} 
	
	In this section, we discuss the main ideas and new techniques we use to prove \cref{thm:main}.
	We also introduce important notation we use later for its formal proof.
	
	We start with introducing a variant of \textsc{Directed Latency} that will simplify the presentation of our results.
	In \textsc{Directed Latency with Target}, we are given an asymmetric metric space $(V \cup \{s, \target\},c)$. The task is to compute an $s$-$\target$-path $P$ so that its total latency $\lat(P) = \sum_{v \in V} c_P(v)$ is minimized. 
	
	Let $\epsilon > 0$.
	Let $(V \cup \{s,\target\},c)$ be an instance of \textsc{Directed Latency with Target}. We can assume $n:=|V|=2^k-1$ for some $k\in\mathbb{Z}$ by colocating an appropriate number of vertices at $s$. Then, by applying scaling techniques as in Theorem 1 in \cite{quasi-poly} we can assume that all distances are positive integers, and that $T := n^2  \cdot \max_{u,v \in V \cup \{s,\target\}} c(u,v)$ is bounded by a polynomial in $n$ and $\frac{1}{\epsilon}$. 
	Note that $T$ constitutes an upper bound on the optimum solution value. We call such an instance of \textsc{Directed Latency with Target} {\em $\epsilon$-nice}.
	
	\begin{definition}\label{def:nice-instances}
		Let $\epsilon > 0$.
		We call an instance $(V \cup \{s,\target\},c)$ of \textsc{Directed Latency with Target} \emph{$\epsilon$-nice} if $n=2^k-1$ for some $k \in \mathbb{Z}$ and each $c(u,v)$ is a positive integer bounded by $\frac{2n^5}{\eps^2}$.
	\end{definition}
	
	\begin{theorem}\label{thm: reduction to bounded instance}
		For every $\epsilon \in (0,1]$ and every $\alpha \in \left[1,\frac{1}{\epsilon}\right]$, if there is an $\alpha$-approximation for $\epsilon$-nice instances of \textsc{Directed Latency with Target}, then there is an $\alpha(1+\eps)$-approximation for \textsc{Directed Latency}.
	\end{theorem}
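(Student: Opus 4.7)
The proof is a two-stage polynomial reduction: first enumerate a structural guess to reduce \textsc{Directed Latency} to \textsc{Directed Latency with Target}, then scale and round distances to reach an $\eps$-nice instance at only a $(1+\eps)$ multiplicative loss.

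For Stage~1, let $P^*$ be an optimum path of the given instance $(V\cup\{s\},c)$ and let $v^*\in V$ be its last vertex. I would enumerate all $n$ candidates for $v^*$. For each guess I build a \textsc{Directed Latency with Target} instance by adding a new vertex $\target$ colocated with the candidate (distance $0$ to and from $v^*$, otherwise inheriting $v^*$'s distances). For the correct guess, appending the zero-length edge $v^*\!\to\!\target$ to $P^*$ yields a feasible $s$-$\target$ path of the same latency (since $\target\notin V$ and the latency of $v^*$ is unchanged); for wrong guesses the target-version optimum is at least $\OPT$, so returning the best output over the $n$ runs matches $\OPT$.

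For Stage~2, fix a guess of $v^*$. First colocate up to $n$ artificial clients at $s$ so that $|V|=2^k-1$. Next guess $L$ satisfying $L\le\OPT\le 2L$ by trying $O(\log(n\,c_{\max}))$ dyadic values. No single edge of $P^*$ can have length exceeding $\OPT\le 2L$ (such an edge would alone drive the latency of its endpoint too high), so I may cap every distance at $2L$ without breaking the triangle inequality or changing $\OPT$. Finally, set $\sigma:=n^2/(\eps L)$ and replace each $c(u,v)$ by $c''(u,v):=\max\{1,\lceil\sigma\,c(u,v)\rceil\}$. All resulting distances are positive integers bounded by $\lceil 2L\sigma\rceil+1\le 2n^2/\eps+1\le 2n^5/\eps^2$, so the instance is $\eps$-nice.

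To lift back, run the hypothesized $\alpha$-approximation on the $\eps$-nice instance, obtaining $P$ with $\ell_{c''}(P)\le\alpha\,\OPT_{c''}$. The pointwise inequality $c''\ge\sigma c$ gives $\sigma\,\ell_c(P)\le\ell_{c''}(P)$; conversely, $c''(e)\le\sigma c(e)+1$ and grouping edge contributions by their position along $P^*$ gives $\OPT_{c''}\le\sigma\,\OPT+\sum_{i=1}^n(n-i+1)\le\sigma\,\OPT+n^2$. Combining the two and using $L\le\OPT$,
\[
  \ell_c(P)\;\le\;\alpha\,\OPT+\tfrac{\alpha n^2}{\sigma}\;=\;\alpha\,\OPT+\alpha\eps L\;\le\;\alpha(1+\eps)\,\OPT.
\]
The total running time is polynomial since the $n$ guesses of $v^*$ and the $O(\log(n\,c_{\max}))$ guesses of $L$ each trigger one call of the $\alpha$-approximation. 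The delicate step, and the main obstacle, is tuning the scaling: one has to simultaneously keep $c''$ bounded by $2n^5/\eps^2$ and translate the $\Theta(n^2)$ rounding error in the scaled latency into at most $\eps\,\OPT$ in the original metric. This is exactly why guessing $L$ is necessary, and why the bound $2n^5/\eps^2$ in the $\eps$-nice definition is the right order of magnitude; a secondary bookkeeping issue is the handful of zero-length distances introduced by colocation, which the $\max\{1,\cdot\}$ absorbs into the $(1+\eps)$ slack.
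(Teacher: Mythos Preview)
Your reduction is correct and has the same overall shape as the paper's (pad to $n=2^k-1$, estimate $\OPT$, scale and round distances, lose only a $(1+\eps)$-factor), but two implementation choices differ. For the target, you guess the last vertex of an optimal path ($n$ candidates) and colocate $\target$ there; the paper instead inserts $\target$ once with unit-length in-edges from every vertex and long out-edges, so no endpoint enumeration is needed. For estimating $\OPT$, you try $O(\log(n\,c_{\max}/c_{\min}))$ dyadic values of $L$; the paper computes a single $\gamma$ with $\gamma\le\OPT\le n^3\gamma$ as the least edge-length threshold at which a spanning $s$-walk exists, so it issues only one call to the black-box $\alpha$-approximation instead of your $\Theta(n\log)$ calls. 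Your route is arguably more elementary (no connectivity argument) at the price of more invocations; the paper's is slicker and makes a single call. Two small gaps to patch in your write-up: the case $\OPT=0$ must be detected and handled separately (no dyadic $L$ works then; the paper checks whether the zero-edge strongly connected components form a chain), and your dyadic enumeration must range down to the smallest nonzero distance rather than to $1$ for the number of guesses to be polynomial in the encoding length.
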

	
	The proof is essentially the same as the proof of Theorem 1 in \cite{quasi-poly} and is postponed to Appendix \ref{Appendxi A}.
	In this paper, we will show the following.
	
	\begin{theorem}\label{thm:main_nice}
		For every $\epsilon >0$, there is an $O(1)$-approximation for $\epsilon$-nice instances of \textsc{Directed Latency with Target} with running time polynomial in $n$ and $\frac{1}{\epsilon}$.\footnote{We note that only the running time but not the approximation factor depends on $\eps$.}
	\end{theorem}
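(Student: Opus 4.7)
The plan is to follow the three-phase structure announced in the outline: set up a time-indexed LP for \textsc{Directed Latency with Target} strengthened by a small collection of guesses about a fixed optimum $\OPT$; then show the strengthened LP admits a fractional solution of cost $O(1) \cdot \lat(\OPT)$; then round it to an integer $s$-$\target$-path of comparable total latency. The template of \textcite{quasi-poly} partitions the latency axis into $\Theta(\log T)$ geometrically-growing buckets, guesses one vertex of $\OPT$ from each bucket, and uses these guesses to suppress the stitching cost between bucket-paths. Because there are $\Theta(\log n)$ buckets, this yields $n^{\Theta(\log n)}$ guess configurations and hence quasi-polynomial time.

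To obtain polynomial running time, I would design a new bucketing that requires only $O(1)$ (or at most $\poly(1/\eps)$) non-trivial guesses, so that all configurations can be enumerated in $n^{O(1)}$ time. A natural attempt is to group the geometrically-spaced latency intervals into $O(1)$ super-buckets and, for each super-bucket, guess a single ``anchor'' vertex of $\OPT$ together with the approximate time at which $\OPT$ first reaches it. The LP is then augmented with linear constraints forcing the fractional flow to cross the corresponding $s$-cuts at the prescribed rate. Because this guessing is strictly coarser than in \cite{quasi-poly}, the integer tour $\OPT$ is no longer feasible for the strengthened LP --- the necessary price for dropping to polynomial time.

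The main content of \cref{sec: main algorithm} is to exhibit a feasible fractional solution of cost $O(1) \cdot \lat(\OPT)$ for the correct guess. I would construct it by repairing $\OPT$ inside each super-bucket: reroute the portion of $\OPT$ lying in a super-bucket through the corresponding anchor, doubling or tripling visits as needed. Since all latencies within a single super-bucket are within a constant factor of one another, such local surgery loses only a constant multiplicative factor, and the resulting walk can be lifted to a fractional LP solution. For the rounding phase (\cref{sec: rounding}), the plan is to solve a constant-factor \tspp inside each super-bucket using $\atsp = O(1)$ from \cite{Traub_Vygen_2024, first_constant_ATSP} together with the reduction of \cite{ATSPP}, and to splice the resulting bucket-paths at the anchor vertices. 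The strengthened LP is invoked precisely to bound the splicing cost --- this is the role analogous to that played by the weakened-LP subroutine of \cite{quasi-poly}.

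I expect the main obstacle to be a three-way balance in designing the bucketing and the guesses: they must be coarse enough to be polynomially enumerable, strong enough that the repaired $\OPT$ remains an LP solution of cost $O(1) \cdot \lat(\OPT)$, and at the same time structured enough that any feasible fractional solution can be rounded into an integer path of comparable latency. The $\eps$-independence of the approximation factor is expected to follow because the bucketing itself does not grow with $1/\eps$: that parameter enters only through the polynomial bound on $T$, and hence only through the running time, as required by \cref{thm:main_nice}.
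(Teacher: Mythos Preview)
Your plan has a structural gap. You propose to collapse the $\Theta(\log n)$ geometric latency intervals into $O(1)$ super-buckets, guess one anchor vertex per super-bucket, and then assert that ``all latencies within a single super-bucket are within a constant factor of one another.'' These two requirements are incompatible: with $\Theta(\log n)$ intervals each a factor of~$2$ apart and only $O(1)$ groups, some super-bucket must span $\Theta(\log n)$ consecutive intervals and hence a $\poly(n)$ latency ratio. Both your repair of $\OPT$ (to exhibit a cheap LP solution) and your rounding (a single \tspp call per super-bucket) rely on this constant-ratio claim, so neither step goes through. Conversely, if you keep the super-buckets narrow enough that latencies inside each differ by at most a constant, you are back to $\Theta(\log n)$ anchor guesses and $n^{\Theta(\log n)}$ configurations --- exactly the obstacle in \cite{quasi-poly}.

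The paper does \emph{not} reduce the number of buckets. It keeps $k=\log(n+1)$ buckets but redefines them by \emph{vertex count} rather than latency: bucket $B^*_i$ contains the next $\frac{n+1}{2^i}$ vertices visited by $\OPT$. What is guessed is the maximum latency of each bucket, rounded to a power of~$2$. Since these $k$ rounded values form a monotone sequence of powers of~$2$ bounded by $T$, there are only $\binom{k+\lceil\log T\rceil}{k}$ possible sequences, which is polynomial in $n$ and $1/\eps$. Buckets with equal rounded latency are grouped, and one further guesses a subset $A_{\textnormal{tour}}\subseteq[q]$ of the at most $\log(n+1)$ groups --- another polynomial factor. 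Crucially, no anchor vertex is ever guessed: for each tour-group a root $r_i$ is \emph{computed} in polynomial time via a pigeonhole argument that exploits the vertex-count bucketing (there are at least $s(\group_i)$ vertices close to the group and fewer than $s(\group_i)$ far from it, so any vertex with this profile must be close). The remaining non-tour groups are handled not by anchors at all but by a new family of LP constraints --- the ``$t_{i+1}$-short'' inequalities --- which force enough ordering structure on the fractional solution that adjacent bucket-paths can be stitched at cost $O(t_j)$. Your proposal misses both of these ideas, and they are precisely what resolves the three-way tension you correctly identify at the end.
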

	
	Note that our main result, \cref{thm:main}, is a direct consequence of \cref{thm: reduction to bounded instance} and \cref{thm:main_nice}.
	We remark that we opted for a cleaner presentation rather than fully optimizing the constant factor, which is in the order of $8 \cdot 10^7$. 
	From now on, we will fix an $\epsilon>0$ for the entire paper and refer to $\epsilon$-nice instances simply as nice instances for ease of notation. 
	
	Before discussing our techniques in more detail, we briefly recap the quasi-polynomial time constant-factor approximation by \textcite{quasi-poly}, both to draw parallels with prior work and as reference to highlight our new ingredients.
	
	\subsection{A constant-factor approximation in quasi-polynomial time}\label{sec:quasi-poly}
	
	We start with introducing the linear programming relaxation on which both, the algorithm in \cite{quasi-poly} as well as our approach is based on. 
	Let $(V \cup \{s,\target\},c)$ be a nice instance of \textsc{Directed Latency with Target}.
	In order to model the objective, we work on the following time-expanded graph.
	For $h \in \mathbb{Z}_{>0}$, we use $[h] := \{1,\dots,h\}$.
	
	\begin{definition}\label{def: time-index graph}
		Define the \emph{time-indexed graph} for $(V \cup \{s,\target\},c)$ to be the graph $G_T$ on vertex set $(V \times [T]) \cup \{(s,0),(\target,T+1)\}$ that contains an edge between $(u,t)$ and $(v,t')$ if and only if $c(u,v) \le t'-t$. 
		For $(v,t) \in V[G_T]$, we abbreviate  $\delta^+(v,t):= \delta^+_{G_T}(\{(v,t)\})$ and $\delta^-(v,t):= \delta^-_{G_T}(\{(v,t)\})$, i.e., the edges going out and going in $(v,t)$, respectively.\footnote{For a directed graph $G=(V,E)$ and $S \subseteq V$, we write $\delta_G^+(S) := \{ (v,w) \in E : v \in S \text{ and } w \notin S\}$. Analogously, we write $\delta_G^-(S) := \{(v,w) \in E :v \notin S \text{ and } w \in S \}$. Moreover, we use $V[G]:= V$ and $E[G]:= E$.}
		Moreover, we write $\delta(v,t) := \delta^+(v,t) \cup \delta^-(v,t)$. 
		For $S \subseteq V[G_T]$, we abbreviate
		$\delta^-(S):= \delta_{G_T}^-(S)$ and $\delta^+(S):= \delta_{G_T}^+(S)$.
	\end{definition}
	
	The feasible region of the LP relaxation we consider is given by the following polytope.
	For $v \in V$ and $t \in [T]$, the variable $x_{v,t}$ indicates that $v$ is visited at time $t$. 
	For $e = ((u,t),(v,t')) \in E[G_T]$, $z_{e}$ indicates that we traverse the edge $(u,v)$ starting at time $t$.
	We remark that by the definition of nice instances, i.e., $c$ is strictly positive, no client can be visited at time $0$.
	
	\begin{equation*}\label{eq:time-indexed-polytope}
		\begin{array}{rrclll}
			Q(G_T) := \Big\{ &(x,z) \in [0,1]^{V[G_T]} \times [0,1]^{E[G_T]} &: \\
			&\displaystyle \sum_{t \in  [T]} x_{v,t} & = & 1 & \forall v\in V \\
			&  \phantom{\displaystyle\sum_{a}} z(\delta^+(v,t)) = z(\delta^-(v,t)) & = &  x_{v, t} & \forall v \in V, t \in [T] \\
			& \phantom{\displaystyle\sum_{a}}
			z(\delta^+(s,0)) = z(\delta^-(\target,T+1)) & = & 1 \\
			&\displaystyle z(\delta^-(S)) & \ge & \displaystyle \sum_{t' \in [t]} x_{v,t'} & \forall v \in V, t \in [T], S \subseteq V[G_T]: \\ &&&&
			(s,0) \notin S,
			\{v\} \times [t]  \subseteq S &\Big\}
		\end{array}
	\end{equation*}
	Note that integer points in $Q(G_T)$ correspond to $s$-$\target$-paths visiting all vertices in $V$.
	Minimizing the total latency of such a path can now be formulated using the $x$-variables.
	\begin{equation}\tag{time-indexed LP relaxation}\label{eq:time-indexed-LP}
		\min \displaystyle \sum_{v\in V, t \in [T]} t x_{v,t} \qquad \textnormal{ s.t. } \qquad (x,z) \in Q(G_T) \enspace.
	\end{equation}
	
	Note that this LP relaxation can be solved in polynomial time as the separation oracle boils down to a polynomial number of min-cut computations.
	For completeness, we provide a short proof  in \cref{appendixB}.
	
	To prove \cref{thm:main_nice}, we will not solve this LP relaxation directly, but first guess structural properties of a fixed optimum solution denoted by $\OPT$ to strengthen the LP. 
	This follows the idea of \textcites{quasi-poly}, who provide a constant-factor approximation for \textsc{Directed Latency} in time $n^{O(\log n)}$.
	In the following, we shortly recap the high-level idea of their algorithm.
	
	Let $h := \lceil \log T \rceil$.
	For $i \in [h] \cup \{0\}$, choose the \emph{bucket threshold} $t_i := 2^i$. 
	For each $i \in [h]$, they guess the last vertex $v^*_i \in V$ visited by $\OPT$ within the time interval $ I_i := [t_{i-1},t_{i})$ together with its exact latency $\ell_i := c_{\OPT}(v)$ in $\OPT$, or guess that no such vertex exists.
	For each such pair $(v^*_i,\ell_i)$, they add the constraints
	\begin{equation}\label{eq:guessed_lat}
		x_{v^*_i,\ell_i} = 1 \qquad \text{ and } \qquad x_{v,t} = 0 \quad \forall v \in V, t \in (\ell_i,t_i) \enspace.
	\end{equation} 
	Then, they solve the resulting LP and obtain a fractional solution $(x,z)$.
	For $v \in V$, define $t(v)$ to be the minimum time so that $\sum_{t \in [t(v)]} x_{v,t} \ge \rho$ for some parameter $\rho \in (\frac12, 1)$.
	They group the vertices into buckets $B_i := \{v \in V : t(v) \in I_i \}$ for $i \in [h]$.
	On each bucket, they compute a path $P_i$ on $B_i$ ending in $v^*_i$ of length at most $\gamma_\rho t_i$ for some constant $\gamma_\rho > 0$ depending on $\rho$, leveraging known LP-relative constant-factor approximation algorithms for \textsc{Asymmetric Path TSP}.  
	We discuss this in more detail in \cref{sec: preliminary}.
	
	Finally, they concatenate these paths to obtain their final solution $P$. 
	They exploit that for two consecutive indices $i$ and $j$ with non-empty buckets, the edge joining $v^*_i$ (the endpoint of $P_i$) and the starting point $u_{j}$ of $P_{j}$ has length at most $t_{j}$. Indeed, the fractional flow $z$ sends some flow from $(v^*_i,\ell_i)$ to $(u_{j},t(u_{j}))$ in $G_T$, using that $x_{v^*_i,\ell_i} = 1$ and $x_{u_{j}, t(u_j)} > 0$.
	Thus, they can bound the total latency of $P$ by 
	\begin{align*}
		\sum_{v \in V} c_P(v) &\le \sum_{i \in [h]} |B_i| \cdot \sum_{j \in [i]} \left(t_j + c(P_j) \right) \\
		&\le \sum_{i \in [h]} |B_i| \cdot \sum_{j \in [i]} \left(1+\gamma_\rho \right) 2^j \\
		&< 4\left(1+\gamma_\rho \right) \sum_{i \in [h]} |B_i| \cdot 2^{i-1} \\ 
		&\le \frac{4}{1-\rho}\left(1+\gamma_\rho \right) \sum_{v \in V, t \in [T]} t x_{v,t} \\
		&\le \frac{4}{1-\rho}\left(1+\gamma_\rho \right) \lat(\OPT) \enspace,
	\end{align*}
	where we used in the first inequality that we can bound the latency of a vertex in bucket $B_i$ by $\sum_{j \in [i]} \left(t_j + c(P_j) \right)$ by construction, and in the second last inequality that for each $v \in B_i$, we have $\sum_{t \in [T]} tx_{v,t} \ge (1-\rho) 2^{i-1}$ by definition of $B_i$.
	This yields the desired constant-factor approximation.
	
	To achieve polynomial running time, we cannot afford to guess the vertices $v^*_i$ and their exact latency $\ell_i$ for all these time intervals.
	Without knowing these vertices, we could still compute for each $i \in [h]$ a path $P_i$ of length at most $\gamma_\rho t_i$ visiting all vertices in $B_i$.
	However, concatenating these paths could be very expensive since the fractional flow $z$ might send no flow from the endpoint of $P_i$ to the starting point $u_j$ of $P_j$ up to time $t(u_j)$.
	
	In the remainder of this section we introduce all novel ingredients needed for our constant-factor approximation algorithm.
	We start with discussing a desirable property so that, if this property holds, we are still able to bound the cost resulting from concatenating the paths $P_1,\dots,P_h$.
	
	
	\subsection{A desirable structural property}\label{sec:desirable-property}
	
	Let $B_1,\dots,B_h$ be buckets defined as in the previous subsection,
	and let $P_i$ be a path visiting all vertices in $B_i$ for each $i \in [h]$. For the sake of keeping this exposition short, we assume that for each $i\in [h]$, we have $c_{\OPT}(v)\in I_i$ for some $v\in V$. In \cref{sec:high-level-roots}, we discuss that this assumption is guaranteed by our new way of defining the bucket thresholds. 
	
	For $i \in [h-1]$, we show how to bound the cost resulting from joining the endpoint $u$ of $P_i$ with the starting point $v$ of $P_{i+1}$.
	As already discussed, we can conclude that $c(u,v) < t_{i+1}$ if $z$ sends some flow from $u$ to $v$ up to time $t(v)$.
	Otherwise, using $\rho > \frac12$, $z$ sends some flow from $v$ to $u$ up to time $t(u)$, yielding  $c(v,u) < t_{i+1}$.
	Thus, if 
	\begin{equation}\label{eq:desirable_prop}
		c(u, v) \le \max\{c(v,u),t_{i+1}\} \enspace,
	\end{equation}
	we could still conclude that $c(u,v)$ is small, namely 
	\begin{equation*}
		c(u,v) \le \max\{\min\{c(u,v), c(v,u)\}, t_{i+1}\} \le t_{i+1} \enspace.
	\end{equation*}
	If \eqref{eq:desirable_prop} holds, we call the edge $(u,v)$ \emph{$t_{i+1}$-short}.
	
	We aim for adding constraints to the \ref{eq:time-indexed-LP} so that for buckets $B_1,\dots,B_h$ induced by an optimal fractional solution to this strengthened LP, we have that
	\begin{equation}\label{eq:desired}
		(u,v) \text{ is $t_{i+1}$-short  for each $u \in B_i$ and $v \in B_{i+1}$ with $i \in [h-1]$.}
	\end{equation}
	
	One way to achieve this is by adding the following constraints to the \ref{eq:time-indexed-LP}:
	\begin{equation}\label{eq: new const without tours}
		\sum_{t \in [t_{i}-1]} x_{u,t} \le \sum_{t \in [t_{i}-1]} x_{v,t} \quad \forall u,v \in V, i \in [h-1]: (u,v) \text{ is not $t_{i+1}$-short} \enspace.
	\end{equation}
	
	Why does \eqref{eq: new const without tours} imply \eqref{eq:desired}? Let $i \in [h-1]$, $u\in B_i$ and $v\in B_{i+1}$. If $(u,v)$ is $t_{i+1}$-short then there is nothing to prove. Otherwise, by \eqref{eq: new const without tours}, $v$ gets at least $\rho$ units of flow before time $t_{i}-1$, a contradiction with the fact that $v\in B_{i+1}$.
	
	Let $(x,z)$ be the solution to the \ref{eq:time-indexed-LP} induced by $\OPT$.
	It is easy to see that \eqref{eq: new const without tours} is satisfied for vertices $u$ and $v$ whenever $v$ is visited before $u$ in $\OPT$. So let $u$ be visited before $v$.
	It is easy to see that if $c_{\OPT}(u)\in I_j$ and $c_{\OPT}(v)\in I_j\cup I_{j+1}$ for some $j \in [h-1]$, then the pair $u,v$ indeed satisfies \eqref{eq: new const without tours}. More precisely, for $i=j$ we have that $(u,v)$ is $t_{i+1}$-short, and for $i<j$ or $i>j$ the inequality in \eqref{eq: new const without tours} trivially holds. 
	
	However, in general $\OPT$ does not need to satisfy \eqref{eq: new const without tours}.
	In particular, it does not need to hold if $c_{\OPT}(u)\in I_i$ and $c_{\OPT}(v)\in I_{j+1}$ with $j > i$, and $c(v,u)<c(u,v)$.
	However, we observe that in this case, the $u$-$v$-path in $\OPT$ together with the edge $(v,u)$ imply a short cycle (to which we refer as tours) of length at most $2 c_{\OPT}(v)\leq 2 t_{j+1}$ on all vertices in $\OPT$ visited within the time-interval  $I_j$, see \cref{fig: intro backward edge}.
	
	Thus, one can show that by slightly relaxing the bucket thresholds $t_1,\dots,t_h$, there is an index set $A_{\textnormal{tour}} \subseteq [h]$ and a path $P$ in $G_T$ (which results from traversing $\OPT$, but delaying it in certain situations, see \cref{fig: intro bucket expansion}) with the following properties. 
	(Refer to \cref{fig: intro everything} for intuition.)
	For $v \in V$, let $\lat_P(v)$\footnote{We use $\lat_P(v)$ to emphasis that $P$ is in $G_T$ and we use $c_{P}(v)$ when $P$ is a path in $G$.}denote the time we visit $v$ in $P$.
	\begin{enumerate}[label=(\roman*)]
		\item\label{item:short_tour} for each $i \in A_{\textnormal{tour}}$, there is a short cycle of length at most $2t_i$ on $\{v \in V: \lat_P(v) \in I_i\}$ (recall that we abbreviate $I_i = [t_{i-1},t_i)$), and \item\label{item:ordering} for each $i,j \in [h] \setminus A_{\textnormal{tour}}$ with $i < j$, and each $u,v \in V$ with $\lat_P(u) \in I_i$ and $\lat_P(v) \in I_j$, the edge $(u,v)$ is $t_{i+1}$-short.
	\end{enumerate}
	Note that there are at most $2^h$ many options for $A_\textnormal{tour} \subseteq [h]$ which is polynomial in $n$ and $\frac{1}{\eps}$. Thus, we can efficiently enumerate all possible choices.
	We call intervals $I_i$ with $i \in A_{\textnormal{tour}}$  \emph{tour-intervals}.
	In words, \ref{item:ordering} guarantees that for $u,v \in V$ with $u$ being visited by $\OPT$ in some earlier interval $I_i$ than $v$ either
	\begin{itemize}
		\item $(u,v)$ is $t_{i+1}$-short, or
		\item $u$ or $v$ is visited in some tour-interval.
	\end{itemize}
	
	If we would guess for each $i \in A_{\textnormal{tour}}$ one vertex $v^*_i \in \{v \in V: c_P(v) \in I_i\}$, we could strengthen the \ref{eq:time-indexed-LP} by enforcing entering and leaving the tour-interval $I_i$ via $v^*_i$. We remark that this requires to slightly relax the bucket thresholds to maintain feasibility, see \cref{fig: intro tour interval} for an illustration of this.
	Note that enforcing entering and leaving the tour-interval $I_i$ via $v^*_i$ can be achieved by forbidding all other edges crossing the tour-interval as elaborated in \cref{sec: lp}.  
	
	\begin{figure}
		\centering
		\begin{subfigure}{0.9\textwidth}
			\includegraphics[scale=0.9]{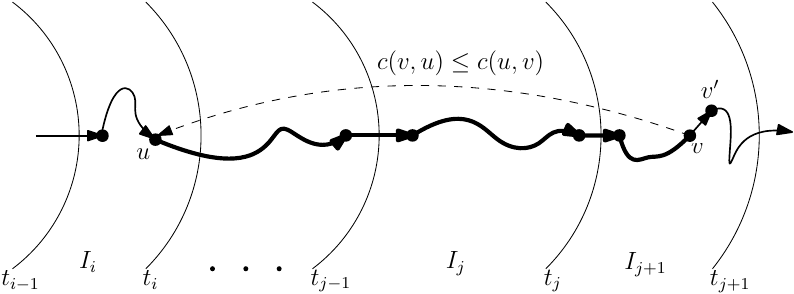}
			\caption{}
			\label{fig: intro backward edge}
		\end{subfigure}
		\begin{subfigure}{0.9\textwidth}
			\includegraphics[scale=0.9]{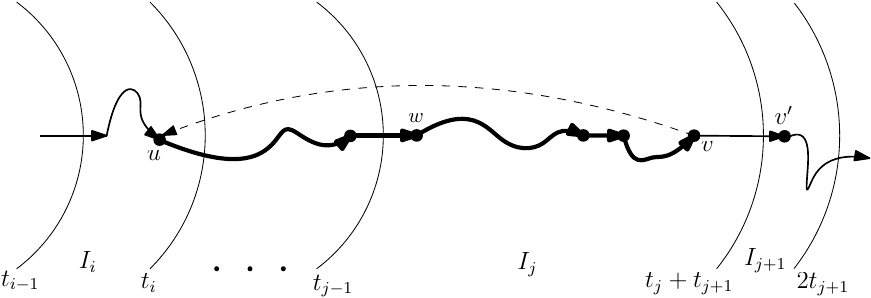}
			\caption{}
			\label{fig: intro bucket expansion}
		\end{subfigure}
		\begin{subfigure}{0.9\textwidth}
			\centering
			\includegraphics[scale=0.9]{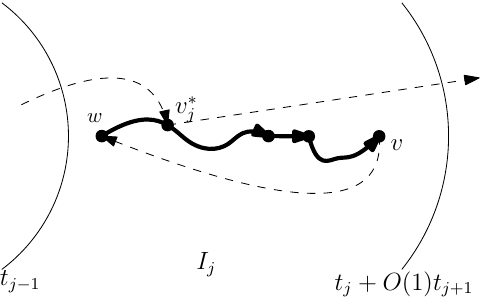}
			\caption{}
			\label{fig: intro tour interval}
		\end{subfigure}
		\caption{(a) We have a pair of vertices $u,v$ with $c_{\OPT}(u)\in I_i$,  $c_{\OPT}(v)\in I_{j+1}$, $i<j$, and $c(v,u) < c(u,v)$. 
			Assume that $(u,v)$ is not $t_{i+1}$-short, so we cannot impose the constraints in \eqref{eq: new const without tours}. 
			Note that $(v,u)$ together with the thick sub-path of $\OPT$ from $u$ to $v$ creates a cycle of length at most $2 c_{\OPT}(v)\leq 2 t_{j+1}$. (b) We relax the time interval $I_j$ so that $c_{\OPT}(v)\in I_{j}$ and we add $j$ to $A_{\textnormal{tour}}$. 
			Then, we can visit $v$ in the new tour-interval $I_j$, which is essential to guarantee \ref{item:ordering}.
			Let $v'$ be the vertex visited directly after $v$ in $\OPT$. We choose $((v,t),(v',t')) \in E[G_T]$ joining $v$ and $v'$ (perhaps with $t'-t > c(v,v')$) that delays the visit of vertices coming after $v$ in $\OPT$ so they are visited in the next time interval. (c) Let $v^*_j$ be any vertex visited by $\OPT$ within $I_j$. By adding the dashed edges, we enter and leave $I_j$ via $v^*_j$. It is easy to see that the cost of all the added edges can be bounded by $O(1) t_{j+1}$. Thus, by further relaxing the time interval $I_j$, we can guarantee to enter and leave $I_j$ via $v^*_j$ and visit all vertices on the $w$-$v$-path in $\OPT$ in between, where $w$ is the first vertex visited by $\OPT$ within $I_j$.}
		\label{fig: intro everything}
	\end{figure}
	
	Additionally, we add the following constraints capturing \ref{item:ordering}, which are similar to those in \eqref{eq: new const without tours}:
	\begin{equation}\label{eq:ordered-lp-constraint2} 
		\sum_{t \in [t_{i}-1]} (x_{u,t} - x_{v,t}) \le \sum_{t \in I_\textnormal{tour}} (x_{u,t} +x_{v,t} ) \quad \forall u,v \in V, i \in [h-1]: (u,v) \text{ is not $t_{i+1}$-short}\enspace,
	\end{equation}
	where $I_\textnormal{tour} := \bigcup_{i \in A_\textnormal{tour}} I_i$.
	As discussed in the following subsection, these constraints indeed allow us to round a fractional solution while losing only a constant factor.
	
	Still, for each $i \in A_{\textnormal{tour}}$, there are $n$ many options to choose $v^*_i$.
	With a new choice of the bucket thresholds $t_i$, and using \ref{item:short_tour}, we show that we can efficiently compute for each $i \in A_{\textnormal{tour}}$ a vertex $\hat{v}_i$ that is close to any vertex in $\{v \in V: c_{\OPT}(v) \in I_i\}$, which suffices for our purposes.
	In the LP, we then enforce to enter and leave the tour-interval $I_i$ via $\hat{v}_i$.  
	We discuss this in more detail in \cref{sec:high-level-roots}.
	
	\subsection{Rounding fractional solutions}\label{sec:rounding-high-level}
	
	Let $(x,z) \in Q(G_T)$ satisfying \eqref{eq:ordered-lp-constraint2} so that each tour-interval $I_i$ is entered and left via $\hat{v}_i$.
	On a high-level, we round $(x,z)$ following the idea in \cite{quasi-poly}. 
	More precisely, we put the vertices into buckets, compute a path on each bucket, and then concatenate these paths. 
	On the $i$'th bucket we compute a path of length $O(t_i)$.
	Moreover, concatenating the path on bucket $i-1$ with the path on bucket $i$ will cost at most $O(t_i)$. 
	Since the bucket thresholds $t_0,\dots,t_h$ are exponentially increasing, this guarantees that in the resulting path, the latency for each vertex in the $i$'th bucket is in $O(t_i)$.
	
	The main new ingredient is to distinguish between vertices $V_\textnormal{tour} := \{v \in V : \sum_{t \in I_\textnormal{tour}} x_{v,t} \ge \delta\}$ which are visited to at least a small extent of $\delta >0$ within tour-intervals, and those which are visited to an extent of at least $1-\delta$ outside tour-intervals.
	
	Building on the definition in \cref{sec:quasi-poly}, let $t_\textnormal{tour}(v)$ for $v \in V_\textnormal{tour}$ be the minimum time so that $\sum_{t \in  [t_\textnormal{tour}(v)] \cap I_\textnormal{tour}} x_{v,t} \ge \rho_1 \delta$ for some parameter $\rho_1 \in (0,1)$. 
	Moreover, for $v \in V \setminus  V_\textnormal{tour}$, define $t_\textnormal{non-tour}(v)$ to be the minimum time so that $\sum_{t \in [t_\textnormal{non-tour}(v)] \setminus I_\textnormal{tour}} x_{v,t} \ge \rho_2 (1-\delta)$ for some parameter $\rho_2 \in (\frac12,1)$. 
	
	For tour-intervals, we exploit that for every vertex $v$ which is visited within a tour-interval $I_i$ to some fractional extent $\delta'$, $z$ sends a flow of $\delta'$ from $\hat{v}_i$ to $v$ and vice versa.
	Thus, using a simple scaling argument and standard tools from graph theory, we can for each tour-interval $I_i$ construct a feasible solution to the natural LP relaxation for \textsc{Asymmetric Path TSP} on vertex set $\{s,\hat{v}_i\} \cup B_i$ with starting point $s$ and endpoint $\hat{v}_i$, where 
	\begin{equation*}
		B_i := \{v \in V_\textnormal{tour} : t_\textnormal{tour}(v) \in I_i \} \enspace.
	\end{equation*}
	We then use known LP-relative algorithms for \textsc{Asymmetric Path TSP} to round this solution.
	Note that the union of these paths covers all vertices in $V_\textnormal{tour}$. 
	
	For the remaining time intervals $I_i$ with $i \in [h] \setminus A_\textnormal{tour}$, we introduce a new definition of buckets, leveraging our new constraints stated in \eqref{eq:ordered-lp-constraint2}.
	This new definition will enable us to bound the extra cost resulting from concatenating the paths on each bucket.
	We start with preliminary buckets 
	\begin{equation*}
		B_i := \{v \in V \setminus V_\textnormal{tour} : t_\textnormal{non-tour}(v) \in I_i\} \enspace.
	\end{equation*}
	In the following, we assume that $i+1 \in [h] \setminus A_\textnormal{tour}$.
	(This is arguably the most interesting case, illustrating the use of the constraints in \eqref{eq:ordered-lp-constraint2}. Of course, in our proofs we need to handle the other cases as well.)
	Let $W_i$ be the set of vertices $v \in V \setminus V_{\textnormal{tour}}$ for which there is some $u \in B_i$ so that $(u,v)$ is not $t_{i+1}$-short.
	Thanks to our new constraints stated in \eqref{eq:ordered-lp-constraint2}, we know that in this case, $v$ is visited to an extend of at least $\rho_2(1-\delta) - 2\delta$ up to time $t_i$.
	Indeed, $u$ is visited to an extend of at least $\rho_2(1-\delta)$ up to time $t_i$, and both $u$ and $v$ are visited to an extend of at most $\delta$ within tour-intervals.
	We choose $\delta$ and $\rho_2$ so that $\rho_2(1-\delta) - 2\delta > \frac12$, guaranteeing the existence of a short path on $B_i \cup W_i$ by using similar arguments as in \cite{quasi-poly}.
	Thus, we add $W_i$ to $B_i$ and update the other buckets accordingly. 
	Denote the resulting buckets by $\overline{B}_1,\dots,\overline{B}_h$.
	We now have that $(u,v)$ is $t_{i+1}$-short for each $u \in \overline{B}_i$ and $v \in \overline{B}_{i+1}$ with $i,i+1 \in [h] \setminus A_\textnormal{tour}$.
	As discussed in the previous subsection, this allows to bound the extra cost resulting from concatenating the path on $\overline{B}_i$ with the path on $\overline{B}_{i+1}$.
	
	On each of the final buckets, we compute a path $P_i$.
	It remains to argue why the extra cost resulting from concatenating $P_i$ and $P_{i+1}$ is small if $I_i$ or $I_{i+1}$ is a tour-interval. 
	We will show this by exploiting that we enforced to enter and leave each tour-interval $I_j$ via $\hat{v}_j$.
	For example, if $I_{i+1}$ is a tour-interval, then $z$ sends some flow from any vertex $u$ in $P_i$ via $\hat{v}_{i+1}$ to any vertex $v$ in $P_{i+1}$. This yields $c(u,v) \le t_{i+1}$.
	The case that $I_i$ is a tour-interval is more involved and actually needs some further adaptation of the buckets, as we discuss in \cref{sec: rounding}.
	On a high-level, the issue is that for $v \in \overline{B}_i \setminus B_i$, we do not have the guarantee that $\sum_{t \in I_i} x_{v,t} > 0$.
	
	\subsection{Computing roots for tour intervals: A new choice of bucket thresholds}\label{sec:high-level-roots}
	
	We finally argue how to efficiently compute for each tour-interval $I_i$ a vertex $\hat{v}_i \in V$ that is close to any vertex in $\{v \in V: c_{\OPT}(v) \in I_i\}$. Recall that $I_i = [t_{i-1},t_i)$.
	Note that initially we set $t_i := 2^i$ for $i \in [h] \cup \{0\}$.
	This choice allowed to conclude that if we pay latency $t_i$ for each vertex $v \in V$ with $c_{\OPT}(v) \in I_i$, then we end up with a total latency of at most $2\lat(\OPT)$.
	
	Another way to get a similar guarantee is to choose bucket thresholds $t_1,\dots,t_k$ so that $\OPT$ visits exactly $\frac{n+1}{2^i}$ vertices within $I_i = [t_{i-1},t_i)$ for each $i \in [k]$, where $k := \log( n+1 )$.
	(Note that $k$ is an integer by definition of nice instances.)
	Indeed, if we pay latency $t_{i}$ for each vertex $v \in V$ with $c_{\OPT}(v) \in I_i$, we end up with a total latency of at most
	\begin{align*}
		\sum_{i = 1}^k \frac{n+1}{2^i} t_{i} \le 2 \sum_{i = 1}^k \frac{n+1}{2^i} t_{i-1} + t_k  \le 3 \lat(\OPT) \enspace.
	\end{align*}
	Clearly, we do not know these values of $t_i$, but we can efficiently guess $t_i$ rounded to the next power of $2$ for each $i \in [k]$ via enumerating all possible outcomes, while getting similar guarantees.
	Denote these rounded values by $\overline{t}_1,\dots,\overline{t}_k$.
	We will group certain subsequent intervals to avoid that we have many subsequent values of $\overline{t}_i$ which are the same. 
	This grouping guarantees that the bucket thresholds are exponentially increasing, which we crucially exploit when rounding fractional solutions as discussed in the first paragraph of \cref{sec:rounding-high-level}.
	
	Using essentially the same arguments as in \cref{sec:desirable-property}, we can slightly relax these bucket thresholds so that there is a set $A_\textnormal{tour} \subseteq [k]$ satisfying properties  \ref{item:short_tour} and \ref{item:ordering} on page \pageref{item:short_tour}.
	The main advantage of these new bucket thresholds is
	that for each $i \in A_\textnormal{tour}$, we can efficiently compute a vertex $\hat{v}_i$ with $\max\{c(\hat{v}_i, v), c(v,\hat{v}_i)\} \le O(t_i)$ for any $v \in V$ with $c_{\OPT}(v) \in I_i$.
	We formally prove some variant of this statement in \cref{sec: main algorithm}.
	The proof is based on the fact that for any vertex $v^*$ with $c_{\OPT}(v^*) \in I_i$, there are 
	\begin{itemize}
		\item at least $\frac{n+1}{2^i}$ vertices $v \in V$ with $\max\{c(v^*,v), c(v,v^*)\} \le 2 t_i$ (namely those with $c_{\OPT}(v) \in I_i$), and
		\item less than $\frac{n+1}{2^i}$ vertices $v \in V$ with $c(v,v^*) > 2t_i$. (Indeed, this inequality can only hold for vertices $v$ with $c_{\OPT}(v) \ge t_i$).
	\end{itemize}
	One can show that any vertex $\hat{v}_i$ satisfying these two properties must be close to $v^*$.
	More precisely, $\max\{c(\hat{v}_i,v^*), c(v^*,\hat{v}_i)\} \le 4 t_i$. 
	Moreover, we can efficiently compute a vertex $\hat{v}_i$ satisfying these two properties, if it exists, by just checking these properties for every vertex.
	
	For $\hat{v}_i$, we create a copy $r_i$ colocated at $\hat{v}_i$.
	We call $r_i$ the \emph{root} of the tour-interval $I_i$.
	Having these roots at hand, we are finally ready to set up the strengthened version of the \ref{eq:time-indexed-LP}.
	
	\section{Strengthening the \ref{eq:time-indexed-LP}}\label{sec: lp}
	
	In this section, we set up the LP based on our guesses about the structure of an optimal solution.
	This LP refines the \ref{eq:time-indexed-LP} for \textsc{Directed Latency}.
	We are given time intervals $I_i = [t_{i-1},t_i)$ with $t_i \ge \frac{4}{3}t_{i-1}$ for each $i \in [q]$ with $q \le \log (n+1)$.
	Moreover, we are given a set $A_{\textnormal{tour}} \subseteq [q]$ indicating the set of tour-intervals. 
	Previously, we used as upper bound on the total latency $T := n^2 \cdot \max_{u,v \in V \cup \{s,\target\}} c(u,v)$.
	We have to account for the shift in the bucket thresholds as discussed in \cref{sec:outline} and shift $T$ by $t_q$, as we will discuss in detail in \cref{sec: main algorithm}.
	
	Additionally, we are given roots $\roots = (r_i)_{i \in A_{\textnormal{tour}}}$.
	Each root is colocated at some vertex $v \in V$.
	Let $F_{\roots} \subseteq E[G_T]$ be the set of edges $e = ((u,t),(v,t'))$ for which there is some $i \in A_\textnormal{tour}$ so that either 
	\begin{itemize}
		\item $t < t_{i-1} \le t'$ and $v \neq r_i$, or
		\item $t < t_i \le t'$ and $u \neq r_i$.
	\end{itemize}
	In words, $F_{\roots}$ contains for each tour-interval $I_i$ the edges $(u,v)$ entering $I_i$ and the edges $(v,w)$ leaving $I_i$ with $v \neq r_i$, as well as edges skipping $I_i$.
	We add the constraints $z_e = 0$ for each $e \in F_R$ to enforce entering and leaving $I_i$ via $r_i$.
	Recall that we write $I_\textnormal{tour} := \bigcup_{i \in A_\textnormal{tour}} I_i$.
	Now, consider the following LP, which is based on the time-indexed graph $G_T$ for vertex set $V \cup \roots \cup \{s,\target\}$.
	
	\begin{equation}\tag{strengthened time-indexed LP}\label{eq:refined-LP}
		\begin{array}{rrcll}
			\min & \displaystyle \sum_{v\in V, i \in [q], t \in I_i} t_i x_{v,t} \\
			&\phantom{\displaystyle \sum_{a}} (x,z) & \in & Q(G_T) & \\
			& \phantom{\displaystyle \sum_{a}} z(\delta^-(r,t)) & = & z(\delta^+(r,t)) & \forall r \in R, t \in [T] \cup \{0\} \\
			& \phantom{\displaystyle \sum_{a}} z_e & = & 0 & \forall e \in F_{\roots}  \\
			& \displaystyle \sum_{t \in [t_{i}-1]} (x_{u,t} - x_{v,t}) & \le & \displaystyle \sum_{t \in I_\textnormal{tour}} (x_{u,t} +x_{v,t} )& \forall u,v \in V, i \in [q-1]: (u,v) \text{ is not $t_{i+1}$-short}\enspace. 
		\end{array}
	\end{equation}
	
	Note that we can still efficiently compute an optimum solution to this LP as, in comparison to the \ref{eq:time-indexed-LP}, we only added a polynomial number of constraints.
	
	If all our guesses have been correct, then there exists a solution to this linear program that does not cost much more than the total latency of an optimum solution.
	
	\begin{theorem}\label{thm:existence}
		Let $(V \cup \{s,s'\},c)$ be a nice instance of \textsc{Directed Latency with Target} and let $\OPT$ denote an optimum solution. 
		We can efficiently compute a polynomial sized set of triples $(\mathcal{T}=(t_0,\dots,t_q),A_{\textnormal{tour}},R)$ so that for one of these triples, the \ref{eq:refined-LP} admits a solution of cost at most $836 \cdot \lat(\OPT)$.
	\end{theorem}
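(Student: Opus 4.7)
The plan is to enumerate a polynomial-sized family of candidate triples, then show that for the ``correct'' choice, the optimum path $\OPT$ itself, after local surgery, can be promoted to a feasible integer solution of the \ref{eq:refined-LP} whose cost is bounded.

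First I would set up the canonical thresholds discussed in \cref{sec:high-level-roots}. For $i \in [k]$ with $k := \log(n+1)$, let $t_i^\ast$ be the smallest time by which $\OPT$ has visited at least $n+1-\frac{n+1}{2^i}$ vertices, and let $\overline{t}_i := 2^{\lceil \log t_i^\ast \rceil}$. Because $(V \cup \{s,\target\},c)$ is nice, each $\overline{t}_i$ lies in a polynomial range, so the set of vectors $(\overline{t}_1,\dots,\overline{t}_k)$ has polynomial size. I then coalesce maximal runs of equal $\overline{t}_i$'s to obtain thresholds $t_0<t_1<\cdots<t_q$ with $t_i\ge \frac43 t_{i-1}$ (the gap coming from the fact that the $\overline{t}_i$'s are powers of two). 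For the subset $A_\textnormal{tour}\subseteq[q]$, since $q \le k = \log(n+1)$, there are at most $n+1$ subsets to enumerate. Finally, for each $i\in A_\textnormal{tour}$, the roots are obtained by a brute-force search: I test each $v\in V$ and keep as candidate $\hat v_i$ any vertex with at least $\frac{n+1}{2^i}$ vertices $w\in V$ satisfying $\max\{c(\hat v_i,w),c(w,\hat v_i)\}\le 2t_i$. As sketched in \cref{sec:high-level-roots}, any $v^\ast$ with $c_\OPT(v^\ast)\in I_i$ has this property, so the search succeeds, and for every such $v^\ast$ we get $\max\{c(\hat v_i,v^\ast),c(v^\ast,\hat v_i)\}\le 4 t_i$ by the pigeonhole argument indicated in that subsection. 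Place a copy $r_i$ at $\hat v_i$ and add it to $R$.

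Next I would argue that for the triple matching the above canonical construction of $A_\textnormal{tour}$ from $\OPT$ (selected by iterating through consecutive index pairs and marking an interval whenever a ``bad pair'' is encountered, always choosing the bad pair that maximizes $c_\OPT(v)$, as illustrated in \cref{fig: intro everything}), one can modify $\OPT$ into a walk $P$ in $G_T$ whose indicator vector $(x,z)$ lies in $Q(G_T)$ and additionally satisfies: (a) the flow passes through $(r_i,t_{i-1})$ and $(r_i,t_i)$ for each $i\in A_\textnormal{tour}$, so $z_e=0$ for $e\in F_R$ is enforceable; (b) for every non-$t_{i+1}$-short pair $(u,v)$, either they are ordered consistently with $\OPT$, or one of them is visited inside a tour-interval. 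The modifications are: (i) at each tour-interval $I_i$, prepend and append a detour through $\hat v_i$ (the root $r_i$), incurring $\le 2\cdot 4t_i$ of extra travel per entry/exit; (ii) delay the vertex $v$ of any bad pair so that it lands in the relaxed tour-interval $I_j$, which only pushes vertices later in the $\OPT$ order into the next interval, as depicted in \cref{fig: intro bucket expansion}. Feasibility of the new ordering constraint then follows because any violating pair would have triggered our inclusion of some index in $A_\textnormal{tour}$ in the canonical construction, contradicting the definition.

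For the cost bound I would compute latency bucket by bucket. Using $|\{v:c_\OPT(v)\in I_i\}|\le \frac{n+1}{2^{i-1}}$, and that $t_i$ was chosen so that $\frac{n+1}{2^i}t_i\le 2\lat(\OPT)$ overall via a telescoping sum $\sum_i \frac{n+1}{2^i}t_i \le 3\lat(\OPT)$, the original latency contribution of the modified $P$ is bounded by a small multiple of $\lat(\OPT)$. The detours through roots add at most $8t_i$ per vertex in bucket $i$, giving another $\sum_i |B_i|\cdot O(t_i)=O(\lat(\OPT))$. The delays push at most $|\{v:c_\OPT(v)\in I_i\}|$ vertices into the next interval, contributing analogously. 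Chasing these constants through yields the bound $532 \cdot \lat(\OPT)$.

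The main obstacle is step (b) above: verifying that the constraint $\sum_{t\le t_i-1}(x_{u,t}-x_{v,t})\le \sum_{t\in I_\textnormal{tour}}(x_{u,t}+x_{v,t})$ is respected by the modified walk, for \emph{every} non-$t_{i+1}$-short pair. This requires that the canonical construction of $A_\textnormal{tour}$ inserts a tour-interval whenever any such violating pair straddles intervals, and that the subsequent delay of $v$ into a tour-interval simultaneously fixes every bad pair $(u,v)$ it participates in without cascading new violations. Handling the cascading -- in particular, ensuring that delaying the maximizing $v$ in one bad pair also repairs the interval boundary for all other pairs using the same indices $i<j+1$ -- is the part that needs the most care, and is exactly what justifies choosing the bad pair with maximum $c_\OPT(v)$ when expanding a bucket.
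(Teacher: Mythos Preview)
Your plan follows the paper's approach closely, but there is one concrete gap and one structural mismatch that, as written, would make the argument fail.

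\textbf{Root criterion is too weak.} You test a candidate $\hat v_i$ only for the first of the two properties listed in \cref{sec:high-level-roots}: that there are at least $\frac{n+1}{2^i}$ vertices $w$ with $\max\{c(\hat v_i,w),c(w,\hat v_i)\}\le 2t_i$. But the pigeonhole argument you invoke also needs the \emph{second} property, namely that fewer than $\frac{n+1}{2^i}$ vertices have $c(v,\hat v_i)>2t_i$ (equivalently, at least $n-\frac{n+1}{2^i}+1$ vertices have small one-way distance to $\hat v_i$). With only the first property, closeness of $\hat v_i$ to the true $v^\ast$ cannot be concluded: for a late bucket with $\frac{n+1}{2^i}=1$, \emph{every} vertex trivially satisfies your condition (take $w=\hat v_i$), yet most vertices are nowhere near $V(\group^*_i)$. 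The paper's \cref{lem: computing a root} records both sets $A_{r_i}$ and $A'_{r_i}$, and the proof of \cref{lemma:roots} intersects $A_{r_i^\ast}$ with $A'_{v^\ast}$ and $A_{v^\ast}$ with $A'_{r_i^\ast}$; without the $A'$ side these intersections need not be nonempty.

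\textbf{Thresholds need built-in slack.} You set $t_0<\dots<t_q$ to be the distinct values among the $\overline t_i$'s, and only later speak of ``relaxed'' tour-intervals when fitting the modified walk. But the triples you enumerate and feed into the \ref{eq:refined-LP} must already carry intervals wide enough to absorb the root detours and the delays; otherwise the LP you set up is not the one for which the constructed $(x,z)$ is feasible. The paper handles this by defining $t_i:=t_{i-1}+19\,\maxlatrounded(\group_{i+1})$ (\cref{def:buckets}), proving $c(P_i)<19\,\maxlatrounded(\group_{i+1})$ for every segment of the modified walk, and then deriving $t_{i+1}\ge\frac43 t_i$ from this cumulative definition together with the grouping rule. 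Your coalescing of maximal runs of equal $\overline t_i$'s also differs from the paper's grouping (indices $i$ and $i+1$ are merged only when $\overline t_i=\overline t_{i+1}=\overline t_{i+2}$), which the paper uses to get $\maxlatrounded(\group^*_{i+1})=\maxlatrounded(B^*_{j+1})$ for every $j\in\group^*_i$; that identity is what makes the final telescoping cost bound go through to the constant $532$.
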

	
	We prove this theorem in \cref{sec: main algorithm} by constructing a solution based on $\OPT$ so that the additional constraints are satisfied by
	\begin{itemize}
		\item including the roots, using that the resulting detour is small, and
		\item delaying traversing $\OPT$ in certain situations so that we visit each vertex in its desired interval.
	\end{itemize}
	
	Given a fractional solution to the \ref{eq:refined-LP}, we can round this solution while losing only a constant factor on the objective as discussed in \cref{sec:rounding-high-level}. The following theorem is formally proven in \cref{sec: rounding}.
	
	\begin{theorem}\label{thm:rounding}
		Let $(V \cup \{s,s'\},c)$ be an instance of \textsc{Directed Latency with target}. 
		Given a solution $(x,z)$ to the \ref{eq:refined-LP}, we can efficiently compute a solution $P$ with 
		\begin{equation*}
			\sum_{v \in V} c_P(v) \le 96473 \cdot \sum_{v\in V, i \in [q], t \in I_i} t_i x_{v,t} \enspace.
		\end{equation*}
	\end{theorem}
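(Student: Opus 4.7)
The plan is to implement the bucketing strategy outlined in \cref{sec:rounding-high-level}: partition the vertices into buckets ordered by interval index, compute a short path within each bucket, and concatenate these paths into a single $s$-$\target$-path. Since the thresholds $t_1,\dots,t_q$ grow geometrically (by at least $\tfrac{4}{3}$ per step), if each intra-bucket path has length $O(t_i)$ and each concatenation step from bucket $i$ to bucket $j>i$ costs $O(t_j)$, then every vertex of $\overline{B}_i$ ends up with latency $O(t_i)$, which will be charged against its LP contribution $t_i \sum_{t \in I_i} x_{v,t}$ up to a constant factor.

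First I would fix parameters $\delta \in (0,1)$, $\rho_1 \in (0,1)$, and $\rho_2 \in (\tfrac{1}{2},1)$ with $\rho_2(1-\delta) - 2\delta > \tfrac{1}{2}$ and split $V$ into $V_\textnormal{tour} := \{v : \sum_{t \in I_\textnormal{tour}} x_{v,t} \ge \delta\}$ and its complement. For $v \in V_\textnormal{tour}$ let $t_\textnormal{tour}(v)$ be the first time where $v$ has $\rho_1\delta$ mass inside tour-intervals, and for $v \notin V_\textnormal{tour}$ let $t_\textnormal{non-tour}(v)$ be the first time where $v$ has $\rho_2(1-\delta)$ mass outside tour-intervals. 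Define preliminary buckets by $B_i := \{v \in V_\textnormal{tour} : t_\textnormal{tour}(v) \in I_i\}$ if $i \in A_\textnormal{tour}$ and $B_i := \{v \in V\setminus V_\textnormal{tour} : t_\textnormal{non-tour}(v) \in I_i\}$ otherwise. For each $i \in [q-1] \setminus A_\textnormal{tour}$ with $i+1 \in [q] \setminus A_\textnormal{tour}$, enlarge $B_i$ by $W_i := \{v \in V\setminus V_\textnormal{tour} : \exists u \in B_i,\ (u,v) \text{ not $t_{i+1}$-short}\}$; the constraint \eqref{eq:ordered-lp-constraint2} then yields $\sum_{t \in [t_i-1]} x_{v,t} > \tfrac{1}{2}$ for every such $v$, which is the key property enabling a short path on the augmented bucket. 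After the expansions I would reassign to obtain pairwise disjoint buckets $\overline{B}_1,\dots,\overline{B}_q$.

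Next, on each $\overline{B}_i$ I would build a short path $P_i$. For $i \in A_\textnormal{tour}$, observe that the $F_R$-constraints force every unit of flow that enters or leaves $I_i$ to pass through $r_i$; rescaling $(x,z)$ by $1/(\rho_1\delta)$ and flow-decomposing through $r_i$ gives a feasible fractional $r_i$-$r_i$ Hamiltonian path-TSP solution on $\{r_i\} \cup B_i$ of cost $O(t_i)$, which the LP-relative constant-factor approximation for \tspp~\cite{tspp_gap_kohn} rounds to $P_i$ of length $O(t_i)$ entering and exiting at $r_i$. For $i \in [q]\setminus A_\textnormal{tour}$ the analogous argument uses that every $v \in \overline{B}_i$ accumulates more than half its fractional mass by time $t_i$ (either because $v \in B_i$ and $\rho_2(1-\delta) > \tfrac{1}{2}$, or because $v \in W_i$ and the LP constraint forces this): applying the weakened integrality gap result of \cite{quasi-poly} for partially covered cuts yields $P_i$ of length $O(t_i)$ with endpoints that can be chosen freely among vertices carrying the relevant mass.

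Finally, concatenating $P_1,\dots,P_q$ in order and appending an edge to $\target$ produces the output. I would bound each concatenation edge between $P_i$ and $P_{i+1}$ by $O(t_{i+1})$ via case analysis: when both $i,i+1 \in [q]\setminus A_\textnormal{tour}$ the $t_{i+1}$-shortness guarantee from \eqref{eq:ordered-lp-constraint2} applied across $\overline{B}_i \times \overline{B}_{i+1}$ suffices; when $i+1 \in A_\textnormal{tour}$, flow from the endpoint of $P_i$ to $r_{i+1}$ (through which $P_{i+1}$ enters $I_{i+1}$) has cost $O(t_{i+1})$; when $i \in A_\textnormal{tour}$, the exit at $r_i$ connects to $P_{i+1}$ via the flow leaving $I_i$. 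Summing latencies, every $v \in \overline{B}_i$ pays $O(t_i) \le O(1) \sum_{t \in I_i} t\, x_{v,t} \cdot \tfrac{1}{\sum_{t \in I_i} x_{v,t}}$, and the weighted counts $|\overline{B}_i|$ are controlled by the LP mass thanks to the choice of $\rho_1,\rho_2,\delta$, yielding the claimed constant-factor bound.

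The hardest step I expect is the concatenation case $i \in A_\textnormal{tour}$, $i+1 \in [q]\setminus A_\textnormal{tour}$: while the tour-interval constraints pin the flow through $r_i$, a vertex of $\overline{B}_{i+1}$ produced by an earlier expansion step may have no fractional mass inside $I_i$ nor any guaranteed short edge from $r_i$. Overcoming this will require either redirecting $P_i$'s final edge to a vertex whose outgoing $z$-flow reaches $P_{i+1}$ cheaply, or merging such expansion vertices backward so that their positioning is governed by the LP constraint rather than by $r_i$. A secondary obstacle is verifying that after all the rescalings and restrictions, the fractional input to the \tspp\ subroutine remains genuinely feasible; this will rely on careful flow decomposition together with the margin $\rho_2(1-\delta) - 2\delta > \tfrac{1}{2}$ built into our parameter choice.
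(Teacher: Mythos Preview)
Your approach mirrors the paper's: split into $V_\textnormal{tour}$ and its complement via $\delta$, bucket each half by first-crossing times, expand non-tour buckets by sets $W_i$, build short paths per bucket via the \ref{eq:atspp-lp} machinery, and concatenate. The paper packages the two halves as \cref{lemma:rounding-tour-buckets} and \cref{lemma:rounding-nontour-buckets}, but the content is what you sketch.

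The gap you flag is real and is the one place your proposal is incomplete. When $i \in A_\textnormal{tour}$ and $i+1 \notin A_\textnormal{tour}$, what you need is that the \emph{starting vertex} $u_{i+1}$ of $P_{i+1}$ carries positive $x$-mass inside $I_{i+1}$; then the $F_R$-constraints force $z$-flow from $r_i$ to $u_{i+1}$ within $[0,t_{i+1})$, giving $c(r_i,u_{i+1}) < t_{i+1}$. Your single-case definition of $W_i$ (only for $i,i+1 \notin A_\textnormal{tour}$) does not secure this, and neither of your proposed fixes (redirecting the last edge of $P_i$, or merging expansion vertices backward) does either. The paper's resolution is a three-case definition of $W_i$ depending on whether $i-1$ and $i+1$ lie in $A_\textnormal{tour}$: when $i+1 \in A_\textnormal{tour}$, $W_i$ consists of all $w \in V\setminus V_\textnormal{tour}$ with $\sum_{t \in [t_i-1]} x_{w,t} \ge \rho_2(1-\delta) - 3\delta$, and if in addition $i-1 \in A_\textnormal{tour}$ one further imposes $\sum_{t \in I_i} x_{w,t} > 0$. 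A case analysis then establishes the starting-point property for every non-tour bucket whose predecessor is a tour-interval; this refinement is also why the paper needs the stronger margin $\rho_2(1-\delta) - 3\delta > \tfrac12$ rather than your $-2\delta$. Finally, your charging line is off: the correct bound is $\sum_{t \in [T]} \ub(t)\, x_{v,t} \ge (1-\rho_2)(1-\delta)\, t_i$ for $v$ in a non-tour bucket (respectively $(1-\rho_1)\delta\, t_i$ for tour buckets), so the latency $O(t_i)$ is charged directly to the LP objective without any division by $\sum_{t \in I_i} x_{v,t}$.
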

	
	Note that \cref{thm:existence,thm:rounding} directly yield \cref{thm:main_nice}, namely a constant-factor approximation for nice instances of \textsc{Directed Latency with Target}.
	Recall that this yields the desired constant-factor approximation for \textsc{Directed Latency} by using \cref{thm: reduction to bounded instance}.
	
	\section{Existence of a good LP solution (Proof of \cref{thm:existence})}\label{sec: main algorithm}
	
	As discussed in \cref{sec:outline}, our strengthened LP is based on notion of bucket thresholds, tour-intervals, and roots. We define these notions precisely in \cref{subsec: setting up the lp}, and provide some further intuition behind these definitions.
	We then present the algorithm that determines the bucket thresholds, tour-intervals and roots needed to set up the \ref{eq:refined-LP} at the end of \cref{subsec: setting up the lp}.
	More precisely, our approach includes guesses on the structure of an optimum solution, which we implement by enumerating all possible outcomes. 
	We will guarantee that the number of execution paths created this way is bounded by a polynomial in $n$.
	
	As discussed in the introduction, the resulting LP, i.e. \ref{eq:refined-LP}, is no longer a relaxation of \textsc{Directed Latency}.
	We can still show that if our guesses about the structure of an optimum solution have been correct, it admits a solution of cost at most $O(1)$ times the optimum total latency as stated in \cref{thm:existence}.This is proven in \cref{sub: cheap lp sol}.
	
	In \cref{sec: rounding} we then show how to round any feasible fractional solution to a solution of \textsc{Directed Latency With Target} whose latency is at most a constant factor times the LP-value of the fractional solution.   
	
	\subsection{Setting up the \ref{eq:refined-LP}}\label{subsec: setting up the lp}
	
	Let $(V \cup \{s,\target\},c)$ be a nice instance of \textsc{Directed Latency with Target}.
	Throughout this section, we fix an optimal solution $\OPT$ with total latency $\opt$. Let $s,v^*_1,v^*_2,\ldots,v^*_{n-1},v^*_n,\target$ be the sequence of vertices as visited by $\OPT$.
	We start by splitting $\OPT$ into certain {\em buckets} so that each bucket contains half as many vertices than the previous one. 
	Then, we {\em group} these buckets based on their distances from $s$. We show that there are polynomially many such groups and after guessing which one is a tour-group as discussed in \cref{sec:outline}, we compute a root vertex for each tour-group.
	After slightly modifying these groups, we then define bucket thresholds based on these groups. We use these bucket thresholds, tour-groups, and their roots to set up the \ref{eq:refined-LP}.
	
	
	
	Let $B^*_1 \subseteq V$ denote the first $\frac{n+1}{2}$ vertices visited by $\OPT$, $B^*_2 \subseteq V \setminus B^*_1$ the next $\frac{n+1}{4}$ vertices, and so on.
	More formally, $B^*_1:=\{v^*_1,v^*_2,\ldots,v^*_{\frac{n+1}{2}}\}$, $B^*_2:=\{v^*_{\frac{n+1}{2}+1},\ldots,v^*_{\frac{n+1}{2}+\frac{n+1}{4}}\}$, $\ldots$, and $B^*_{k}:=\{v^*_{n}\}$, where $k= \log (n+1)$. Each $B^*_i$ is called a {\em bucket}.
	
	For every $B^*_i$, we denote by $\maxlat(B^*_i)$ the {\em maximum latency} of $B^*_i$ which is defined to be the largest latency w.r.t. $\OPT$ among all vertices in $B^*_i$.
	
	We would like to guess $\maxlat(B^*_i)$ for each $i \in [k]$. Unfortunately, the total number of options is quasi-polynomial in $n$. 
	As we will discuss soon, we can reduce the number of guesses by rounding each $\maxlat(B^*_i)$ to the next power of $2$, denoted by $\maxlatrounded(B^*_i)$.
	
	
	For $i \in [k-2]$, the indices $i$ and $i+1$ belong to the same \emph{group} if and only if $\maxlatrounded(B^*_{i}) = \maxlatrounded(B^*_{i+1}) = \maxlatrounded(B^*_{i+2})$.
	Let $\group^*_1,\dots,\group^*_{q^*}$ denote the resulting groups. Let $V(\group^*_i):=\bigcup_{j\in\group^*_i}B^*_j$, and let $s(\group^*_i):=|V(\group^*_i)|$. See Figure \ref{fig: groups} for an illustration of buckets and groups.
	
	\begin{figure}
		\centering
		\includegraphics[scale=0.9]{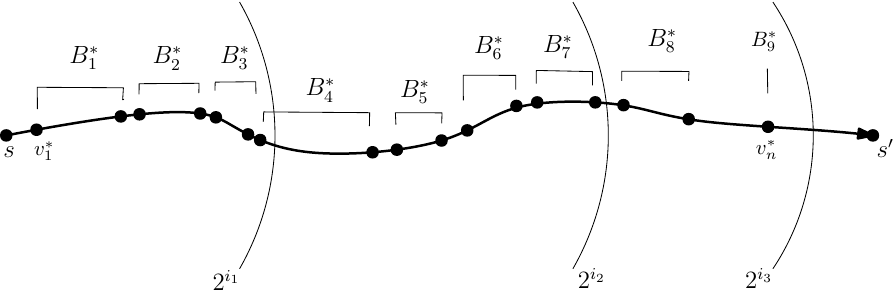}
		\caption{In this example, we have $n=2^9-1$. The shown $s,s'$-path is $\OPT$. The buckets are indicated with their first and last vertices. The number besides the semi-circles show the distances from $s$. In this example, we then have $\group^*_1:=\{1,2\}, \group^*_2:=\{3\}, \group^*_3:=\{4,5,6\}, \group^*_4:=\{7\}, \group^*_5:=\{8\}, \group^*_6:=\{9\}$. Also, we have $\maxlatrounded(\group^*_1)=\maxlatrounded(\group^*_2)=2^{i_1}$, $\maxlatrounded(\group^*_3)=\maxlatrounded(\group^*_4)=2^{i_2}$, $\maxlatrounded(\group^*_5)=\maxlatrounded(\group^*_6)=2^{i_3}$.}
		\label{fig: groups}
	\end{figure}
	
	Note that by the above definition, if multiple buckets have the same rounded latency, then we group all but the last one in one group and the last bucket forms a separate group. Doing so, we have the property that for $i\in [q^*]$ and $j\in \group^*_i$ we have $\maxlatrounded(B^*_{j+1})=\maxlatrounded(\group^*_{i+1})$ which will be used in \cref{sub: cheap lp sol} in bounding the objective value of our LP solution for guesses that are compatible with $\OPT$. Furthermore, the subsequence of groups consisting of every other index has the property that their rounded maximum latency forms a geometric sequence with gap at least $2$ between each consecutive groups in the subsequence.
	This will be used in the analysis of our rounding algorithm. The discussion above motivates the following definition of a valid sequence of groups. Recall $T := n^2  \cdot \max_{u,v \in V \cup \{s,\target\}} c(u,v)$ which is bounded by $\frac{2n^7}{\eps}$ by definition of nice instances.
	
	\begin{definition}\label{def: valid groups}
		Let $1\leq \ell_1\leq\ldots\leq\ell_k\leq 2^{\lceil \log T \rceil}$ be a sequence of $k = \log(n+1)$ values where each $\ell_i$ is a power of $2$. For $i\in[k-2]$, the indices $i$ and $i+1$ belong to the same group if and only if $\ell_i=\ell_{i+1}=\ell_{i+2}$. Let $\group_1,\ldots,\group_q$ denote the resulting groups. Let $\maxlatrounded(\group_i):=\ell_{m_i}$ where $m_i$ is the largest element in $\group_i$, and let $s(\group_i):= \sum_{j \in \group_i} \frac{n+1}{2^j}$. We say $\group_1,\ldots,\group_q$ is a {\em valid sequence of groups}.
	\end{definition}
	
	Note by our construction of groups based on $\OPT$ earlier in this section, $\group^*_1,\ldots\group^*_{q^*}$ is a valid sequence of groups.
	
	\begin{lemma}\label{lem: number of valid groups}
		The number of valid sequences of groups is polynomial in $n$ and $\frac{1}{\eps}$ and we can efficiently enumerate all of them.
	\end{lemma}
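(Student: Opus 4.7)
My plan is to reduce counting valid sequences of groups to counting their defining $\ell$-sequences, and then to observe that the latter is polynomial thanks to the scaling enforced by nice instances.

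First, I will note that, by \cref{def: valid groups}, every valid sequence of groups is produced by some non-decreasing sequence $\ell_1 \le \cdots \le \ell_k$ of powers of $2$ in the range $[1, 2^{\lceil \log T \rceil}]$, via a deterministic rule that turns the sequence into groups (and reads off $\maxlatrounded(\group_i)$ and $s(\group_i)$). Hence bounding and enumerating the set of such $\ell$-sequences suffices, with only polynomial overhead for converting each sequence to its associated groups.

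Next, I will count the $\ell$-sequences. Each $\ell_i$ ranges over a set of size $M := \lceil \log T \rceil + 1$. Using \cref{def:nice-instances}, every distance satisfies $c(u,v) \le 2n^5/\eps^2$, so $T = n^2 \cdot \max_{u,v} c(u,v) \le 2 n^7/\eps^2$, which yields $M = O(\log(n/\eps))$. The length is $k = \log(n+1) = O(\log n)$. The number of non-decreasing sequences of length $k$ with values drawn from a set of size $M$ is the number of size-$k$ multisets over an $M$-element ground set, namely $\binom{k+M-1}{k}$. Since $k, M = O(\log(n/\eps))$, this is at most $2^{k+M-1} = 2^{O(\log(n/\eps))}$, which is $\poly(n,1/\eps)$.

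For the enumeration, I will iterate through all such non-decreasing sequences (for example via a straightforward recursion on the coordinates) and, for each one, compute the induced partition in $O(k)$ time by applying the rule: $i$ and $i+1$ lie in the same group iff $\ell_i = \ell_{i+1} = \ell_{i+2}$. The resulting total runtime is therefore polynomial in $n$ and $1/\eps$. I do not anticipate any substantial obstacle: the lemma reduces to the quantitative fact that both the length $k$ of the $\ell$-sequence and the size $M$ of its value set are $O(\log(n/\eps))$ for nice instances, which keeps $\binom{k+M-1}{k}$ polynomial. The scaling encoded in \cref{def:nice-instances} is precisely what makes $\log T$, and hence $M$, logarithmic; without it the count could be quasi-polynomial, mirroring the difficulty faced by the algorithm of \cite{quasi-poly}.
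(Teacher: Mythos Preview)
Your proposal is correct and follows essentially the same approach as the paper: both reduce to counting non-decreasing $\ell$-sequences over the $\lceil \log T\rceil + 1$ powers of $2$, invoke the stars-and-bars/multiset count $\binom{k+h}{k}$, and bound it by $2^{O(\log(n/\eps))}$ using that $k=\log(n+1)$ and $\log T = O(\log(n/\eps))$ for nice instances. The enumeration argument is likewise the same in spirit (the paper simply says to enumerate in lexicographic order).
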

	\begin{proof}
		By Definition \ref{def: valid groups}, a valid sequence of groups is obtained uniquely from a sequence of $k$ numbers $1\leq \ell_1\leq\ldots\leq\ell_k\leq 2^{\lceil \log T \rceil}$ where each $\ell_i$ is a power of $2$. Hence, it is sufficient to bound the number of the latter sequences and show how to enumerate all of them. 
		
		Consider a non-negative integer solutions to $x_0+\ldots +x_h=k$ for $h:=\lceil\log T\rceil$. 
		Intuitively, $x_i$ encodes the number of indices $j$ with $\ell_j=2^i$. This gives an injective mapping from sequences of $\ell_1,\ldots,\ell_k$ to solutions of $x_0+\ldots +x_h=k$. It is known that the number of non-negative integer solutions of $x_0+\ldots +x_h=k$ is at most ${k+h\choose k} \leq 2^{k+h+1}$ which is polynomial in $n$ and $\frac{1}{\eps}$. So we can efficiently enumerate all valid choices of $\ell_1,\dots,\ell_k$ in lexicographic order.
	\end{proof}
	
	Let $P^*_i$ be the subpath of $\OPT$ on $V(\group^*_i)$. If we know for each $i \in [q^*]$ one vertex on the subpath $P^*_i$, we could strengthen the \ref{eq:time-indexed-LP} as in \cite{quasi-poly}.
	However, guessing such vertices would result in quasi-polynomial running time.
	We will see that our new buckets will enable us to efficiently compute vertices close to $V(\group^*_i)$ in certain situations. Suppose there is a tour on the vertex set $V(\group^*_i)$ of cost no more than $2\cdot\maxlatrounded(\group^*_{i+1})$ (using $\group^*_{q^*+1}:= \group^*_{q^*}$). Then, (i) for any vertex in $v^*\in V(\group^*_i)$ there are at least $s(\group^*_i)$ vertices, namely those in $V(\group^*_i)$, with distance at most $2\cdot\maxlatrounded(\group^*_{i+1})$ to and from $v^*$. Furthermore, (ii) there are less than $s(\group^*_i)$ vertices with distance larger than $2\cdot \maxlatrounded(\group_i)$ to $v^*$. 
	Indeed, the latter can only hold for vertices in $\bigcup_{j=i+1}^{q^*}V(\group^*_j)$. We efficiently compute a vertex called \emph{root} close to $v^*$ by exploiting that any two vertices with properties (i) and (ii) must be close to each other. This motivates the next definition. We call $\group^*_i$ a \emph{tour-group} if there is a tour on $V[P^*_i]$ of length at most $2\maxlatrounded(\group^*_{i+1})$.
	
	For technical reasons, we choose these roots for a tour-group to not be in $V$, i.e., for each $v\in V$ we make a colocated copy of $v$, denote by $\rootset$ the set of new vertices, and require the set $\roots$ of roots to be a subset of $\rootset$.
	
	\begin{lemma}\label{lem: computing a root}
		Let $\group_1,\ldots,\group_q$ for some $q\le k=\log (n+1)$ be a valid sequence of groups. There is a polynomial time algorithm such that given $i\in [q]$, outputs either a root $r_i\in \rootset$ with the property that there are two sets  $A_{r_i},A'_{r_i}\subseteq V$ where $|A_{r_i}|\geq s_i$ and $|A'_{r_i}|\geq n - s_i +1$ where $\max\limits_{v\in A_{r_i}}\{c(v,r_i),c(r_i,v)\}\leq 2\cdot\maxlatrounded(\group_{i+1})$, and $\max\limits_{v\in A'_{r_i}}c(v,r_i)\leq 2\cdot\maxlatrounded(\group_{i+1})$, or outputs $\bot$ if no such root exists.
	\end{lemma}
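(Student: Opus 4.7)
The plan is to describe and analyze a direct brute-force check: iterate over all candidate locations $v \in V$ for the root, test the two conditions combinatorially, and return the colocated copy in $\rootset$ of the first $v$ that passes (or $\bot$ if none exists). Since $\rootset$ consists exactly of colocated copies of vertices in $V$, every potential root $r \in \rootset$ corresponds to some $v \in V$ with $c(u,r)=c(u,v)$ and $c(r,u)=c(v,u)$ for all $u \in V$. So it suffices to search over $V$.

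Concretely, set $\tau := 2 \cdot \maxlatrounded(\group_{i+1})$ and $s_i := s(\group_i)$. For each $v \in V$, compute in $O(n)$ time the two sets
\begin{equation*}
A_v := \{u \in V : \max\{c(u,v),c(v,u)\} \le \tau\}, \qquad A'_v := \{u \in V : c(u,v) \le \tau\}.
\end{equation*}
If $|A_v| \ge s_i$ and $|A'_v| \ge n - s_i + 1$, return the colocated copy $r_i \in \rootset$ of $v$ together with $A_{r_i} := A_v$ and $A'_{r_i} := A'_v$. If the loop over $V$ terminates without success, return $\bot$. The overall running time is $O(n^2)$, clearly polynomial.

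Correctness in the positive case is immediate: by colocation, distances to and from $r_i$ agree with those to and from $v$, so the sets $A_{r_i}, A'_{r_i}$ inherit the required cardinality and distance bounds. For the negative case, note that any putative root in $\rootset$ is colocated at some $v \in V$, and the conditions on $A_{r_i}, A'_{r_i}$ translate verbatim into the conditions on $A_v, A'_v$ that the algorithm checks; hence if no $v \in V$ passes, then no valid $r_i \in \rootset$ can exist.

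There is no real obstacle in this lemma; the delicate part is not proving the lemma itself but rather arguing later (in \cref{sub: cheap lp sol}) that for the correct guess of groups derived from $\OPT$, and for every tour-group $\group^*_i$, some vertex in $V(\group^*_i)$ actually satisfies both conditions (using the existence of a short tour on $V(\group^*_i)$ to supply the $s_i$ vertices at small bidirectional distance, and using that only vertices visited by $\OPT$ after $\group^*_i$ can be far from a chosen $v^* \in V(\group^*_i)$ to bound $|V \setminus A'_{v^*}|$). That analysis, however, is not part of the current statement, which only asserts the existence of a polynomial-time test.
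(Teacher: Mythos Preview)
Your proposal is correct and follows essentially the same approach as the paper: a brute-force check over all candidate locations (the paper iterates over $\rootset$ directly, you equivalently iterate over $V$ and pass to the colocated copy), verifying the two cardinality conditions by computing the maximal sets satisfying the distance bounds. The paper's proof is a two-line version of what you wrote.
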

	\begin{proof}
		Note that we can efficiently check for each $r\in \rootset$ whether it satisfies the properties mentioned in the lemma or not. If there is such an $r$, we output an arbitrary such root $r_i$; otherwise, we output $\bot$.
	\end{proof}
	
	In the following lemma, we prove if $\group^*_i$ is a tour-group then $r^*_i$ output by the previous lemma is close to vertices in $V(\group^*_i)$.
	
	\begin{lemma}\label{lemma:roots}
		Let $\group^*_i$ be a tour-group and $r^*_i$ be the root output by \cref{lem: computing a root}, then $r^*_i\neq \bot$ and we have
		\begin{equation}\label{eq:close_root}
			\max\{c(r^*_i,v),c(v,r^*_i)\} \le 4\maxlatrounded(\group^*_{i+1}) \quad \forall v \in V(\group^*_i) \enspace.
		\end{equation}
	\end{lemma}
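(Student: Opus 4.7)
The plan is to split the argument into two parts: first, certify that the algorithm of \cref{lem: computing a root} does not output $\bot$ by exhibiting an explicit valid root built from $V(\group^*_i)$; second, establish the distance bound \eqref{eq:close_root} via a pigeonhole-plus-triangle-inequality argument applied to whatever valid root $r^*_i$ the algorithm returns. Both parts will rest on the sharp geometric estimate $\sum_{j > i} s(\group^*_j) < s(\group^*_i)$, which follows because the $\group^*_j$'s partition $[k]$ into consecutive blocks of indices and $s(\group^*_j)=\sum_{m \in \group^*_j} (n+1)/2^m$.

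For existence, I would fix any $v^* \in V(\group^*_i)$ and consider its colocated copy $r \in \rootset$, taking $A_r := V(\group^*_i)$ and $A'_r := \bigcup_{j \le i} V(\group^*_j)$. The two-sided bound for $A_r$ is immediate from the tour-group hypothesis: a tour on $V(\group^*_i)$ of length at most $2\maxlatrounded(\group^*_{i+1})$ bounds both $c(v^*, w)$ and $c(w, v^*)$ for $w \in V(\group^*_i)$. For $A'_r$, the one-sided bound $c(w, v^*) \le 2\maxlatrounded(\group^*_{i+1})$ follows either from the tour (if $w \in V(\group^*_i)$) or, if $w$ is visited strictly before $v^*$ by $\OPT$, from $c(w,v^*) \le c_{\OPT}(v^*) \le \maxlatrounded(\group^*_i) \le \maxlatrounded(\group^*_{i+1})$, using monotonicity of $\maxlatrounded$ along the groups. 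The cardinality $|A'_r| \ge n - s(\group^*_i) + 1$ is exactly the geometric estimate above. Hence the algorithm returns some $r^*_i \neq \bot$.

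For the distance bound, abbreviate $L := \maxlatrounded(\group^*_{i+1})$ and fix $v \in V(\group^*_i)$. The direction $c(v, r^*_i) \le 4L$ is the easier one: since $|V(\group^*_i)| + |A'_{r^*_i}| \ge s(\group^*_i)+(n-s(\group^*_i)+1) > n$, there is some $v' \in V(\group^*_i) \cap A'_{r^*_i}$, and the tour bounds $c(v,v')$ while $A'_{r^*_i}$ bounds $c(v', r^*_i)$, each by $2L$. The direction $c(r^*_i, v) \le 4L$ is subtler, because $A'_{r^*_i}$ only controls the ``into-$r^*_i$'' direction. Here I would use the same geometric estimate to deduce $|A_{r^*_i}| \ge s(\group^*_i) > |\bigcup_{j > i} V(\group^*_j)|$, so some bridge $u \in A_{r^*_i}$ must lie in $\bigcup_{j \le i} V(\group^*_j)$; then a case split on whether $u \in V(\group^*_i)$ (use the tour to bound $c(u,v) \le 2L$) or $u$ is visited strictly before $v$ in $\OPT$ (use $c(u,v) \le c_{\OPT}(v) \le \maxlatrounded(\group^*_i) \le L$) closes the case through the triangle inequality $c(r^*_i,v) \le c(r^*_i,u)+c(u,v)$.

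I expect the $c(r^*_i, v)$ direction to be the main obstacle: a naive bridge drawn from $A'_{r^*_i}$ fails because its distance bound goes the wrong way, and it is the pigeonhole inequality $\sum_{j > i} s(\group^*_j) < s(\group^*_i)$ that forces a bridge inside $A_{r^*_i}$ behaving well with respect to the $\OPT$-ordering. Once this bridge is in hand, the factor $4$ in \eqref{eq:close_root} is exactly what the two triangle-inequality hops produce, so no further slack is needed.
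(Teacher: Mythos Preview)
Your proposal is correct and follows essentially the same approach as the paper: both construct the witness sets $A_{v^*}=V(\group^*_i)$ and $A'_{v^*}=\bigcup_{j\le i}V(\group^*_j)$ to certify $r^*_i\neq\bot$, then use pigeonhole to find bridges in $A_{r^*_i}\cap A'_{v^*}$ and $A_{v^*}\cap A'_{r^*_i}$ for the triangle inequality. The paper's write-up is slightly slicker in that, having already verified the $A'_{v^*}$-property for existence, it invokes that bound directly on the bridge $u_1\in A_{r^*_i}\cap A'_{v^*}$, whereas you re-derive it via a case split on whether the bridge lies in $V(\group^*_i)$ or was visited earlier by $\OPT$; the content is identical.
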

	
	\begin{proof}
		Consider an arbitrary vertex $v^*\in V(\group^*_i)$. Since $\group^*_i$ is a tour-group, then $A_{v^*}:=V(\group^*_i)$ and $A'_{v^*}:=\bigcup_{j=1}^{i}V(\group^*_j)$ satisfy the properties in \cref{lem: computing a root}. 
		Therefore, \cref{lem: computing a root} returns a root $r^*_i\neq \bot$. Let $A_{r^*_i},A'_{r^*_i}$ be the sets guaranteed by \cref{lem: computing a root}.
		
		We conclude the proof by showing $\max\{c(r^*_i,v^*), c(v^*,r^*_i)\}\leq 4\maxlatrounded(\group^*_{i+1})$. Because of the sizes of the sets above, we must have $u_1\in A_{r^*_i}\cap A'_{v^*}\neq \emptyset$ and $u_2\in A_{v^*}\cap A'_{r^*_i}\neq\emptyset$. Therefore, we can write
		\[
		c(r^*_i,v^*) \le c(r^*_i,u_1)+c(u_1,v^*) \le 4\maxlatrounded(\group^*_{i+1})
		\]
		and
		\[
		c(v^*,r^*_i) \le c(v^*,u_2)+c(u_2,r^*_i) \le 4\maxlatrounded(\group^*_{i+1}),
		\]
		as desired.
	\end{proof}
	
	Now we are ready to define the time-intervals $I_1,\dots,I_q$ that we need to set up the \ref{eq:refined-LP}.
	
	\begin{definition}\label{def:buckets}
		Given a valid sequence of groups $\group_1,\ldots,\group_q$, define time-intervals $I_1:=[t_0,t_1),I_2:=[t_1,t_2),\dots,I_q:=[t_{q-1},t_q)$, where $t_0:= 0$, and 
		for $i \in [q]$, we set $t_i := t_{i-1} + 19\cdot \maxlatrounded(\group_{i+1})$. 
	\end{definition}
	
	
	We call $t_0,\ldots,t_q$ \emph{bucket thresholds}. We update the bound on $T$ as follows: $T=t_q + n^2  \cdot \max_{u,v \in V \cup \{s,\target\}} c(u,v)$.
	One property of these thresholds is that they form a geometric sequence. This will be crucial in rounding a fractional solution of our LP relaxation. Below, we prove this property. 
	
	\begin{lemma}
		Let $\group_1,\ldots,\group_{q}$ be a valid sequence of groups, and let $I_i=[t_{i-1},t_i)$ be the time-intervals obtained in \cref{def:buckets}. For $i \in [q-1]$, we have $t_{i+1} \ge \frac43 t_i$.
	\end{lemma}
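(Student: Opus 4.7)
The plan is to reduce the inequality to a cleaner bound on the quantities $M_j := \maxlatrounded(\group_j)$. Unfolding the recurrence in \cref{def:buckets}, we have $t_i = 19 \sum_{j=2}^{i+1} M_j$, so the desired inequality $t_{i+1} \ge \tfrac{4}{3} t_i$ is equivalent to $19\,M_{i+2} = t_{i+1}-t_i \ge \tfrac{1}{3}\,t_i$, i.e., to
\[
\sum_{j=2}^{i+1} M_j \;\le\; 3\,M_{i+2}.
\]
The two ingredients I will need are a doubling property for the $M_j$'s and a geometric-series estimate.

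The first step is to extract from \cref{def: valid groups} the doubling property $M_{j+2} \ge 2\,M_j$ whenever both quantities are defined. Since the $\ell$-sequence is non-decreasing, so is $M$, and both take values in the powers of~$2$; hence it suffices to rule out $M_j = M_{j+1} = M_{j+2}$. If they shared a common value $c$, then monotonicity of $\ell$ would force every $\ell$-value on the indices belonging to $\group_j \cup \group_{j+1} \cup \group_{j+2}$ to equal $c$. Letting $m$ be the last index of $\group_j$, this would yield $\ell_m = \ell_{m+1} = \ell_{m+2} = c$; but then the grouping rule of \cref{def: valid groups} places $m$ and $m+1$ in the same group, contradicting that they lie in $\group_j$ and $\group_{j+1}$, respectively.

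With the doubling in hand, I bound the sum by splitting according to the parity of $j$. Iterating $M_j \le M_{j+2}/2$ gives $M_j \le M_{i+1}/2^{(i+1-j)/2}$ for $j$ of the same parity as $i+1$, and $M_j \le M_i/2^{(i-j)/2}$ for the opposite parity; summing the two resulting geometric series then yields
\[
\sum_{j=2}^{i+1} M_j \;\le\; 2\,M_{i+1} + 2\,M_i \;\le\; 2\,M_{i+2} + M_{i+2} \;=\; 3\,M_{i+2},
\]
where the final step uses monotonicity $M_{i+1} \le M_{i+2}$ together with doubling $M_i \le M_{i+2}/2$.

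The main subtlety to watch out for is the boundary case $i = q-1$, in which $M_{i+2} = M_{q+1}$ invokes a group index beyond the sequence $\group_1,\ldots,\group_q$. The analogue of the convention $\group^*_{q^*+1} := \group^*_{q^*}$ used in \cref{subsec: setting up the lp} would collapse the last application of doubling, so at this endpoint I would either adopt a stronger convention (such as setting $M_{q+1}$ to be a fresh power of $2$ at least $2\,M_q$, for which the doubling argument still goes through) or perform a direct case analysis on the last two summands $M_{q-1}$ and $M_q$. Apart from this endpoint, the proof is a clean combination of the grouping-rule contradiction and the geometric-series summation.
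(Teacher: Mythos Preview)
Your argument is essentially the paper's: both extract from the grouping rule the doubling property $M_{j+2}\ge 2M_j$ (equivalently $t_{j+2}-t_{j+1}\ge 2(t_j-t_{j-1})$) and then bound $t_i$ by a geometric series in the increments. The paper shifts each increment forward by two indices to get $\sum_{j\le i-1}(t_j-t_{j-1})\le\tfrac12\sum_{j\le i+1}(t_j-t_{j-1})$, while you split the sum by parity; these are cosmetically different but identical in substance.

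Where the comparison becomes interesting is the endpoint $i=q-1$ that you flag. Under the convention $\group_{q+1}:=\group_q$ (hence $M_{q+1}=M_q$) that the paper uses, the needed inequality $M_{q+1}\ge 2M_{q-1}$ can genuinely fail, and with it the lemma itself: take $\ell=(1,2,2,4,4,8,8)$, so $q=7$, all groups are singletons, $M_8=M_7=8$, and one computes $t_6=532$ but $t_7=684<\tfrac{4}{3}\cdot 532$. Hence your proposed ``direct case analysis on the last two summands'' cannot close the gap under that convention; only your first option---taking $M_{q+1}$ to be a fresh power of~$2$ with $M_{q+1}\ge 2M_q$---actually repairs the statement. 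The paper's own proof glosses over the same boundary (its claim ``for $i\in[q-2]$, $t_{i+2}-t_{i+1}\ge 2(t_i-t_{i-1})$'' already fails at $i=q-2$ in the example above); your write-up is simply more candid about the issue.
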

	\begin{proof}
		Note that for $i \in [q-1]$, we have $t_{i+1} - t_{i} \ge t_i - t_{i-1}$.
		Moreover, for $i \in [q-2]$, we have $t_{i+2} - t_{i+1} \ge 2 (t_i - t_{i-1})$ by our grouping.
		Thus, for $i \in [q-1]$,
		\begin{align*}
			t_{i} &= t_i - t_{i-1} + \sum_{j=1}^{i-1} (t_j - t_{j-1}) \\
			&\le t_{i+1} - t_{i} + \frac12 \sum_{j=1}^{i+1} (t_j - t_{j-1}) \\
			&= \frac32 t_{i+1} - t_{i} \enspace.
		\end{align*}
	\end{proof}
	
	Now we are ready to present our enumeration algorithm for setting up the \ref{eq:refined-LP}.
	More precisely, we compute a set of triples $(\mathcal{T}=\{t_0,\ldots,t_q\},A_\textnormal{tour},R)$ so that for the triple that is compatible with $\OPT$, the \ref{eq:refined-LP} admits a solution of cost $O(1) \cdot \opt$, as we will show in the next subsection.
	
	\begin{algorithm}[H]
		\caption{Setting up the \ref{eq:refined-LP}}\label{alg: dirlat algorithm}
		{\bf Input:} A nice instance of \textsc{Directed Latency with Target} \\
		{\bf Output:} A set of triples $(\mathcal{T},A_\textnormal{tour},R)$.
		\begin{algorithmic}[1]
			\FOR{all valid sequences of groups $\group_1,\ldots,\group_q$ according to \cref{lem: number of valid groups}}
			\STATE Compute time-intervals $I_1,\ldots,I_q$ according to \cref{def:buckets}. 
			\FOR{each possible subset $A_{\textnormal{tour}}\subseteq [q]$}
			\STATE For each $i\in A_{\textnormal{tour}}$ compute $r_i$ according to \cref{lem: computing a root}.
			\IF{$r_i\neq\bot$ for each $i\in A_{\textnormal{tour}}$}
			\STATE Add $((t_0,\ldots,t_q),A_{\textnormal{tour}},(r_i)_{i \in A_{\textnormal{tour}}})$ to the output. 
			\ENDIF
			\ENDFOR
			\ENDFOR
		\end{algorithmic}
	\end{algorithm}
	
	The number of enumerations in lines 1 and 3 in the algorithm are polynomial in $n$ and $\frac{1}{\eps}$ by \cref{lem: number of valid groups} and the fact that $q\leq \log (n+1)$, respectively. Hence, the size of the output and the running time of \cref{alg: dirlat algorithm} is polynomial in $n$ and $\frac{1}{\eps}$.  
	
	\subsection{Existence of a cheap LP solution}\label{sub: cheap lp sol}
	Finally, we prove \cref{thm:existence}. Recall the definition of $\group^*_1,\ldots,\group^*_{q^*}$ based on $\OPT$, and let $I^*_1,\ldots,I^*_{q^*}$ be the time-intervals obtained from these groups according to \cref{def:buckets}. We show there is a subset $A^*_{\textnormal{tour}}\subseteq [q^*]$ such that the \ref{eq:refined-LP} defined based on $I^*_1,\ldots,I^*_{q^*}$, and roots $r^*_i$ for each $i\in A^*_{\textnormal{tour}}$ computed by \cref{lem: computing a root} has a feasible solution of cost at most $O(1)\cdot\opt$.
	
	\begin{proof}[Proof of \cref{thm:existence}]
		We define $A^*_{\textnormal{tour}}$ as follows: $i\in A^*_{\textnormal{tour}}$ if and only if there exist a vertex $v^*\in V(\group^*_{i+1})$ and some $u^*\in \bigcup_{j=1}^{i-1}V(\group^*_j)$ such that $c(v^*,u^*)\leq c(u^*,v^*)$. We say $v^*$ {\em certifies} $i\in A^*_{\textnormal{tour}}$. Note such an edge implies there is a tour of cost at most $2\cdot\maxlatrounded(\group^*_{i+1})$ on $V(\group^*_i)$, in other words, $\group^*_i$ is a tour-group. Hence, by \cref{lemma:roots}, $r^*_i$ computed in \cref{lem: computing a root} satisfies \eqref{eq:close_root}.
		Let $\roots^*$ denote the resulting set of roots.
		
		Recall $P^*_i$ is the subpath of $\OPT$ on the vertex set $V(\group^*_i)$. Define $P^*[u^*,v^*]$ to be the subpath of $\OPT$ from $u^*$ to $v^*$.
		We call the time-interval $I^*_i$ {\em tour-interval} if $i\in A^*_{\textnormal{tour}}$.
		
		We will now construct a walk $P$ on $V \cup \{s,\target\} \cup \roots$ that starts in $s$ and ends in $\target$ and is based on $\OPT$.
		We construct $P$ step by step following the time-intervals $I^*_i$.
		Let $i \in [q^*]$. 
		Assume that we already constructed $P$ for the first $i-1$ time-intervals. 
		\begin{itemize}
			\item If $I^*_i$ is not a tour-interval, we extend $P$ by the subpath of $P^*_i$ that is not yet in $P$. Note this subpath could be empty in which case we go to the next time-interval.
			\item If $I^*_i$ is a tour-interval, let $u$ denote the current endpoint of $P$.
			Moreover, let $u^*_i\in V(\group^*_i)$ be the first vertex on $P^*_i$ not yet in $P$, and let $w^*_i\in V(\group^*_{i+1})$ be the vertex with largest latency in $\OPT$ that certifies $i\in A^*_{\textnormal{tour}}$. We extend $P$ by adding $(u,r^*_i)$, $(r^*_i,u^*_i)$, $P^*[u^*_i,w^*_i]$, $(w^*_i,r^*_i)$ in this order. See \cref{fig: proof of good path} for an illustration of this construction.
		\end{itemize} 
		In both cases, let $P_i$ denote this extension of $P$.
		
		\begin{figure}
			\centering
			\begin{subfigure}{0.5\textwidth}
				\centering
				\includegraphics[scale=0.9]{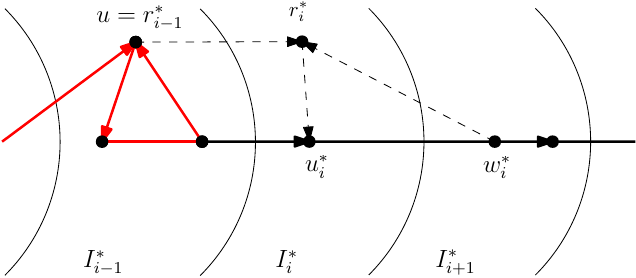}
				\caption{}
				\label{fig: path construction}
			\end{subfigure}
			\begin{subfigure}{0.5\textwidth}
				\centering
				\includegraphics[scale=0.9]{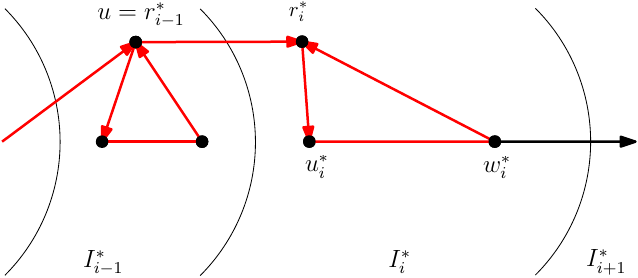}
				\caption{}
				\label{fig: final tour}
			\end{subfigure}
			\caption{(a) shows the construction of walk $P$ in the proof of \cref{thm:existence} up to $(i-1)$'th time interval with red color. Here we have both $i-1$ and $i$ are in $A^*_{\textnormal{tour}}$. The part of $\OPT$ not used so far is in black. The dashed edges are the added edges as in the proof. The edges of $\OPT$ are shown with a directed line segment, meanwhile the undirected line segments are a subpath of $\OPT$ (e.g., the line segment between $u^*_i$ and $w^*_i$). (b) shows the final walk $P$ constructed up to $i$'th time interval. Note that $I^*_i$ is changed, i.e., the threshold $t^*_i$ increased so that the embedding of $P$ in $G_T$ satisfies the constraints of the \ref{eq:refined-LP}.}
			\label{fig: proof of good path}
		\end{figure}
		
		Finally, extend $P$ by adding an edge to $\target$.
		Clearly, $P$ visits all vertices in $V$.
		We will show that we can embed $P$ in $G_T$. 
		More precisely, there is a walk $P_T$ in $G_T$ so that 
		\begin{itemize}
			\item for each $j \in [\ell]$, we have $u=v$ for the $j$'th vertex $u$ in $P$ and the $j$'th vertex $(v,t)$ in $P_T$, and
			\item the sub-walk $P_i$ of $P$ is embedded into $I^*_i$.
		\end{itemize}
		Such an embedding $P_T$ indeed exists by the following claim.
		
		\begin{claim}\label{claim:path_length}
			For each $i \in [q^*]$, we have $c(P_i) < t^*_i - t^*_{i-1}=19\cdot \maxlatrounded(\group_{i+1})$.
		\end{claim}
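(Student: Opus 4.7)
The plan is to prove the claim by a case distinction on whether $I^*_i$ is a tour-interval, bounding the cost of each piece of the sub-walk $P_i$ via \cref{lemma:roots}, the certifier property behind tour-groups, and the metric triangle inequality.

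In the easier case where $I^*_i$ is not a tour-interval, $P_i$ is the portion of the $\OPT$-subpath $P^*_i$ on $V(\group^*_i)$ not yet in $P$, prefixed by a transition edge from the current endpoint $u$ of $P$ to the first such vertex $v$. The subpath portion costs at most $\maxlat(\group^*_i) \le \maxlatrounded(\group^*_i)$, since $P^*_i$ is a contiguous segment of $\OPT$. For the transition edge: if $u$ is a real vertex previously visited by $\OPT$, then $c(u,v)$ is controlled by the corresponding $\OPT$-subpath cost and hence by $\maxlatrounded(\group^*_i)$; if instead $u = r^*_{i-1}$ (the previous iteration was a tour-interval), I would route through some $v_1 \in V(\group^*_{i-1})$, applying \cref{lemma:roots} to bound $c(r^*_{i-1}, v_1)$ by $4 \maxlatrounded(\group^*_i)$ and the $\OPT$-segment estimate to bound $c(v_1, v)$ by $\maxlatrounded(\group^*_i)$. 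Either way the resulting bound is $O(1) \cdot \maxlatrounded(\group^*_i) \le O(1)\cdot\maxlatrounded(\group^*_{i+1})$, which is well below the target.

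The main case is when $I^*_i$ is a tour-interval, where $P_i$ splits into the four pieces $(u, r^*_i)$, $(r^*_i, u^*_i)$, $P^*[u^*_i, w^*_i]$, $(w^*_i, r^*_i)$. I would bound them separately: $c(r^*_i, u^*_i) \le 4\maxlatrounded(\group^*_{i+1})$ directly by \cref{lemma:roots} since $u^*_i \in V(\group^*_i)$; the $\OPT$-subpath satisfies $c(P^*[u^*_i, w^*_i]) \le c_{\OPT}(w^*_i) \le \maxlatrounded(\group^*_{i+1})$; and $c(w^*_i, r^*_i) \le 6 \maxlatrounded(\group^*_{i+1})$ by routing through the certifier $u^*$ (using $c(w^*_i, u^*) \le c(u^*, w^*_i) \le c_{\OPT}(w^*_i)$) and then through an arbitrary vertex of $V(\group^*_i)$ via \cref{lemma:roots}.

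The hardest and crucial piece is $c(u, r^*_i)$. When $u$ is a real vertex sitting in $V(\group^*_{i-1})$, the triangle inequality together with \cref{lemma:roots} applied at $r^*_i$ yields at most $5\maxlatrounded(\group^*_{i+1})$. The main obstacle is the subcase $u = r^*_{i-1}$ where the previous interval was also a tour-interval: here I would route through $V(\group^*_{i-1})$ and then $V(\group^*_i)$, invoking \cref{lemma:roots} at both roots and bounding the inter-group hop by an $\OPT$-subpath. To keep the total strictly below $19\maxlatrounded(\group^*_{i+1})$, I would exploit the spacing property built into \cref{def: valid groups} - at most two consecutive groups can share the same rounded maximum latency, so whenever the naive estimate threatens to exceed the budget, one has $\maxlatrounded(\group^*_i)/\maxlatrounded(\group^*_{i+1}) \le 1/2$, which closes the slack. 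Summing the four pieces then gives $c(P_i) < 19\maxlatrounded(\group^*_{i+1}) = t^*_i - t^*_{i-1}$ as claimed.
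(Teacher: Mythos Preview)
Your overall structure matches the paper's, but there is a genuine gap in the hardest subcase (both $I^*_{i-1}$ and $I^*_i$ tour-intervals). Your spacing argument is wrong: the grouping only guarantees that at most two consecutive groups share a rounded maximum latency, so $\maxlatrounded(\group^*_i)=\maxlatrounded(\group^*_{i+1})$ is perfectly possible (e.g.\ when exactly two consecutive buckets share a value). In that situation your claimed inequality $\maxlatrounded(\group^*_i)/\maxlatrounded(\group^*_{i+1})\le \tfrac12$ simply fails, and your separate bounds add up to at best $20\,\maxlatrounded(\group^*_{i+1})$ (namely $9$ for $(u,r^*_i)$, $4$ for $(r^*_i,u^*_i)$, $1$ for the $\OPT$-segment, $6$ for $(w^*_i,r^*_i)$), which is not below $19$.

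The fix the paper uses, and which you are missing, is to \emph{not} bound the inter-group hop and the $\OPT$-segment separately. Concretely, it keeps the term $c(v,u^*_i)$ (with $v\in V(\group^*_{i-1})$) unresolved inside the estimate for $c(u,r^*_i)$ and then observes that $c(v,u^*_i)+c(P^*[u^*_i,w^*_i])$ is the length of a contiguous sub-segment of $\OPT$ ending at $w^*_i$, hence
\[
c(v,u^*_i)+c(P^*[u^*_i,w^*_i]) \;\le\; \sum_{j\le i+1} c(P^*_j) \;=\; \maxlat(\group^*_{i+1}) - c(s,v^*_1) \;<\; \maxlatrounded(\group^*_{i+1}),
\]
the strict inequality using that all distances are positive integers (nice instance). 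This collapses two of your ``$\le \maxlatrounded(\group^*_{i+1})$'' terms into a single strict one, bringing the total to $18+{<}1<19$. Without this joint bound (or some other genuine replacement), the constant $19$ cannot be met.
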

		\begin{proof}
			We first consider the case that $i \ge 2$ and $I^*_{i-1}$ and $I^*_i$ are tour-intervals.
			Note that in this case, the endpoint $u$ of $P_{i-1}$ equals $r^*_{i-1}$.
			Let $v$ be an arbitrary vertex in $V(\group^*_{i-1})$. Then,
			\begin{align*}
				c(u,r^*_i)&\leq c(r^*_{i-1},v) + c(v,u^*_i) + c(u^*_i,r^*_i) \\
				&\leq 4\maxlatrounded(\group^*_{i}) + c(v,u^*_i) + 4\maxlatrounded(\group^*_{i+1})\\
				&\leq 8\cdot \maxlatrounded(\group^*_{i+1}) + c(v,u^*_i),
			\end{align*}
			where the second inequality follows from applying \cref{lemma:roots} once for $i-1$ and once for $i$.
			
			Let $v'$ be a vertex in $\bigcup_{j=1}^{i-1}V(\group^*_j)$ such that $c(w^*_i,v') \le c(v',w^*_i)\leq \maxlatrounded(\group_{i+1})$. Note such a vertex exists because $I^*_i$ is a tour-interval. Then,
			\begin{align*}
				c(w^*_i,r^*_i) &\leq c(w^*_i,v') + c(v',u^*_i) + c(u^*_i,r^*_i) \\
				&\leq \maxlatrounded(\group^*_{i+1}) + \maxlatrounded(\group^*_{i+1}) + 4\cdot \maxlatrounded(\group^*_{i+1})\\
				&= 6\cdot \maxlatrounded(\group^*_{i+1}),
			\end{align*}
			where we again used \cref{lemma:roots} in the second inequality.
			
			Putting the above bounds together we get
			\begin{align*}
				c(P_i) &\le c(u,r^*_i) + c(r^*_i, u^*_i) +  c(P^*[u^*_i,w^*_i]) + c(w^*_i,r^*_i) \\
				&\le 8\cdot \maxlatrounded(\group^*_{i+1}) + c(v,u^*_i) + c(r^*_i,u^*_i) + c(P^*[u^*_i,w^*_i]) + 6\cdot \maxlatrounded(\group^*_{i+1})\\
				&\le 18\cdot\maxlatrounded(\group^*_{i+1}) + c(v,u^*_i) + c(P^*[u^*_i,w^*_i]) \\
				&< 19 \maxlatrounded(\group^*_{i+1}) \enspace,
			\end{align*} 
			where we used \cref{lemma:roots} in the second inequality to bound $c(r^*_i,u^*_i)\leq 4\cdot \maxlatrounded(\group^*_{i+1})$.
			Moreover, the last inequality holds because 
			\begin{equation*}
				c(v,u^*_{i}) + c(P^*[u^*_i,w^*_i]) \le \sum_{j=1}^{i+1} c(P^*_j) \le \maxlatrounded(\group^*_{i+1}) - c(s,v^*_1) < \maxlatrounded(\group^*_{i+1}) \enspace,
			\end{equation*}
			using the definition of nice instances in the final inequality.
			This finished the proof of \cref{claim:path_length}.
			Note that the bound improves in the case $i=1$ because then $u=s$. 
			Moreover, if $I^*_{i-1}$ or $I^*_i$ is not a tour-interval, the bound on $c(P_i)$ improves since we do not add the detour to the roots of $I^*_{i-1}$ or $I^*_i$.
		\end{proof}
		
		Next, we bound the objective value of the LP solution $(x,z)$ induced by $P_T$, where $z$ equals the characteristic vector of $E[P_T]$. For $i \in [q^*]$, we use the following useful bound on $t^*_i$: Let $A_i:=\{j\in[i]:~\maxlatrounded(\group^*_j)=\maxlatrounded(\group^*_{j+1})\}$. Then, we can write
		\begin{equation}\label{eq: bound on t_i}
			t^*_i=19\cdot \sum\limits_{j=1}^{i+1}\maxlatrounded(\group^*_j)\leq 38\cdot \sum\limits_{j\in[i+1]\setminus A_i}\maxlatrounded(\group^*_j)\leq 76 \cdot\maxlatrounded(\group^*_{i+1}), 
		\end{equation}
		where the last inequality follows from the fact that $(\maxlatrounded(\group^*_j))_{j\in[i+1]\setminus A_i}$ forms a geometric series with gap at least $2$. Since in $P_T$, each vertex $v \in V(\group^*_i)$ with $i \in [q^*]$ is visited up to time $t^*_i$, we can bound the objective value of $(x,z)$ by
		\begin{align*}
			\sum_{i=1}^{q^*} s(\group^*_i) \cdot t^*_i &\leq \sum_{i=1}^{q^*} \left(\sum_{j \in \group^*_i} \frac{n+1}{2^j} \right) \cdot 76\cdot\maxlatrounded(\group^*_{i+1})  \\
			&= \sum_{j=1}^k \frac{n+1}{2^j} \cdot 76\cdot\maxlatrounded(I^*_{j+1}) & \textnormal{(using $I^*_{k+1} := I^*_k$)} \\
			&= 304\cdot \sum_{j=3}^{k+2} \frac{n+1}{2^j} \cdot \maxlatrounded(I^*_{j-1}) \\
			&\le 304\cdot \left(2\opt + \frac{3}{4} \maxlatrounded(I^*_{k}) \right) \\
			&\le 836 \cdot \opt \enspace,
		\end{align*}
		where the first inequality follows from \eqref{eq: bound on t_i}, and the first equality holds by the definition of $\group^*_1,\dots,\group^*_{q^*}$.
		More precisely, for $j\in \group^*_i$, if $\maxlatrounded(I^*_{j})=\maxlatrounded(I^*_{j+1})$ then $\maxlatrounded(\group^*_{i+1})=\maxlatrounded(I^*_{j+1})$, and if $\maxlatrounded(I^*_{j})\neq\maxlatrounded(I^*_{j+1})$ then $j+1\in\group^*_{i+1}$ so also in this case $\maxlatrounded(\group^*_{i+1})=\maxlatrounded(I^*_{j+1})$. The second inequality follows from the fact that there are exactly $\frac{n+1}{2^j}$ vertices in $B^*_j$ with latency of at least  $\maxlat(I^*_{j-1}) \ge \frac12 \cdot \maxlatrounded(I^*_{j-1})$ each.
		
		Finally, we argue that $(x,z)$ is indeed feasible for the \ref{eq:refined-LP}. By our construction, $P_T$ starts in $s$, ends in $\target$, and visits each vertex in $V$.
		Thus, we have $(x,z) \in Q(G_T)$. 
		Moreover, for every $I^*_i$ where $i\in A^*_{\textnormal{tour}}$, we have that the first and the last vertex visited in the time interval $I^*_i$ is $r^*_i$. So it remains to show that all constraints of the last type in the \ref{eq:refined-LP} are satisfied. Consider a pair of vertices $u,v\in V$ and assume $(u,v)$ is not $t^*_{i+1}$-short for some $i\in [q^*-1]$. Then, we need to show
		\begin{equation*}
			\sum\limits_{t\in[t^*_{i}-1]}(x_{u,t}-x_{v,t})\leq \sum_{t \in I_\textnormal{tour}} (x_{u,t} +x_{v,t} ) \enspace.
		\end{equation*}
		
		If $\lat_{P_T}(u) \ge t^*_i$ or $\lat_{P_T}(v) < t^*_i$, then the LHS is at most $0$ so the inequality holds. So we can assume $\lat_{P_T}(u)<t^*_i\leq \lat_{P_T}(v)$. Furthermore, if either of $\lat_{P_T}(u)$ or $\lat_{P_T}(v)$ is in a tour-interval then the RHS is at least $1$ and the inequality holds as well.
		
		Note that by our construction of $P_T$, for each $i' \in [q^*]$, each vertex in $V(\group^*_{i'})$ is visited either in $I^*_{i'}$ or $I^*_{i'-1}$. Note that the latter case implies that $i' \ge 2$ and $I^*_{i'-1}$ is a tour-interval.
		
		Given the above discussion, we can assume $\lat_{P_T}(u)\in I^*_j$ and $\lat_{P_T}(v)\in I^*_h$ where $j\leq i < h$. Moreover, $u\in V(\group^*_{j})$ and $v\in V(\group^*_{h})$, as otherwise $u$ or $v$ would be visited in a tour-interval. 
		We cannot have $h\geq j+2$, as otherwise, using that $(u,v)$ is not $t^*_{i+1}$-short, $v$ would certify $h-1\in A^*_{\textnormal{tour}}$. Thus, $v$ would have been visited in $I^*_{h-1}$, a contradiction.
		
		Therefore, we have $j\leq i< h\leq j+1$ which implies $j=i$ and $h=i+1$. Hence, $c(u,v)\leq \maxlatrounded(\group^*_{i+1})<t^*_{i}$, a contradiction that $(u,v)$ is not $t^*_{i+1}$-short\footnote{We remark that the proof still works if we add constraints of the last type in the \ref{eq:refined-LP} for all $u,v \in V$ with $(u,v)$ not being $t_i$-short. However, to keep the discussion in \cref{sec:outline} more intuitive we did not opt for this version. Moreover, with our current analysis, this would not improve the guarantees we get when rounding fractional solutions as discussed in \cref{sec: rounding}.}.
	\end{proof}   
	
	\section{Computing paths on each bucket}\label{sec: preliminary}
	
	Before we prove \cref{thm:rounding}, we shortly recap how to use known LP-relative algorithms for \textsc{Asymmetric Path TSP} to construct a short path on each of our buckets as shown by \textcite{quasi-poly}.
	Although this section does not contain any novelties, we provide some brief discussion for the sake of completeness.
	
	Given a flow $z$ in the time-indexed graph $G_T$, we define its time-aggregation as follows.
	
	\begin{definition}
		Given an instance $(V \cup \{s,\target\},c)$ of \textsc{Directed Latency with Target}, let $G$ denote the complete directed graph on vertex set $V \cup \{s,\target\}$.
		For $z \in \mathbb{R}^{E[G_T]}$, define the \emph{time-aggregation} $f(z) \in \mathbb{R}^{E[G]}$ of $z$ by 
		\begin{equation*}
			f(z)_{(u,v)} := \sum_{e=((u,t),(v,t')) \in E[G_T]} z_e
		\end{equation*}
		for each $u,v \in V \cup \{s,\target\}$.
	\end{definition}
	
	Let $(x,z)$ denote a feasible solution to the \ref{eq:refined-LP}.
	To construct paths on each bucket, we consider the time-aggregation of parts of the flow $z$.
	By using a well-known tool from graph theory called \emph{splitting off}, these time-aggregations yield feasible solutions to some generalization of a natural LP relaxation of \textsc{Asymmetric Path TSP}.
	We briefly discuss this LP relaxation in \cref{sec:ATSP-LP} and the splitting off technique in \cref{sec:splitting-off}. 
	
	\subsection{The natural LP relaxation of  \textsc{Asymmetric Path TSP}}\label{sec:ATSP-LP}
	
	Given $\rho \in (0,1)$ and an asymmetric metric space $(V,c)$ with source $s \in V$ and target $t \in V$, we consider the following linear program, which for $\rho = 1$ is the natural linear programming relaxation for \textsc{Asymmetric Path TSP}.  
	
	\begin{equation}\tag{ATSPP$_\rho$-LP}\label{eq:atspp-lp}
		\begin{array}{rrcll}
			\min & \displaystyle \sum_{(u,v) \in V \times V} c_{u,v} x_{u,v} \\
			& x(\delta^+(v)) - x(\delta^-(v)) & = & \begin{cases}
				\phantom{-}1 & v = s \\
				-1 & v = t \\
				\phantom{-}0 & \text{else}
			\end{cases} & \forall v \in V \\
			& x(\delta(U)) & \ge & 2 \rho & \forall \emptyset \neq U \subseteq V \setminus \{s,t\} \\
			& x & \geq & 0 \enspace . & 
		\end{array}
	\end{equation}
	
	We use the following theorem by \textcite{quasi-poly}. 
	In their paper, they remark that they did not attempt to optimize the approximation factor, but favored simplicity of presentation.
	Moreover, they show that the integrality gap of \ref{eq:atspp-lp} is at least $\frac{1}{2\rho-1}$.
	Here and henceforth, $\atsp$ denotes the best-known LP-relative approximation factor for \textsc{Asymmetric TSP}. 
	By \cite{vygenATSP}, we have $\atsp < 15$.
	
	\begin{theorem}[Theorem 1.2 in \cite{quasi-poly}]\label{thm:ATSPP-rho}
		For $\rho > \frac12$, there is an LP-relative $\frac{\psi}{2\rho-1}$-approximation algorithm for \ref{eq:atspp-lp}, where  $\psi \le 1 + 32 \atsp$.
	\end{theorem}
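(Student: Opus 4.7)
The plan is to reduce \ref{eq:atspp-lp} to the standard Held-Karp LP for \textsc{Asymmetric TSP} and invoke the LP-relative $\atsp$-approximation of \cite{Traub_Vygen_2024} as a black box. The factor $\tfrac{1}{2\rho-1}$ will come from a rescaling that repairs the weakened cut coverage, while the additive ``$1$'' in $\psi \le 1 + 32\atsp$ will account for the cost of converting an ATSP tour back into an $s$-$t$ Hamiltonian path.

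\textbf{Step 1 (Reduce to a Held-Karp feasible circulation).} Let $G'$ be the graph obtained from $G$ by adding a zero-cost edge $(t,s)$. Given $x^*$ feasible to \ref{eq:atspp-lp}, I would define $\hat x$ on $G'$ by $\hat x_e := \tfrac{1}{2\rho-1}\, x^*_e$ for $e \in E(G)$ and $\hat x_{(t,s)} := \tfrac{1}{2\rho-1}$. A direct check, splitting into cases by which side of the cut contains $s$ and $t$, shows that $\hat x$ is a circulation on $G'$ satisfying $\hat x(\delta(U)) \ge 2$ for every non-empty proper $U \subseteq V(G')$: non-$s$-$t$ cuts inherit coverage $\tfrac{2\rho}{2\rho-1} \ge 2$ purely from the scaled $x^*$, while $s$-$t$ cuts collect the remaining mass from the $(t,s)$ edge (and the unit $s$-$t$ flow of $x^*$). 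Hence $\hat x$ is Held-Karp feasible on $G'$ and has cost $\tfrac{c(x^*)}{2\rho-1}$.

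\textbf{Step 2 (Round via the ATSP black box).} Apply the LP-relative $\atsp$-approximation for \textsc{Asymmetric TSP} to $G'$ starting from $\hat x$, obtaining a Hamiltonian tour $T$ on $V(G')$ of cost at most $\atsp \cdot \tfrac{c(x^*)}{2\rho-1}$.

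\textbf{Step 3 (Extract an $s$-$t$ Hamiltonian path).} By standard flow decomposition of $x^*$ there exists an integer $s$-$t$ path $p$ of cost at most $c(x^*)$. Overlaying $p$ on top of $T$ yields a multigraph in which every internal vertex has equal in- and out-degree, while $s$ has one extra outgoing edge and $t$ has one extra incoming edge. Hence this multigraph admits an $s$-$t$ Eulerian walk, which after short-cutting repeated visits becomes an $s$-$t$ Hamiltonian path $P$ with $c(P) \le c(T) + c(p) \le \tfrac{\atsp}{2\rho-1}\, c(x^*) + c(x^*) \le \tfrac{1+\atsp}{2\rho-1}\, c(x^*)$, using $2\rho-1 \le 1$ in the last step.

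The main obstacle is Step 1: while intuitive, verifying $\hat x(\delta(U)) \ge 2$ for every non-empty proper $U \subseteq V(G')$ requires a careful case analysis (non-$s$-$t$ cuts, $s$-$t$ cuts in $V$, and cuts having only $s$ or only $t$ on one side), since the LP constraints of \ref{eq:atspp-lp} behave differently in each case. A secondary technical point is to invoke the ATSP approximation on a circulation rather than on a Held-Karp vector that additionally obeys degree constraints; this is standard but should be stated explicitly. Note that the outline above already yields $\psi \le 1+\atsp$, which is strictly better than $\psi \le 1+32\atsp$ in the statement, presumably reflecting slack taken in \cite{quasi-poly} for simplicity of presentation.
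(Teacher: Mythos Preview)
The paper does not prove this statement; it is quoted verbatim as Theorem 1.2 of \cite{quasi-poly} and used as a black box. So there is no ``paper's own proof'' to compare against --- any assessment must be of your argument on its own merits.

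Your outline is natural, but Step 2 hides a genuine gap that you dismiss too quickly. The vector $\hat x$ you build is a nonnegative circulation on $G'$ with $\hat x(\delta(U)) \ge 2$ for all nonempty proper $U$, but it does \emph{not} satisfy the degree constraints $\hat x(\delta^+(v)) = \hat x(\delta^-(v)) = 1$. The integrality-gap bound $\atsp$ is stated for the Held--Karp relaxation, which includes these degree constraints; applying it to $\hat x$ requires showing that the Held--Karp optimum on $G'$ is at most $c(\hat x)$. This would follow if the ``circulation with subtour constraints'' LP and the Held--Karp LP have the same optimum, but that equivalence relies on the metric (via splitting off plus shortcutting), and your $G'$ is \emph{not} metric: the zero-cost edge $(t,s)$ can violate the triangle inequality (e.g., $c(t,v) \le c'(t,s) + c(s,v) = c(s,v)$ need not hold). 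In non-metric instances the two LPs can differ, so the reduction as written does not go through. The same non-metric issue resurfaces in Step 3, where shortcutting an Eulerian walk that traverses the zero-cost $(t,s)$ edge can increase cost.

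These issues are likely the reason \cite{quasi-poly} pay the larger constant $1 + 32\,\atsp$ rather than the $1 + \atsp$ your sketch would yield: their actual argument routes through the \textsc{Asymmetric Path TSP} integrality-gap bound of \cite{tspp_gap_kohn} and handles the weakened cuts by a more careful construction, not by a single rescale-and-close-up. If you want to salvage your cleaner route, you must either (i) argue directly that the \textsc{Asymmetric TSP} rounding of \cite{Traub_Vygen_2024} works relative to the degree-free circulation LP, or (ii) first pass from $\hat x$ to a Held--Karp--feasible point of no larger cost --- neither of which is ``standard'' once the metric is broken.
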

	
	\subsection{Directed splitting off}\label{sec:splitting-off}
	
	In this subsection we discuss a powerful and well-known tool in graph theory called \emph{splitting off}. 
	First introduced for undirected graphs by \textcite{lovasz_undirected}, this technique was extended to directed graphs by \textcite{mader_digraph}.
	Let a digraph $G=(V,E)$ be given, and let $v \in V$. 
	Under certain assumptions, we can, for every $(u,v) \in \delta^-(v)$, efficiently find an edge $(v,w) \in \delta^+(v)$, so that replacing $(u,v)$ and $(v,w)$ by $(u,w)$ maintains certain connectivity properties.
	We use the following formulation from \cite{Traub_Vygen_2024}, which is based on work by \textcite{jackson} and \textcite{frank}.
	
	\begin{theorem}[Theorem 3.3 in \cite{Traub_Vygen_2024}]\label{thm:splitting_off}
		Let $G=(V,E)$ be a digraph with weights $x: E \rightarrow \mathbb{R}_{\ge 0}$ that satisfy $x(\delta^-(v)) = x(\delta^+(v))$ for all $v \in V$. Let $T \subsetneq V$ and $z \in V \setminus T$. 
		Let $\lambda > 0$ and 
		\begin{equation}\label{eq:splitting_off}
			x(\delta(U)) \ge \lambda \quad \forall U \subsetneq V : T \cap U \neq \emptyset, T \setminus U \neq \emptyset \enspace.
		\end{equation}
		Then one can compute in polynomial time a list of triples $(e_i,f_i,\gamma_i) \in \delta^-(z) \times \delta^+(z) \times \mathbb{R}_{>0}$, $i=1,\dots,k$ such that splitting off all $(e_i,f_i,\gamma_i)$ maintains \eqref{eq:splitting_off} and leads to $x(e) = 0$ for all edges $e$ incident to $z$. Here splitting off $(e,f,\gamma)$ means reducing $x(e)$ and $x(f)$ by $\gamma$ and adding an edge with weight $\gamma$ from the tail of $e$ to the head of $f$. 
	\end{theorem}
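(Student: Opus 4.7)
The plan is to prove this by combining the existential directed splitting-off theorem of Mader with an algorithmic implementation due to Frank and Jackson. The argument splits naturally into an existence part and an algorithmic part.

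The first step is to establish the existential version: under the stated hypotheses, there exists an admissible triple $(e,f,\gamma) \in \delta^-(z) \times \delta^+(z) \times \mathbb{R}_{>0}$ whose splitting preserves \eqref{eq:splitting_off}. Call a pair $((u,z),(z,w))$ with $x((u,z)), x((z,w)) > 0$ \emph{blocked} if no positive $\gamma$ can be split off at it without destroying \eqref{eq:splitting_off}. A blocked pair is witnessed by a tight $T$-separating cut $U$ with $u \in U$, $w \notin U$, and $x(\delta(U)) = \lambda$. The key combinatorial claim is that at a balanced vertex $z$ not every pair can be blocked. I would argue this by uncrossing: supposing every pair is blocked, pick a maximal laminar family of tight witness cuts through $z$, and derive a contradiction from submodularity of $x(\delta(\cdot))$ together with $x(\delta^-(z)) = x(\delta^+(z))$. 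This is precisely Mader's directed splitting-off theorem.

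The second step is to implement the existence result algorithmically. Whether a pair $(e,f) = ((u,z),(z,w))$ admits some $\gamma > 0$ can be checked via min-cut computations: it is admissible iff the minimum of $x(\delta(U))$ over $T$-separating cuts $U$ with $u \in U$ and $w \notin U$ strictly exceeds $\lambda$, which boils down to $|T|^2$ many max-flow computations in an auxiliary graph. Once an admissible pair is identified, the maximum feasible $\gamma$ is the minimum, over all tight candidate cuts, of $x(\delta(U)) - \lambda$ together with $\min\{x(e), x(f)\}$. With $\gamma$ chosen maximally, at least one of $x(e), x(f)$ becomes $0$, or some $T$-separating cut through $z$ becomes newly tight.

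The third step is to iterate the procedure. Since each step either zeroes out a positive edge incident to $z$ (of which there are polynomially many to begin with) or increases the number of tight $T$-separating cuts through $z$, a standard accounting shows the process terminates after a polynomial number of iterations. Balance at $z$ is preserved throughout because each splitting reduces $x(e)$ and $x(f)$ by the same amount $\gamma$ and adds a new edge skipping $z$. When no edge incident to $z$ carries positive weight, the desired list of triples has been produced.

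The main obstacle will be the existence proof in the directed setting, which is genuinely more subtle than its undirected analogue: the uncrossing step requires carefully handling the asymmetry between $\delta^-$ and $\delta^+$ and exploiting the balance at $z$ to pair up incoming and outgoing edges across the laminar family. On the algorithmic side, the main care is ensuring that certifying admissibility and computing maximal $\gamma$ can both be reduced cleanly to max-flow, so the overall running time is polynomial in the graph size.
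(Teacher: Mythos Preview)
The paper does not prove this statement at all: it is quoted verbatim as Theorem~3.3 from \cite{Traub_Vygen_2024} (attributed there to work of Jackson and Frank), and is used purely as a black box. So there is no proof in the paper to compare your proposal against.

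That said, a brief remark on your sketch as a standalone argument. The existence part is the right plan (Mader's directed splitting-off via uncrossing tight cuts at a balanced vertex), and the reduction of the admissibility test and the computation of the maximal $\gamma$ to max-flow is also standard. The one place where your outline is too loose is the termination bound: you write that each maximal split either zeroes an incident edge or ``increases the number of tight $T$-separating cuts through $z$,'' and call this a polynomial bound by ``standard accounting.'' But the number of such cuts can be exponential, so this does not by itself give polynomial termination. The usual fix is either to observe that each maximal split makes some \emph{additional pair} $(e,f)$ blocked (and there are only $|\delta^-(z)|\cdot|\delta^+(z)|$ pairs), or to track the support size together with a laminar family of tight cuts, both of which yield a polynomial bound. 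You should make this step explicit if you actually want a proof rather than a plan.
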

	
	We will apply splitting off to solutions of the \ref{eq:atspp-lp} relaxation. 
	Thus, we use the following variant, which is a direct consequence of \cref{thm:splitting_off}.
	
	\begin{theorem}\label{thm:splitting_off_path}
		Let $G=(V,E)$ be a digraph and let $s,t \in V$. 
		Let $x: E \rightarrow \mathbb{R}_{\ge 0}$ satisfying
		\begin{align*}
			x(\delta^+(v)) - x(\delta^-(v))  =  \begin{cases}
				\phantom{-}1 & v = s \\
				-1 & v = t \\
				\phantom{-}0 & \text{else} 
			\end{cases}
		\end{align*}
		for all $v \in V$.
		Let $T \subseteq V \setminus \{s,t\}$  and $z \in V \setminus (\{s,t\} \cup T)$. 
		Let $\lambda \in (0,2]$ and 
		\begin{equation}\label{eq:splitting_off_path}
			x(\delta(U)) \ge \lambda \quad \forall U \subseteq V \setminus \{s,t\} : T \cap U \neq \emptyset \enspace.
		\end{equation}
		Then one can compute in polynomial time a list of triples $(e_i,f_i,\gamma_i) \in \delta^-(z) \times \delta^+(z) \times \mathbb{R}_{>0}$, $i=1,\dots,k$ such that splitting off all $(e_i,f_i,\gamma_i)$ maintains \eqref{eq:splitting_off_path} and leads to $x(e) = 0$ for all edges $e$ incident to $z$. Here splitting off $(e,f,\gamma)$ means reducing $x(e)$ and $x(f)$ by $\gamma$ and adding an edge with weight $\gamma$ from the tail of $e$ to the head of $f$. 
	\end{theorem}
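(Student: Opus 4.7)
The plan is to reduce \cref{thm:splitting_off_path} to the circulation version \cref{thm:splitting_off} by converting the $s$-$t$ flow into a circulation. I would construct an auxiliary digraph $G'$ obtained from $G$ by adding one new edge $(t,s)$ of weight $1$, yielding a weight function $x'$ on $E[G']$. Since $x$ has net outflow $+1$ at $s$, net inflow $+1$ at $t$, and flow conservation elsewhere, the modified $x'$ satisfies $x'(\delta^-(v)) = x'(\delta^+(v))$ for every $v$. I would then invoke \cref{thm:splitting_off} on the quadruple $(G', x', T', z)$ with augmented terminal set $T' := T \cup \{s,t\}$; note $z \notin T'$ because $z \notin \{s,t\} \cup T$. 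The list of triples returned only touches edges incident to $z$, and since $z \neq s, t$, it never involves the added edge $(t,s)$, so we can discard that edge at the end and obtain a valid splitting sequence in the original graph.

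The main task is to verify the cut hypothesis required by \cref{thm:splitting_off}, namely $x'(\delta(U)) \ge \lambda$ for every $U \subsetneq V$ with $T' \cap U \ne \emptyset$ and $T' \setminus U \ne \emptyset$. I would split into three cases by $\{s,t\} \cap U$. If $\{s,t\} \cap U = \emptyset$, the set $U$ already satisfies the hypothesis of our theorem (since $T \cap U = T' \cap U \ne \emptyset$) and the new edge does not cross $\delta(U)$, so $x'(\delta(U)) = x(\delta(U)) \ge \lambda$. If $\{s,t\} \subseteq U$, then $T \setminus U = T' \setminus U \ne \emptyset$, the complement $V \setminus U$ lies in $V \setminus \{s,t\}$ and meets $T$, and again $(t,s)$ sits inside $U$; applying \eqref{eq:splitting_off_path} to $V \setminus U$ and using $\delta(U) = \delta(V \setminus U)$ gives the bound.

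The delicate case is when exactly one of $s,t$ belongs to $U$. The new edge $(t,s)$ now crosses $\delta(U)$ and contributes $1$, while flow balance of the original $x$ forces $|x(\delta^+(U)) - x(\delta^-(U))| = 1$, hence $x(\delta(U)) \ge 1$. Combining the two observations yields $x'(\delta(U)) \ge 2 \ge \lambda$. This is the main obstacle of the proof: the hypothesis provides no a priori bound on cuts separating $s$ from $t$, and the argument only closes because the theorem assumes $\lambda \le 2$, exactly enough to absorb the $+1$ from the added edge into the automatic $+1$ guaranteed by flow balance. With all three cases verified, \cref{thm:splitting_off} produces the desired splitting list for $(G',x',T',z)$; restricting back to $U \subseteq V \setminus \{s,t\}$ with $T \cap U \ne \emptyset$, where the added edge plays no role, gives $x(\delta(U)) = x'(\delta(U)) \ge \lambda$ after splitting, which is precisely \eqref{eq:splitting_off_path}. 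Since the triples computed by \cref{thm:splitting_off} are returned in polynomial time and only involve edges of $G$, this finishes the reduction.
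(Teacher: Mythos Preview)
Your proof is correct and follows essentially the same approach as the paper: add a back-edge $(t,s)$ of weight $1$ to turn $x$ into a circulation and then invoke \cref{thm:splitting_off}. The only difference is that you apply \cref{thm:splitting_off} with the enlarged terminal set $T' = T \cup \{s,t\}$ rather than just $T$; this is arguably the cleaner choice, since every $U \subseteq V \setminus \{s,t\}$ meeting $T$ then automatically separates $T'$, so the preservation of \eqref{eq:splitting_off_path} after splitting is immediate.
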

	
	\begin{proof}
		Increase $x((t,s))$ by $1$ and let $x'$ denote the resulting edge weights.
		Note that $x'$ satisfies \eqref{eq:splitting_off} since $x$ sends one unit of flow from $s$ to $t$ and $\lambda \le 2$.
		Now, apply \cref{thm:splitting_off} to $x'$, $\lambda$, $T$, and $z$. 
		
		Note that splitting off the resulting list of triples $(e_i,f_i,\gamma_i) \in \delta^-(z) \times \delta^+(z) \times \mathbb{R}_{>0}$, $i=1,\dots,k$ leads to $x(e) = 0$ for each edge $e$ incident to $z$ since $x(e)=x'(e)$ for each such $e$.
		Moreover, \eqref{eq:splitting_off_path} is still satisfied since $x(\delta(U)) = x'(\delta(U))$ for each $U \subseteq V \setminus \{s,t\}$.
	\end{proof}
	
	\section{Rounding Algorithm}\label{sec: rounding}
	
	In this section we prove \cref{thm:rounding}.
	Let $(V \cup \{s,\target\},c)$ be an instance of \textsc{Directed Latency with Target} and let $(x,z)$ be a feasible solution to the \ref{eq:refined-LP}. We show how to round $(x,z)$ to get a feasible integral solution.
	
	Let $\delta \in (0,1)$ be some constant that we will choose later.
	Recall $A_{\textnormal{tour}}\subseteq [q]$ is a set of indices of tour-intervals, and $I_\textnormal{tour} := \bigcup_{i \in A_\textnormal{tour}} I_i$.
	Let $V_{\textnormal{tour}}$ be the set of vertices $v \in V$ with $\sum_{t \in I_{\textnormal{tour}}} x_{v,t} \ge \delta$.
	For ease of notation, we write $\ub(t) := t_i$ for $i \in [q]$ and $t \in I_i$. With this notation, we can write the LP objective as $\min \sum_{v \in V, t \in [T]} \ub(t) x_{v,t}$.
	
	We start by constructing tours rooted in $(r_i)_{i \in A_{\textnormal{tour}}}$ covering all vertices in $V_{\textnormal{tour}}$. 
	The proof of the following lemma is based on the observation that, thanks to our new LP constraints, the time-aggregation of $z$ restricted to a tour-interval is a circulation.
	Thus, by simply scaling $z$ on tour-intervals, we can boost the extent to which vertices are visited within tour-intervals.
	Recall that we refer to cycles as tours.
	
	\begin{lemma}\label{lemma:rounding-tour-buckets}
		Let $\rho_1 \in (0,1)$. 
		We can efficiently compute tours $(T_i)_{i \in A_{\textnormal{tour}}}$ so that $V_{\textnormal{tour}} \subseteq \bigcup_{i \in A_{\textnormal{tour}}} V[T_i]$ and for each $i \in A_{\textnormal{tour}}$
		\begin{enumerate}
			\item\label{item:root} $r_i \in V[T_i]$, 
			\item\label{item:latency_tour_buckets} $\sum_{t \in [T]} \ub(t) x_{v,t} \ge (1-\rho_1) \delta \cdot t_i$ for each $v \in V[T_i] \cap V$, and
			\item\label{item:cost_tour_buckets} $c(T_i) \le \left(1 + \frac{\atspp}{\rho_1\delta} \right) \cdot t_i$,
		\end{enumerate}
		where $\atspp$ denotes the approximation guarantee of the best-known LP-relative approximation algorithm for \textsc{Asymmetric Path TSP}. 
	\end{lemma}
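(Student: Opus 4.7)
The plan is to define buckets $B_i$ for each tour-interval via a latency-style threshold on the tour-visit mass, extract from $z$ a fractional $(s, r_i)$-flow of bounded cost that covers $B_i$, round it via the LP-relative \textsc{Asymmetric Path TSP} approximation, and close the resulting path into a tour through $r_i$.

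For each $v \in V_{\textnormal{tour}}$ I would set $t_{\textnormal{tour}}(v) := \min\{t \ge 1 : \sum_{t' \le t,\, t' \in I_{\textnormal{tour}}} x_{v,t'} \ge \rho_1 \delta\}$ and $B_i := \{v \in V_{\textnormal{tour}} : t_{\textnormal{tour}}(v) \in I_i\}$. These partition $V_{\textnormal{tour}}$, so the covering part of the lemma follows from $B_i \subseteq V[T_i]$. Property \ref{item:latency_tour_buckets} is immediate: for $v \in B_i$, at least $(1-\rho_1)\delta$ of its total mass sits at times $t \ge t_{\textnormal{tour}}(v) \ge t_{i-1}$, each of which has $\ub(t) \ge t_i$, so $\sum_{t \in [T]} \ub(t) x_{v,t} \ge (1-\rho_1)\delta\, t_i$.

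The key structural consequence of the LP is that the constraints $z_e = 0$ for $e \in F_{\roots}$ force every edge of $z$ crossing the boundary time $t_i$ to have tail $r_i$. Hence the time-aggregation $f_i$ of $z$ restricted to edges with both endpoints at times in $[0, t_i - 1]$ is a flow of value exactly $1$ from $s$ to $r_i$ in the complete digraph on $V \cup \roots \cup \{s, \target\}$; its cost $\sum_e c(u,v) z_e$ is at most $\sum_e (t'-t) z_e$, which equals the integral over $\tau \in [0, t_i - 1]$ of the total flow-in-progress at time $\tau$ and is thus at most $t_i$. Moreover, for each $v \in B_i$ the flow of $f_i$ through $v$ is $\sum_{t < t_i} x_{v,t} \ge \rho_1 \delta$ by the definition of $t_{\textnormal{tour}}$. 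Scaling $f_i$ by $1/(\rho_1 \delta)$ yields an $(s, r_i)$-flow of value $1/(\rho_1\delta)$, cost at most $t_i/(\rho_1\delta)$, and flow $\ge 1$ through each $v \in B_i$; a single invocation of \cref{thm:splitting_off_path} with $T = B_i$ eliminates every vertex outside $\{s, r_i\} \cup B_i$ while preserving cut value $\ge 2$ across every $U$ meeting $B_i$ (for any $v \in U \cap B_i$, the $\ge 1$ units of $s$-to-$v$ flow must cross $\delta^-(U)$ and the $\ge 1$ units of $v$-to-$r_i$ flow must cross $\delta^+(U)$), and, by the triangle inequality, preserves cost. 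Interpreted as a feasible solution to \ref{eq:atspp-lp} on $\{s, r_i\} \cup B_i$ from $s$ to $r_i$ with the appropriate $\rho$, \cref{thm:ATSPP-rho} rounds it to an integral $s$-$r_i$-path $Q_i$ through $B_i$ of cost at most $\atspp \cdot t_i/(\rho_1\delta)$.

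Finally, I form $T_i$ by deleting $s$ from $Q_i$ and inserting the closing edge $(r_i, v_1)$, where $v_1$ is the successor of $s$ on $Q_i$; the extra cost is at most $c(r_i, v_1) - c(s, v_1)$. The main obstacle is bounding this closing increment by $t_i$, since the asymmetric metric does not hand us $c(r_i, s) \le t_i$ from $c(s, r_i) \le t_i$. I expect to resolve it cleanly by recasting the rounding step from the outset as an $(r_i, r_i')$-path instance on a colocated duplicate $r_i'$: the fractional flow is built by composing $f_i$ with the mirror $r_i$-to-$\target$ flow extracted from $z$ on $[t_i, T]$, and the $+ t_i$ additive in the lemma's bound absorbs the cost of that composition, yielding $c(T_i) \le (1 + \atspp/(\rho_1\delta))\, t_i$ as required.
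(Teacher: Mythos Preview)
Your bucketing via $t_{\textnormal{tour}}(v)$ and your argument for property~\ref{item:latency_tour_buckets} are the same as in the paper. The gap is in how you get a feasible \ref{eq:atspp-lp} solution.

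Scaling the entire $s$-$r_i$-flow $f_i$ by $1/(\rho_1\delta)$ produces a flow of value $1/(\rho_1\delta)$, not $1$. Neither \cref{thm:splitting_off_path} nor the \ref{eq:atspp-lp} accepts this: both require the net flow out of $s$ to be exactly~$1$. If you rescale back to value~$1$ after splitting off, the cut constraints drop to $x(\delta(U)) \ge 2\rho_1\delta$, i.e.\ $\rho = \rho_1\delta$, which is far below the threshold $\rho > \tfrac12$ of \cref{thm:ATSPP-rho}; in particular you cannot invoke the $\atspp$ bound. There is no generic way to strip $1/(\rho_1\delta)-1$ units of $s$-$r_i$-flow while keeping coverage $\ge 1$ on every $v \in B_i$. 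The paper's fix is precisely the structural point you stated but did not exploit: because the $F_{\roots}$ constraints force entry and exit of each tour interval $I_j$ through $r_j$, the time-aggregation $f(z_{I_j})$ is a \emph{circulation}. Hence
\[
f_i \;:=\; f(z_{[0,t_i)}) \;+\; \Bigl(\tfrac{1}{\rho_1\delta}-1\Bigr)\sum_{j \in A_{\textnormal{tour}},\, j \le i} f(z_{I_j})
\]
still has net value exactly~$1$ from $s$ to $r_i$, has cost at most $t_i/(\rho_1\delta)$, and now covers each $v \in B_i$ to extent $\ge 1$. This gives a genuine $\rho=1$ instance and justifies using $\atspp$.

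Your closing step is not an obstacle at all, and your proposed detour through the $r_i$-to-$\target$ flow on $[t_i,T]$ would not yield an $O(t_i)$ bound. The successor $v_1$ of $s$ on $Q_i$ lies in $B_i$, and by your own definition $t_{\textnormal{tour}}(v_1) \in I_i$, so $x_{v_1,t} > 0$ for some $t \in I_i$. The $F_{\roots}$ constraints then force the $z$-flow reaching $(v_1,t)$ to pass through $r_i$ at some earlier time, giving $c(r_i,v_1) < t_i$ directly. This is exactly how the paper closes the tour.
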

	
	\begin{proof}
		For each $i \in A_{\textnormal{tour}}$, let $B_i$ be the set of vertices $v \in V_{\textnormal{tour}}$ for which $i$  is the smallest index in $A_{\textnormal{tour}}$ with 
		\begin{equation*}
			\sum_{j \in A_{\textnormal{tour}}: j \le i} \sum_{t \in I_j} x_{v,t} \ge \rho_1 \delta \enspace .   
		\end{equation*}
		Then, for $v \in B_i$ we have
		\begin{equation*}
			\sum_{t \in [T]} \ub(t) x_{v,t} \ge \sum_{j \in A_{\textnormal{tour}}: j \geq i} \sum_{t \in I_j} t_i x_{v,t} \ge (1-\rho_1) \delta t_i \enspace.
		\end{equation*}
		
		It remains to show that we can efficiently construct a tour on $B_i \cup \{r_i\}$ of length at most $\left(1 + \frac{\atspp}{\rho_1\delta} \right) \cdot t_i$.
		
		For time interval $I$, let $z_I$ denote the restriction of $z$ to $I$, i.e., restrict $z$ to edges $\big((u,t),(v,t')\big) \in E[G_T]$ such that $t,t'\in I$.
		
		Now, for $i \in A_{\textnormal{tour}}$ consider the sum of the time-aggregated flows
		\begin{equation*}
			f_i := f(z_{[0,t_i)}) + \left( \frac{1}{\rho_1 \delta} -1 \right) \sum_{j \in A_{\textnormal{tour}}: j \le i} f(z_{I_j}) \enspace.
		\end{equation*}
		Clearly, $c(f_i) \le \frac{t_i}{\rho_1\delta}$. Moreover, because of the constraint $z_e=0 \enspace \forall e\in F_{\roots}$ we have that $f(z_{I_j})$ is a circulation for all $j\in A_{\textnormal{tour}}$. Hence $f_i$ is an $s$-$r_i$-flow of value $1$.
		Note that via splitting off $f_i$, we get a feasible solution to the \ref{eq:atspp-lp} for $\rho =1$ on $V_i := B_i \cup \{r_i,s\}$ with starting point $s$ and endpoint $r_i$ of cost at most $\frac{t_i}{\rho_1\delta}$.
		Indeed, we can iteratively apply \cref{thm:splitting_off_path} with $\lambda = 2$, $T = B_i$ and $z \in (V \cup \roots) \setminus V_i$ to split off all vertices in $(V \cup \roots) \setminus V_i$.
		Hence, we can compute an $s$-$r_i$-path $P_i$ visiting all vertices in $B_i$ of length at most $\frac{\atspp t_i}{\rho_1\delta}$.
		
		Note that for each $u \in B_i$ there is some $t \in I_i$ with $x_{u,t}>0$. 
		Thus, by the LP constraints, $z$ sends some flow from $(r_i,t')$ to $(u,t)$ with $t' \le t < t_i$, yielding $c(r_i, u) < t_i$.
		Hence, we can turn $P_i$ into a tour on $B_i \cup \{r_i\}$ by adding the edge $(r_i, u)$ and deleting the edge $(s,u)$, where $u$ is the vertex directly visited after $s$ on $P_i$. 
		This increases the length of $P_i$ by at most $t_i$, finishing the proof.
	\end{proof}
	
	Using \cref{lemma:rounding-tour-buckets}, we can construct a path starting in $s$ and visiting all vertices in $V_{\textnormal{tour}}$ so that its total latency is not much larger than what the LP solution pays for vertices in $V_{\textnormal{tour}}$.
	It remains to show how to include the vertices in $V \setminus V_{\textnormal{tour}}$.
	
	For a set of paths $(P_i)_{i \in [q] \setminus A_{\textnormal{tour}}}$ (possibly $P_i$ does not exist for some indices $i$ in which case we write $V[P_i]=\emptyset$), and $i \in [q] \setminus A_{\textnormal{tour}}$ with $V[P_i] \neq \emptyset$, we denote by $\textnormal{succ}(i)$ the minimal $j > i$ such that $j \in A_{\textnormal{tour}}$ or $V[P_j] \neq \emptyset$. 
	We set $\textnormal{succ}(i) = q+1$ if there is no such $j$.
	Analogously, we denote by $\textnormal{pred}(i)$ the maximal index $j < i$ such that $j \in A_{\textnormal{tour}}$ or $V[P_j] \neq \emptyset$. 
	We set $\textnormal{pred}(i) = 0$ if there is no such $j$.\footnote{The notion of $\textnormal{succ}(i)$ and $\textnormal{pred}(i)$ are needed because in the proof of the next lemma we might end up with some empty buckets so we want to skip these buckets. Hence, on a first reading, we encourage the reader to assume that all buckets are non-empty. In this case, $\textnormal{succ}(i)$ becomes $i+1$ and similarly $\textnormal{pred}(i)$ becomes $i-1$.}
	
	\begin{lemma}\label{lemma:rounding-nontour-buckets}
		Let $\rho_2 \in (0,1)$ with $\rho_2(1-\delta) - 3\delta > \frac12$. 
		We can efficiently compute paths $(P_i)_{i \in [q] \setminus A_{\textnormal{tour}}}$ so that $\bigcup_{i \in [q] \setminus A_{\textnormal{tour}}} V[P_i] = V \setminus V_{\textnormal{tour}}$ and for each $i \in [q] \setminus A_{\textnormal{tour}}$ with $V[P_i] \neq \emptyset$,
		\begin{enumerate}
			\item\label{item:short_edge} if $\textnormal{succ}(i) \in [q] \setminus A_{\textnormal{tour}}$, then $c(v_i,u_{\textnormal{succ}(i)}) \le 2 t_{\textnormal{succ}(i)}$ for the endpoint $v_i$ of $P_i$ and the starting point $u_{\textnormal{succ}(i)}$ of $P_{\textnormal{succ}(i)}$,
			\item\label{item:frac_visited_endpoint} $\sum_{t \in [t_i-1]} x_{v_i,t} >0$ for the endpoint $v_i$ of $P_i$,
			\item\label{item:frac_visited_startpoint} if $i=1$ or $\textnormal{pred}(i) \in A_{\textnormal{tour}}$, then $c(r_{\textnormal{pred}(i)},u_i) \le t_i$ for the starting point $u_i$ of $P_i$ (using $r_0 = s$),
			\item\label{item:latency} $\sum_{t \in [T]} \ub(t) x_{v,t} \ge (1-\rho_2) (1-\delta) \cdot t_i$ for each $v \in V[P_i]$, and
			\item\label{item:cost} $c(P_i) \le \frac{\psi}{2\rho_2(1-\delta) - 6\delta -1} \cdot t_i$,
		\end{enumerate}
		where $\psi \le 1 + 32 \atsp$ is as in \cref{thm:ATSPP-rho}. 
	\end{lemma}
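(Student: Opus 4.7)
The plan is to refine the non-tour buckets via the new LP constraints, build an \ref{eq:atspp-lp} instance on each bucket using the restricted flow, and then round with \cref{thm:ATSPP-rho}; the new constraint \eqref{eq:ordered-lp-constraint2} together with a flow-decomposition argument will handle property (i).

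First I would define $t_{\textnormal{non-tour}}(v)$ as the smallest $t$ with $\sum_{t' \in [t] \setminus I_{\textnormal{tour}}} x_{v,t'} \geq \rho_2(1-\delta)$ for each $v \in V \setminus V_{\textnormal{tour}}$, and preliminary buckets $B_i := \{v : t_{\textnormal{non-tour}}(v) \in I_i\}$ for $i \in [q] \setminus A_{\textnormal{tour}}$. For each such $i$ let $W_i$ collect those $v$ (currently in some later bucket) for which $(u,v)$ fails to be $t_{i+1}$-short for some $u \in B_i$. Assigning each $v \in W_i$ to the smallest applicable index yields the final buckets $\overline{B}_i$. Plugging this into \eqref{eq:ordered-lp-constraint2} and using $u,v \notin V_{\textnormal{tour}}$ (each contributing at most $\delta$ to $\sum_{t \in I_{\textnormal{tour}}} x_{\cdot,t}$) gives $\sum_{t \in [t_i-1]} x_{v,t} \geq \rho_2(1-\delta) - 2\delta$ for every $v \in \overline{B}_i$. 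Note that $B_i = \emptyset$ forces $W_i = \emptyset$, so only the non-empty $\overline{B}_i$ require work.

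Next, for each non-empty $\overline{B}_i$, I would enumerate over pairs $(u_i, v_i) \in B_i \times B_i$ (polynomially many) and construct an \ref{eq:atspp-lp} instance for $\rho := \rho_2(1-\delta) - 3\delta > \tfrac12$ on $\{u_i, v_i\} \cup \overline{B}_i$ with source $u_i$ and sink $v_i$. Taking $f(z_{[0, t_i)})$ as a base flow (cost $\leq t_i$) and scaling by $\tfrac{1}{\rho_2(1-\delta) - 3\delta}$ ensures each $v \in \overline{B}_i$ is covered to extent at least $2\rho$ on every cut; a small rerouting (auxiliary edges to redirect the source from $s$ to $u_i$ and to collect excess into $v_i$, each contributing a constant multiple of $t_i$) yields a feasible \ref{eq:atspp-lp} solution. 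Applying \cref{thm:splitting_off_path} with $T = \overline{B}_i$ and $\lambda = 2\rho$ eliminates all vertices outside $\{u_i, v_i\} \cup \overline{B}_i$ while preserving coverage, and \cref{thm:ATSPP-rho} then produces $P_i$ of length at most $\tfrac{\psi}{2\rho_2(1-\delta) - 6\delta - 1} t_i$, establishing property (v). Properties (ii)--(iv) then follow by inspection: (ii) from $\sum_{t<t_i} x_{v_i,t} \geq \rho_2(1-\delta) > 0$ for $v_i \in B_i$; (iii) from $x_{u_i, t_{\textnormal{non-tour}}(u_i)} > 0$ with $t_{\textnormal{non-tour}}(u_i) \in I_i$ (which is disjoint from $I_{\textnormal{tour}}$); and (iv) from $\sum_{t \geq t_{i-1}} x_{v,t} \geq 1 - \rho_2(1-\delta) - \delta$ combined with $\ub(t) \geq t_i$ on this range.

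The main obstacle is property (i). For consecutive non-tour intervals $i$ and $i+1$ with endpoint $v_i \in B_i$ and start $u_{i+1} \in B_{i+1}$, the construction of $\overline{B}_i$ (specifically $u_{i+1} \notin W_i$) guarantees that $(v_i, u_{i+1})$ is $t_{i+1}$-short, so $c(v_i, u_{i+1}) \leq \max\{c(u_{i+1}, v_i), t_{i+1}\}$. To control $c(u_{i+1}, v_i)$ I would run a flow-decomposition argument: since $\sum_{t < t_i} x_{v_i,t} \geq \rho_2(1-\delta)$ and $\sum_{t < t_{i+1}} x_{u_{i+1},t} \geq \rho_2(1-\delta) - 2\delta$ sum to more than $1$ under $\rho_2(1-\delta) - \delta > \tfrac12$, decomposing $z$ into a convex combination of $(s,0)$-$(\target, T+1)$-paths in $G_T$ gives some path visiting both $v_i$ (at time $<t_i$) and $u_{i+1}$ (at time $<t_{i+1}$); whichever order they appear in, the corresponding subpath has $c$-cost at most $t_{i+1}$. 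If $v_i$ comes first we conclude $c(v_i, u_{i+1}) \leq t_{i+1}$ directly; otherwise $c(u_{i+1}, v_i) \leq t_i < t_{i+1}$, and $t_{i+1}$-shortness still gives $c(v_i, u_{i+1}) \leq t_{i+1} \leq 2 t_{i+1}$. The care required to handle buckets $\overline{B}_{i+1}$ whose starting vertex may have been moved in from some $W_{i+1}$ (so $u_{i+1}$ originates in a later original bucket) will be the main bookkeeping challenge, but the choice $u_{i+1} \in B_{i+1}$ in our enumeration sidesteps it.
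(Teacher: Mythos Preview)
Your proposal has a genuine gap in the path-construction step. You want the endpoints of $P_i$ to lie in $B_i$, and you plan to achieve this by ``rerouting'' the time-aggregated flow $f(z_{[0,t_i)})$ from its natural source $s$ to a chosen $u_i$ and collecting its diffuse sinks into a chosen $v_i$, at cost ``a constant multiple of $t_i$''. But in an asymmetric metric you have no bound on $c(u_i,s)$ (only on $c(s,u_i)$), so there is no cheap way to make $u_i$ the source; adding a zero-cost artificial edge $(u_i,s)$ breaks the triangle inequality, and after splitting off $s$ the integral path returned by \cref{thm:ATSPP-rho} may use edges whose true metric cost is unbounded. The extra scaling you insert is also incoherent with then invoking \cref{thm:ATSPP-rho} for that same $\rho$: once you scale by $1/\rho$ the LP cost is already $t_i/\rho$ and you would be in the $\rho=1$ regime, giving a different constant. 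Either way the exact bound in \ref{item:cost} is not recovered. A further issue: $B_i\cap\overline{B}_i$ can be empty (all of $B_i$ may have been swallowed by earlier $W_j$'s), in which case your enumeration over $(u_i,v_i)\in B_i\times B_i$ has no valid candidate that actually lies on $P_i$.

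The paper sidesteps all of this. It keeps $s$ as source and introduces an artificial sink $v^*$ with zero-cost \emph{incoming} edges (this direction is harmless), obtains an $s$--$v^*$ instance of \ref{eq:atspp-lp} of cost at most $t_i$, splits off everything outside $\overline{B}_i\cup\{s,v^*\}$, applies \cref{thm:ATSPP-rho}, and drops the first and last edges. The endpoints of $P_i$ are therefore \emph{arbitrary} vertices of $\overline{B}_i$. To make \ref{item:short_edge}--\ref{item:frac_visited_startpoint} hold for arbitrary endpoints, the paper uses a three-case definition of $W_i$, not just your case: when $i+1\in A_{\textnormal{tour}}$ it instead puts into $W_i$ every $w\in V\setminus V_{\textnormal{tour}}$ with $\sum_{t<t_i}x_{w,t}\ge\rho_2(1-\delta)-3\delta$, and if additionally $i-1\in A_{\textnormal{tour}}$ it further requires $\sum_{t\in I_i}x_{w,t}>0$. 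These extra cases are exactly what make \ref{item:frac_visited_startpoint} go through when the start vertex lands in $W_i$ rather than $B_i$. For \ref{item:short_edge} the paper then argues in two hops: if $v_i\in W_i$ there is some $v'\in B_i$ with $c(v_i,v')<t_{i+1}$ (from the not-$t_{i+1}$-short witness), and $c(v',u_{i+1})\le t_{i+1}$ by your flow argument combined with $u_{i+1}\notin W_i$; this two-hop detour is precisely why the stated bound is $2t_{i+1}$ rather than the $t_{i+1}$ your endpoint-forcing approach would have given.
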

	We remark that the first three properties are used later to bound the extra cost resulting from concatenating the tours and paths given by \cref{lemma:rounding-tour-buckets} and \cref{lemma:rounding-nontour-buckets}, respectively.
	To guarantee these properties, we use some refined construction of buckets, leveraging the last class of constraints in the \ref{eq:refined-LP}.
	
	\begin{proof}
		We start with constructing the vertex sets on which we then construct the paths.
		
		For $i \in [q] \setminus A_{\textnormal{tour}}$, we define $B_i$ to be the vertices  $v \in V \setminus V_{\textnormal{tour}}$ for which $i$  is the smallest index in $[q] \setminus A_{\textnormal{tour}}$ with 
		\begin{equation*}
			\sum_{j \in [q] \setminus A_{\textnormal{tour}}: j \le i} \sum_{t \in I_j} x_{v,t} \ge \rho_2 (1-\delta) \enspace .   
		\end{equation*}
		Note that $\bigcup_{i \in [q] \setminus A_\textnormal{tour}} B_i = V \setminus V_\textnormal{tour}$ by definition of $V_\textnormal{tour}$. The sets $B_i$ are preliminary buckets. In the following, we describe how we recursively define the final buckets $\overline{B}_i$.
		
		For $i \in [q] \setminus A_{\textnormal{tour}}$,
		we denote by $\textnormal{pred}_{\overline{B}}(i)$ the maximal index $j < i$ such that $j \in A_{\textnormal{tour}}$ or $\overline{B}_j \neq \emptyset$.
		We set $\textnormal{pred}_{\overline{B}}(i) = 0$ if there is no such $j$. 
		Note that, as we will exploit later, we have $\textnormal{pred}(i)=\textnormal{pred}_{\overline{B}}(i)$ for any paths on these final buckets.
		
		Let $W_i \subseteq V \setminus V_{\textnormal{tour}}$ be defined as follows:
		\begin{stepsalph}
			\item\label{item:W-a} If $i+1 \in A_{\textnormal{tour}}$ and $\textnormal{pred}_{\overline{B}}(i) \in A_{\textnormal{tour}}$, let $W_i$ be the set of vertices $w \in V \setminus V_{\textnormal{tour}}$ for which $\sum_{t \in I_i} x_{w,t} >0$ and
			\begin{equation*} 
				\sum_{t \in [t_i-1]} x_{w,t} \ge \rho_2 (1-\delta) - 3 \delta \enspace.
			\end{equation*}  
			\item\label{item:W-b} Else, if $i+1 \in A_{\textnormal{tour}}$, let $W_i$ be the set of vertices $w \in V \setminus V_{\textnormal{tour}}$ for which 
			\begin{equation*} 
				\sum_{t \in [t_i-1]} x_{w,t} \ge \rho_2 (1-\delta) - 3 \delta \enspace.
			\end{equation*}  
			\item\label{item:W-c} Otherwise, let $W_i$ be the set of vertices $w \in V \setminus V_{\textnormal{tour}}$ for which there is some $v \in B_i$ so that $(v,w)$ is not $t_{i+1}$-short.
		\end{stepsalph}
		We define 
		\begin{equation*}
			\overline{B}_i := (B_i \cup W_i) \setminus \bigcup_{j \in [q] \setminus A_{\textnormal{tour}}: j < i} \overline{B}_j \enspace.
		\end{equation*}
		
		We remark that some of these sets might be empty.
		We start with the following observation.
		
		\begin{claim}\label{claim:frac_visited}
			Let $i \in [q] \setminus A_{\textnormal{tour}}$ and $v \in \overline{B}_i$.
			Then, $\sum_{t \in [t_i-1]} x_{v,t} \ge \rho_2(1-\delta) - 3\delta$.
			If $i+1 \in [q] \setminus A_{\textnormal{tour}}$, then even $\sum_{t \in [t_i-1]} x_{v,t} \ge \rho_2(1-\delta) - 2\delta$.
		\end{claim}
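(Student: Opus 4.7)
The plan is to do a case analysis based on whether $v \in B_i$ or $v \in W_i \setminus B_i$. Since $\overline{B}_i \subseteq B_i \cup W_i$, one of these must hold, and the subtraction of earlier $\overline{B}_j$'s in the definition of $\overline{B}_i$ is irrelevant to the pointwise fractional-visitation bound for $v$.

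First I would handle the easy case $v \in B_i$. By definition of $B_i$, we have
\[
\sum_{j \in [q] \setminus A_{\textnormal{tour}}, j \le i} \sum_{t \in I_j} x_{v,t} \ge \rho_2 (1-\delta),
\]
so trivially $\sum_{t \in [t_i-1]} x_{v,t} \ge \rho_2(1-\delta)$, which already beats both claimed bounds (noting $[t_i-1]$ covers $\bigcup_{j \le i} I_j$ since the only element missing is $t=0$, at which no $x_{v,\cdot}$ variable exists).

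Next I would handle $v \in W_i \setminus B_i$, splitting by which of the three cases \ref{item:W-a}, \ref{item:W-b}, \ref{item:W-c} defines $W_i$. In cases \ref{item:W-a} and \ref{item:W-b} the bound $\sum_{t \in [t_i-1]} x_{v,t} \ge \rho_2(1-\delta) - 3\delta$ is baked into the definition of $W_i$, so nothing to prove. The genuinely interesting subcase is \ref{item:W-c}, which applies precisely when $i+1 \in [q] \setminus A_{\textnormal{tour}}$. There, by definition of $W_i$, there exists $u \in B_i$ such that $(u,v)$ is not $t_{i+1}$-short. Applying the last class of constraints in the \ref{eq:refined-LP} to this pair gives
\[
\sum_{t \in [t_i-1]} (x_{u,t} - x_{v,t}) \le \sum_{t \in I_{\textnormal{tour}}} (x_{u,t} + x_{v,t}).
\]
Since $u \in B_i \subseteq V \setminus V_{\textnormal{tour}}$ and $v \in W_i \subseteq V \setminus V_{\textnormal{tour}}$, each of $\sum_{t \in I_{\textnormal{tour}}} x_{u,t}$ and $\sum_{t \in I_{\textnormal{tour}}} x_{v,t}$ is strictly less than $\delta$, so the right-hand side is at most $2\delta$. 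Combined with $\sum_{t \in [t_i-1]} x_{u,t} \ge \rho_2(1-\delta)$ (from the $v \in B_i$ case applied to $u$), this yields $\sum_{t \in [t_i-1]} x_{v,t} \ge \rho_2(1-\delta) - 2\delta$.

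Finally I would assemble the two claims. The first (weaker) inequality $\rho_2(1-\delta) - 3\delta$ holds in every case. For the second, observe that when $i+1 \in [q] \setminus A_{\textnormal{tour}}$, cases \ref{item:W-a} and \ref{item:W-b} cannot apply (both require $i+1 \in A_{\textnormal{tour}}$), so the only relevant possibilities are $v \in B_i$ or $v \in W_i$ via case \ref{item:W-c}, and both give the stronger bound $\rho_2(1-\delta) - 2\delta$. The only step requiring any care is the use of the LP inequality in case \ref{item:W-c}, where the key observation is that $u$ and $v$ being outside $V_{\textnormal{tour}}$ is exactly what makes the tour-interval slack on the right-hand side cheap; the rest is bookkeeping.
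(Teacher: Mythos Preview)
Your proof is correct and follows essentially the same approach as the paper: handle $v\in B_i$ directly from the definition, handle $v\in W_i$ in cases \ref{item:W-a} and \ref{item:W-b} from the built-in threshold, and in case \ref{item:W-c} invoke the last family of LP constraints together with $u,v\notin V_{\textnormal{tour}}$ to get the stronger bound. One small edge case you glossed over: case \ref{item:W-c} is triggered whenever $i+1\notin A_{\textnormal{tour}}$, which includes $i=q$, but the LP constraint you apply is only stated for $i\in[q-1]$; the paper closes this by observing that $W_q=\emptyset$, so nothing further is needed there.
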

		\begin{proof}[Proof of Claim \ref{claim:frac_visited}]
			Clearly, this holds for each vertex in $B_i$ by definition. 
			If $i+1 \in A_{\textnormal{tour}}$, this holds for each vertex in $W_i$ by definition.
			Moreover, note that $W_q = \emptyset$.
			Thus, it remains to consider the case $v \in W_i$ and $i+1 \in [q] \setminus A_{\textnormal{tour}}$.
			Let $u \in B_i$ so that $(u,v)$ is not $t_{i+1}$-short.
			By the LP constraints, we have
			\begin{align*}
				\sum_{t \in [t_i-1]} x_{v,t}
				& \ge \sum_{t \in [t_i-1]} x_{u,t} - \sum_{t \in I_{\textnormal{tour}}} \left( x_{u,t} + x_{v,t} \right) \\
				& \ge \rho_2 (1-\delta) - 2 \delta \enspace.
			\end{align*}
			
		\end{proof}

		For each $i \in [q] \setminus A_{\textnormal{tour}}$, we will construct a path $P_i$ on $\overline{B}_i$ so that  the resulting set of paths has the desired properties.
		We start with showing \ref{item:short_edge}-\ref{item:latency}.
		
		\begin{claim}\label{claim: four properties}
			For $i \in [q] \setminus A_{\textnormal{tour}}$, let $P_i$ be a path on $\overline{B}_i$. 
			Then, these paths satisfy the properties \ref{item:short_edge}-\ref{item:latency}.
		\end{claim}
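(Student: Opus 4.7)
The plan is to verify each of the four properties using Claim~\ref{claim:frac_visited}, the structure of the refined buckets $\overline{B}_i$, and a flow-path decomposition of $z$ in $G_T$. Properties \ref{item:frac_visited_endpoint} and \ref{item:latency} are quick. Property \ref{item:frac_visited_endpoint} is immediate from Claim~\ref{claim:frac_visited}, which guarantees $\sum_{t \in [t_i-1]} x_{v_i,t} \ge \rho_2(1-\delta) - 3\delta > \tfrac12 > 0$. For property \ref{item:latency}, since $v \in \overline{B}_i$ was excluded from every earlier non-tour bucket, applying the \emph{smallest-index} definition of $B_{j^\ast}$ (for the largest $j^\ast < i$ in $[q]\setminus A_{\textnormal{tour}}$) yields $\sum_{j < i,\, j \in [q]\setminus A_{\textnormal{tour}}}\sum_{t\in I_j} x_{v,t} < \rho_2(1-\delta)$, and $v \notin V_{\textnormal{tour}}$ contributes at most $\delta$ on tour-intervals. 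Together this gives $\sum_{j \ge i} \sum_{t \in I_j} x_{v,t} > (1-\rho_2)(1-\delta)$, and multiplying by $t_i \le \ub(t)$ on $I_j$ for $j \ge i$ produces the desired inequality.

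Property \ref{item:short_edge} is the main part. The crucial observation is that $u_{i+1} \in \overline{B}_{i+1}$ and $i \in [q]\setminus A_{\textnormal{tour}}$ force $u_{i+1} \notin \overline{B}_i$, hence $u_{i+1} \notin B_i \cup W_i$. Because $i+1 \in [q]\setminus A_{\textnormal{tour}}$, $W_i$ is built by case~\ref{item:W-c}, so $u_{i+1} \notin W_i$ translates to: $(v', u_{i+1})$ is $t_{i+1}$-short for every $v' \in B_i$. I distinguish the two subcases. If $v_i \in B_i$, the $t_{i+1}$-shortness of $(v_i, u_{i+1})$ combined with a flow-path argument (both $v_i$ and $u_{i+1}$ have mass $>\tfrac12$ before $t_i$ and $t_{i+1}$ respectively) yields either $c(v_i, u_{i+1}) \le t_{i+1}-1$ directly, or $c(u_{i+1}, v_i) \le t_i - 1$ and therefore $c(v_i, u_{i+1}) \le \max\{c(u_{i+1}, v_i), t_{i+1}\} \le t_{i+1}$. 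If instead $v_i \in W_i \setminus B_i$, choose the witness $v' \in B_i$ for $v_i$ (so $c(v', v_i) > t_{i+1}$). A flow-path argument on $v_i, v' \in \overline{B}_i$ then forces $c(v_i, v') \le t_i - 1$, and applying the previous case to $v', u_{i+1}$ yields $c(v', u_{i+1}) \le t_{i+1}$. Triangle inequality closes with $c(v_i, u_{i+1}) \le c(v_i, v') + c(v', u_{i+1}) \le t_i + t_{i+1} \le 2 t_{i+1}$.

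Property \ref{item:frac_visited_startpoint} is obtained by case analysis on the definition of $\overline{B}_i$. When $i = 1$, $I_1 \supseteq [t_1-1]$ and Claim~\ref{claim:frac_visited} applies. When $i-1 \in A_{\textnormal{tour}}$ and $i+1 \in A_{\textnormal{tour}}$ (so $W_i$ is case~\ref{item:W-a}), every $w \in W_i$ satisfies $\sum_{t \in I_i} x_{w,t} > 0$ by definition, and every $w \in B_i$ inherits this from the smallest-index definition. When $i-1 \in A_{\textnormal{tour}}$ and $i+1 \notin A_{\textnormal{tour}}$ ($W_i$ is case~\ref{item:W-c}), we take $u_i \in B_i$ whenever $B_i \neq \emptyset$; otherwise $W_i = \emptyset$ since case~\ref{item:W-c} requires a witness in $B_i$, and then $\overline{B}_i = \emptyset$ so no path is needed.

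The main obstacle is the $v_i \in W_i \setminus B_i$ subcase of property \ref{item:short_edge}: traveling from $v_i$ to $u_{i+1}$ via the witness $v' \in B_i$ requires bounding both $c(v_i, v')$ and $c(v', u_{i+1})$ simultaneously, relying on the delicate interplay between $t_{i+1}$-shortness (inherited through the definition of $W_i$) and the flow-path decomposition of $z$ to keep the total cost within $2t_{i+1}$.
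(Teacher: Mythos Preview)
Your arguments for properties \ref{item:frac_visited_endpoint}, \ref{item:latency}, and \ref{item:short_edge} are correct and essentially match the paper's proof. The gap is in property \ref{item:frac_visited_startpoint}, specifically in the subcase $i-1 \in A_{\textnormal{tour}}$ and $i+1 \notin A_{\textnormal{tour}}$ (so $W_i$ is defined via case~\ref{item:W-c}).

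There you write ``we take $u_i \in B_i$ whenever $B_i \neq \emptyset$''. This does not work, for two reasons. First, the claim is stated for an \emph{arbitrary} path $P_i$ on $\overline{B}_i$; the actual path is produced later by the \ref{eq:atspp-lp} rounding, and you have no control over which vertex of $\overline{B}_i$ becomes its starting point. Second, even if you could choose the start vertex, you would need $u_i \in B_i \cap \overline{B}_i$, and this intersection may be empty while $\overline{B}_i$ is not: a vertex $v \in B_i$ can lie in $W_j$ for some earlier non-tour index $j$ (via any of the three cases), hence in $\overline{B}_j$, and therefore be removed from $\overline{B}_i$. So you must establish $\sum_{t \in I_i} x_{u_i,t} > 0$ for every $u_i \in \overline{B}_i$, including those in $W_i \setminus B_i$.

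The paper handles this by a backward-looking argument: pick the largest $j < i$ that is either of type~\ref{item:W-b}, or of type~\ref{item:W-a} with $\sum_{t \in I_j} x_{u_i,t} > 0$. If no such $j$ exists, one checks that every non-tour $j' < i$ is an isolated type-\ref{item:W-a} interval with zero mass for $u_i$, so all of the mass $\sum_{t \in [t_i-1]\setminus I_{\textnormal{tour}}} x_{u_i,t}$ sits in $I_i$. If such $j$ exists, then $u_i \in \overline{B}_i$ forces $u_i \notin W_j$, which gives the upper bound $\sum_{t \in [t_j-1]} x_{u_i,t} < \rho_2(1-\delta) - 3\delta$; subtracting this (and the tour-mass $\le \delta$) from the lower bound $\sum_{t \in [t_i-1]} x_{u_i,t} \ge \rho_2(1-\delta) - 2\delta$ coming from Claim~\ref{claim:frac_visited} yields $\sum_{t \in I_i} x_{u_i,t} > 0$. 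This is the missing ingredient in your proof.
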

		
		\begin{proof}[Proof of Claim \ref{claim: four properties}]
			Let $i \in [q] \setminus A_{\textnormal{tour}}$ with $\overline{B}_i \neq \emptyset$.
			We start with showing \ref{item:short_edge}.
			So, let $\textnormal{succ}(i) \in [q] \setminus A_{\textnormal{tour}}$.
			Note that this implies $i+1 \in [q] \setminus A_{\textnormal{tour}}$.
			Thus, for $v \in B_i$, we observe that by definition of $W_i$, the edge $(v,u_{\textnormal{succ}(i)})$ is $t_{i+1}$-short.
			Moreover, by \cref{claim:frac_visited}, $\sum_{t \in [t_{\textnormal{succ}(i)}-1]} x_{u,t} \ge \rho_2(1-\delta) - 3\delta > \frac12$ for $u \in \{v, u_{\textnormal{succ}(i)}\}$.
			Hence, $z$ restricted to the time interval $[0,t_{\textnormal{succ}(i)})$ sends some flow from $v$ to $u_{\textnormal{succ}(i)}$ or vice versa.
			Thus, $\min \{c(v, u_{\textnormal{succ}(i)}), c(u_{\textnormal{succ}(i)},v)\} < t_{\textnormal{succ}(i)}$.
			Together, this yields $c(v, u_{\textnormal{succ}(i)}) \le t_{\textnormal{succ}(i)}$.
			In the following we will show that, if $v_i \in W_i$, then there is some $v' \in B_i$ with $c(v_i,v') \le t_{i}$, which then finishes the proof of \ref{item:short_edge}.
			
			So, assume $v_i \in W_i$.
			By definition of $W_i$ there is some $v' \in B_i$ so that the edge $(v',v_i)$ is not $t_{i+1}$-short.
			Thus, $c(v_i,v') < c(v',v_i)$.
			By \cref{claim:frac_visited}, we have $\sum_{t \in [t_{i}-1]} x_{u,t} \ge \rho_2(1-\delta) - 3\delta > \frac12$ for $u \in \{v_i, v'\}$.
			Hence, $z$ restricted to the time interval $[0,t_i)$ sends some flow from $v_i$ to $v'$ or vice versa, yielding $\min\{c(v_i,v'), c(v',v_i)\} \le t_{i}$.
			Together, we get $c(v_i,v') \le t_{i}$ as desired.
			
			Note that \ref{item:frac_visited_endpoint} directly follows by \cref{claim:frac_visited} and $\rho_2(1-\delta) - 3\delta > 0$.
			
			Next, we show \ref{item:frac_visited_startpoint}.
			If $i=1$, then, by \cref{claim:frac_visited} and $\rho_2(1-\delta) - 3\delta > 0$,
			$z$ sends some flow from $s$ to $u_i$ up to time $t_i$.
			So assume $\textnormal{pred}(i) = \textnormal{pred}_{\overline{B}}(i) \in A_\textnormal{tour}$.
			Note that it suffices to show $\sum_{t \in I_i} x_{u_i,t} > 0$ since then,  by the LP-constraints, $z$ sends some flow from $r_{\textnormal{pred}(i)}$ to $u_i$ up to time $t_i$.
			If $u_i \in B_i$, this directly follows by definition of $B_i$. 
			It remains to consider the case $u_i \in W_i$.
			If $i$ falls into category \ref{item:W-a}, we are also done by definition of $W_i$.
			It remains to consider the case that $i$ falls into category \ref{item:W-c}.
			Let $j \in [q]$ with $j < i$ be as large as possible so that either $j$ falls into category \ref{item:W-a} and $\sum_{t \in I_j} x_{u_i,t} >0$, or falls into category \ref{item:W-b}.
			If no such $j$ exists, with the fact that $\textnormal{pred}(i) \in A_{\textnormal{tour}}$, we get $\sum_{t \in I_j} x_{u_i,t} = 0$ for each $j \in [i-1] \setminus A_\textnormal{tour}$.
			Thus, 
			\begin{equation*}
				\sum_{t \in I_i} x_{u_i,t} = \sum_{t \in [t_i-1] \setminus I_\textnormal{tour}} x_{u_i,t} 
				\ge \sum_{t \in [t_i-1]} x_{u_i,t} - \delta 
				\ge \rho_2(1-\delta) - 3\delta 
				>0 \enspace,
			\end{equation*}
			where we used $u_i \in V \setminus V_\textnormal{tour}$ in the first inequality, and \cref{claim:frac_visited} in the second inequality.
			Otherwise, if such $j$ exists, we have $\sum_{t \in I_{\ell}} x_{u_i,t} = 0$ for each $\ell \in [i-1] \setminus A_\textnormal{tour}$ with $\ell > j$ by maximality of $j$.
			Thus, using $u_i \in W_i \setminus W_j$,
			\begin{align*}
				\sum_{t \in I_i} x_{u_i,t} &\ge \sum_{t \in [t_i-1]} x_{u_i,t} - \sum_{t \in [t_j-1]} x_{u_i,t} - \sum_{t \in I_\textnormal{tour}} x_{u_i,t} \\
				&\ge \left( \rho_2(1-\delta) - 2\delta \right) - \sum_{t \in [t_j-1]} x_{u_i,t} - \delta \\
				&> \left( \rho_2(1-\delta) - 2\delta \right) - \left( \rho_2(1-\delta) - 3\delta \right) -\delta \\
				&= 0 \enspace,
			\end{align*}
			where we used  \cref{claim:frac_visited} and $u_i \in V \setminus V_\textnormal{tour}$ in the second inequality, and $u_i \notin W_j$ in the final inequality.
			
			We finally show \ref{item:latency}. Observe that for each $i \in [q] \setminus A_{\textnormal{tour}}$ and each $v \in \overline{B}_i$, we have
			\begin{equation*}
				\sum_{t \in [T]} \ub(t) x_{v,t} \ge \sum_{j \in [q] \setminus A_{\textnormal{tour}}: j \ge i} \sum_{t \in I_j} t_i x_{v,t} \ge (1-\rho_2) (1-\delta) t_i \enspace,
			\end{equation*}
			where the last inequality follows because if $v\in \overline{B}_i$ then $v\in B_j$ for some $j\geq i$ together with the definition of $B_j$'s. This yields \ref{item:latency}.
		\end{proof}
		
		It remains to show that for each $i \in [q] \setminus A_{\textnormal{tour}}$, we can construct a path $P_i$ on $\overline{B}_i$ satisfying \ref{item:cost}.
		
		Consider the time-aggregated flow $f$ of $z$ restricted to the time interval $[0,t_i)$.
		Add an artificial vertex $v^*$ and add edges $e=(u,v^*)$ with $c(e) = 0$ for $u \in V \cup \{s\} \cup \roots$. 
		Set $f_e$ on all new edges so that $f$ is an $s$-$v^*$-flow of value $1$.
		More precisely, for $u \in V \cup \{s\} \cup R$, set 
		\begin{equation*}
			f_{(u,v^*)} := \sum_{e=((u,t'),(v,t)) \in E[G_T] : t' < t_i, t \ge t_i} z_e \enspace.
		\end{equation*}
		
		Clearly, $c(f) \le t_i$.
		Note that each vertex in $\overline{B}_i$ is visited by $f$ to an extent of at least $\rho_2(1-\delta) - 3\delta$ by \cref{claim:frac_visited}.
		Thus, by iteratively applying splitting off, we get a feasible solution $f'$ to the \ref{eq:atspp-lp} for $\rho = \rho_2(1-\delta) - 3\delta$ on $V_i := \overline{B}_i \cup \{s,v^*\}$ with starting point $s$ and endpoint $v^*$ of cost at most $t_i$.
		Indeed, we can iteratively apply \cref{thm:splitting_off_path} with $\lambda = 2\rho$, $T = \overline{B}_i$ and $z \in (V \cup \roots) \setminus V_i$ to split off all vertices in $(V \cup \roots) \setminus V_i$.
		Hence, using \cref{thm:ATSPP-rho}, we can compute an $s$-$v^*$-path visiting all vertices in $\overline{B}_i$ of length at most $\frac{\psi}{2\rho_2(1-\delta) - 6\delta -1} \cdot t_i$.
		Let $P_i$ denote the path on $\overline{B}_i$ resulting from deleting the first and the last edge of this path.
		Note that $P_i$ satisfies (v).
	\end{proof}
	
	Now, we are able to prove \cref{thm:rounding}.
	
	\begin{proof}[Proof of \cref{thm:rounding}]
		Use \cref{lemma:rounding-tour-buckets} and \cref{lemma:rounding-nontour-buckets} to compute a set of tours $(T_i)_{i \in A_{\textnormal{tour}}}$ and a set of paths $(P_i)_{i \in [q] \setminus A_{\textnormal{tour}}}$, respectively.
		Now, for $i \in A_{\textnormal{tour}}$, interpret $T_i$ as a closed walk $P_i$ starting and ending in $r_i$.
		We will construct the final path $P$ by starting in $s$, concatenating all $(P_i)_{i \in [q]}$, and then shortcutting all artificial roots.
		
		Clearly, $V[P] = V \cup \{s\}$. 
		To compute the total latency of $P$, we show that for each $i \in [q]$, the distance of the endpoint $v_{\textnormal{pred}(i)}$ of $P_{\textnormal{pred}(i)}$ (using $v_{0} := s$) to the starting point $u_i$ of $P_i$ is at most $2t_i$. 
		We consider the following case distinction.
		\begin{itemize}
			\item If $i = 1 \in A_{\textnormal{tour}}$, then $u_1 = r_1$.
			By the LP constraints, there must be some $s$-$r_1$-flow in $z$ restricted to $[0,t_1)$, and thus $c(s,r_1) \le t_1$.
			\item If $i \in [q] \setminus A_{\textnormal{tour}}$ and either $i=1$ or $\textnormal{pred}(i) \in  A_{\textnormal{tour}}$, then $c(v_{\textnormal{pred}(i)}, u_i) \le t_i$ by \cref{lemma:rounding-nontour-buckets}~\ref{item:frac_visited_startpoint}.
			\item If $\textnormal{pred}(i),i \in A_{\textnormal{tour}}$, then $v_{\textnormal{pred}(i)} = r_{\textnormal{pred}(i)}$ and $u_i = r_i$.
			By the LP constraints, there must be some $r_{\textnormal{pred}(i)}$-$r_i$-flow in $z$ restricted to $[0,t_i)$, and thus $c(r_{\textnormal{pred}(i)},r_i) \le t_i$.
			\item If $\textnormal{pred}(i),i \in [q] \setminus A_{\textnormal{tour}}$, then $c(v_{\textnormal{pred}(i)}, u_i) \le 2t_i$ by \cref{lemma:rounding-nontour-buckets}~\ref{item:short_edge}.
			\item If $\textnormal{pred}(i) \in [q] \setminus A_{\textnormal{tour}}$ and $i \in A_{\textnormal{tour}}$, note that $ \sum_{t \in [t_i-1]} x_{v_{\textnormal{pred}(i)},t} >0$, as guaranteed by \cref{lemma:rounding-nontour-buckets}~\ref{item:frac_visited_endpoint}. 
			Moreover, $u_i = r_i$. 
			Hence, by the LP constraints, there must be some $v_{\textnormal{pred}(i)}$-$r_i$-flow in $z$ restricted to $[0,t_i)$, and thus $c(v_{\textnormal{pred}(i)},r_i) \le t_i$.
		\end{itemize}
		Hence, for $v \in V[P_i]$, its latency in $P$ can be bounded by
		\begin{align*}
			c_P(v) &\le \sum_{j=1}^i (2t_j + c(P_j)) \\
			&\le \sum_{j=1}^i \max \left\{ 3 + \frac{\atspp}{\rho_1\delta} \enspace, 2 + \frac{\psi}{2\rho_2(1-\delta) - 6\delta -1} \right\} t_j \\
			&< \left(\sum_{j=0}^{\infty} \left(\frac34\right)^{j} \right) \max \left\{ 3 + \frac{\atspp}{\rho_1\delta} \enspace, 2 + \frac{\psi}{2\rho_2(1-\delta) - 6\delta -1} \right\} t_i \\
			&\le 4 \max \left\{ 3 + \frac{\atspp}{\rho_1\delta} \enspace, 2 + \frac{\psi}{2\rho_2(1-\delta) - 6\delta -1} \right\} \\
			&\phantom{=} \cdot \max \left\{ \frac{1}{\delta(1-\rho_1)} \enspace, \frac{1}{(1-\delta)(1-\rho_2)} \right\} \sum_{t \in [T]} \ub(t) x_{v,t} \enspace,
		\end{align*}
		where, in the second and fourth inequality, we used the guarantees given by \cref{lemma:rounding-tour-buckets} and \cref{lemma:rounding-nontour-buckets}, and in the third inequality, we used that $t_{i+1} \ge \frac43 t_{i}$ for each $i \in [q-1]$.
		Using $\atsp, \atspp < 15$  \cite{vygenATSP}, $\psi \le 1 + 32 \atsp$, and choosing $\delta = 0.063$, $\rho_1 = 0.196$, and $\rho_2 = 0.946$, yields the desired bound of $96473$.
	\end{proof}
	
	
	\begingroup
	\setlength{\emergencystretch}{1em}
	\printbibliography
	\endgroup
	
	\appendix
	
	\section{Proof of \cref{thm: reduction to bounded instance}}\label{Appendxi A}
	
	In the proof of Theorem 1 in Appendix of \cite{quasi-poly}, Friggstad and Swamy present a reduction from general instances of \textsc{Directed Latency} to instances where each $c(u,v)$ is a positive integer polynomially bounded in $n$ and $\frac{1}{\eps}$ by losing a factor of $(1+\eps)$ in the approximation. We remark that in their proof, they increase the cost of small edges (even the $0$ cost edges) to positive integer.
	We closely follow their proof, exploiting that their reduction does not change the number of vertices.
	
	\begin{proof}[Proof of \cref{thm: reduction to bounded instance}]
		First, we guarantee $n=2^j-1$ for some $j \in \mathbb{Z}$.
		If this does not hold initially, let $j$ be the smallest integer such that $n<2^j-1$. We add $2^j-n-1$ many vertices colocated at $s$.
		Note that this increases the number of vertices by at most factor $2$. 
		
		Next, we check whether the optimum total latency $\opt$ is zero by checking whether the strongly connected components of the subgraph containing all zero weight edges form a chain.
		If $\opt >0$, we change the distances in $c$ similar to \cite{quasi-poly}, so that each distance in the resulting asymmetric metric $c'$ is a positive integer bounded by a polynomial in $n$ and $\frac{1}{\epsilon}$ by losing a factor of $(1+\epsilon)$ in the approximation. 
		We start by computing a value $\gamma$ so that $\opt \le \gamma \le n^3 \opt$ by computing the smallest value so that the subgraph containing all edges of length at most $\gamma$ admits a walk starting in $s$ and visiting all vertices. 
		Clearly, $\opt \ge \gamma$, and by shortcutting this walk, $\opt \le n^3 \gamma$. Indeed, every edge in the shortcut path has length at most $n \gamma$.
		
		Next, we scale the length of each edge with a factor of $\frac{n^5}{\gamma \epsilon}$ and round it to the next positive integer.
		Note that the optimum total latency of the resulting instance can be bounded by $\opt' < \frac{n^5}{\gamma \epsilon} \opt + n^2$.
		Moreover, we cap the length of each edge at $\frac{2n^5}{\epsilon^2}$. 
		Note that any $\alpha$-approximate solution does not use an edge of length larger than
		\begin{equation*}
			\alpha \opt' < \alpha  \left(\frac{n^5}{\gamma \epsilon} \opt + n^2\right) \le \alpha \left(\frac{n^5}{\epsilon} + n^2\right) \le \frac{2n^5}{\epsilon^2} \enspace,
		\end{equation*}
		where we used $\epsilon \le 1$ and $\alpha \le \frac{1}{\eps}$ in the final inequality.
		Note that after both modifications, the resulting edge weights $c'$ still satisfy the triangle inequality.
		
		Finally, we insert a target $\target$ and add edges of length $1$ from each vertex in $V \cup \{s\}$ to $\target$ and edges of length $\frac{2n^5}{\epsilon}$ from $\target$ to each vertex in $V \cup \{s\}$.
		Note that this does not change the optimum solution value.
		Moreover, each solution to the resulting $\epsilon$-nice instance $(V \cup \{s,s'\},c')$ of \textsc{Directed Latency with Target} can be transformed to a solution of the instance $(V \cup \{s\},c')$ of \textsc{Directed Latency} of the same cost by deleting the edge to $\target$.
		Thus, by computing an $\alpha$-approximate solution to $(V \cup \{s,s'\},c')$, we obtain a solution $P=\{v_0,v_1,\dots,v_n\}$ to $(V \cup \{s\},c')$ of  total latency at most $\alpha \opt'$.
		
		Evaluated with respect to the original metric $c$, $P$ has total latency 
		\begin{align*}
			\sum_{i=1}^n (n+1-i) c(v_{i-1},v_i) &\le \sum_{i=1}^n (n+1-i) \frac{\gamma \epsilon}{n^5} c'(v_{i-1},v_i) \\
			&\le \frac{\gamma \epsilon}{n^5} \alpha \opt' \\ 
			&< \alpha \opt + \alpha \frac{\gamma \epsilon}{n^3} \\
			&\le \alpha (1+\eps) \opt \enspace, 
		\end{align*}
		where we used in the first inequality that $P$ does not use an edge for which we capped its length.
	\end{proof}
	
	\section{Separation Oracle for the \ref{eq:time-indexed-LP}}\label{appendixB}
	The separation oracle is very similar to the one presented in Appendix A in \cite{quasi-poly} but because of different notation, we adopt the proof here in our setting. Let $(x,z)$ be a proposed solution for the \ref{eq:time-indexed-LP}. Apart from the set-constraints for the polytope $Q(G_T)$, there are polynomially many constraints which can be checked directly. So it remains to show that we can separate the following constraints: 
	\begin{equation}\label{eq: cut-constraints}
		\displaystyle z(\delta^-(S))  \ge  \displaystyle \sum_{t' \in [t]} x_{v,t'}  \qquad \forall v \in V, t \in [T], S \subseteq V[G_T]: 
		(s,0) \notin S,
		\{v\} \times [t]  \subseteq S \enspace.
	\end{equation}
	
	Note that for each $v \in V$ and each $t \in [T]$, we can check whether \eqref{eq: cut-constraints} holds for each $S \subseteq V[G_T]$ with $(s,0) \notin S$ and $\{v\} \times [t]\subseteq S$ by computing a minimum $s$-$(\{v\} \times [t])$-cut in $G_T$ with arc capacities $z$.  
	
	Then, a separation oracle for \eqref{eq: cut-constraints} follows from computing minimum $s$-$(\{v\} \times [t])$-cuts in $G_T$ for each $v\in V$ and each $t\in [T]$ and check whether it is at least $\sum_{t' \in [t]} x_{v,t'}$ or not.
	
\end{document}